\documentclass[10pt,a4paper]{article}%
\usepackage{amssymb,amsmath, amsfonts,bm}
\usepackage{graphicx,graphics}
\usepackage{mathtools}
\usepackage{bbold}
\usepackage[english]{babel}
\usepackage[utf8]{inputenc}
\usepackage{epsfig,url} 
\usepackage{bbm,theorem}
\usepackage{a4wide}
\usepackage{color}
\usepackage{enumerate}
\usepackage{calrsfs}
\usepackage[bookmarks=true]{hyperref}
\usepackage{bookmark}
\usepackage{xcolor}

\usepackage[affil-it]{authblk}
\usepackage{amssymb}
\usepackage{verbatim} 
\usepackage{graphicx}%
\setcounter{MaxMatrixCols}{30}
\providecommand{\U}[1]{\protect\rule{.1in}{.1in}}
\DeclareMathAlphabet{\pazocal}{OMS}{zplm}{m}{n}
\newtheorem{theorem}{Theorem}[section]
\newtheorem{definition}[theorem]{Definition}

\newtheorem{lemma}[theorem]{Lemma}
\newtheorem{proposition}[theorem]{Proposition}
\newtheorem{assumption}[theorem]{Assumption}

\newtheorem{notation}[theorem]{Notation}

{\theorembodyfont{\upshape}
\newtheorem{remark}[theorem]{Remark}

\numberwithin{equation}{section}
\numberwithin{theorem}{section}

\providecommand{\Cyl}{\operatorname{Cyl}}
\newcommand{\qed}{\hfill$\Box$}

\newenvironment{proof}{\begin{trivlist}\item[]{\em Proof:}\/}{\qed\end{trivlist}}

\renewcommand{\geq}{\geqslant}
\renewcommand{\leq}{\leqslant}

\newcommand{\ep}{\varepsilon}

\newcommand{\R}{\mathbb{R}}
\newcommand{\N}{\mathbb{N}}

\newcommand{\PP}{\mathbb{P}}
\newcommand{\E}{\mathbb{E}}

\newcommand{\T}{\mathcal{T}}
\newcommand{\LL}{\mathcal{L}}

\newcommand{\ud}{\;\mathrm{d}}
\newcommand{\uud}{\mathrm{d}}
\newcommand{\mbf}{\bm}

\newcommand{\Reals}{{\mathbb R}}

\newcommand{\M}{\mathcal{M}}

\providecommand{\X}{\mathcal{X}}

\providecommand{\eeR}{{\eta_{\varepsilon}}_{|R} }

\providecommand{\eeRn}{{\eta_{\varepsilon, n}}_{|R}}

\newcommand{\eps}{\varepsilon}


\newcommand{\de}{\mathop{}\!\mathrm{d}}
\newcommand{\A}{\mathbb{A}}

\usepackage{mathrsfs}

\usepackage{latexsym}

\begin{document}

\title{Kinetic description of a Rayleigh Gas with annihilation}

\date{\today}

\author[1]{Alessia Nota \thanks{\texttt{nota@iam.uni-bonn.de }}}
\author[2]{Raphael Winter \thanks{\texttt{raphaelwinter@iam.uni-bonn.de}}} 
\author[3]{Bertrand Lods \thanks{\texttt{bertrand.lods@unito.it }}}

\affil[1,2]{\em Institute for Applied Mathematics, University of Bonn, \em Endenicher Allee 60, D-53115 Bonn, Germany}  
\affil[3] {\em Dipartimento ESOMAS, Universit\`a degli Studi di Torino \& Collegio Carlo Alberto,\em Corso Unione Sovietica, 218/bis, 10134 Torino, Italy}

\maketitle

\begin{abstract}
In this paper, we consider the dynamics of a tagged point particle in a gas of moving hard-spheres that are non-interacting among each other. This model is known as the ideal Rayleigh gas.     We add to this model the possibility of annihilation (ideal Rayleigh gas with annihilation), requiring that each obstacle is either annihilating or elastic, which determines whether the tagged particle is elastically reflected or removed from the system. We provide a rigorous derivation of a linear Boltzmann equation with annihilation from this particle model in the Boltzmann-Grad limit. Moreover, we give explicit estimates for the error in the kinetic limit by estimating the contributions of the configurations which prevent the Markovianity. 
The estimates show that the system can be approximated by the Boltzmann equation on an algebraically long time scale in the scaling parameter.
\end{abstract}

\tableofcontents

\section{Introduction}

We consider an extension of the Boltzmann-Rayleigh gas particle model that includes the annihilation of particles. From this model we derive the Boltzmann-Rayleigh equation in the Boltzmann-Grad limit, and give explicit error estimates. For annihilation rate $\alpha\in [0,1]$, the equation for the probability density of the limit process reads
\begin{align} \label{eq:linBoltzannihil}
\partial_t f + v \cdot\nabla_x f + \mu \alpha \lambda(v) f = \mu (1-\alpha) Q(\mathcal{M}_{\beta},f) .
\end{align}  
Here $\mu>0$ is a fixed constant  proportional to the inverse of the mean free path and $\beta>0$ is the inverse temperature of the background. The linear Boltzmann operator $Q(\mathcal{M}_{\beta},f)$ and the collision frequency $\lambda$ are defined as 
\begin{align}
Q(\mathcal{M}_{\beta},f)(v)&= \int_{\R^3} \int_{\mathbb{S}^{2} } [(v-v_1)\cdot \hat n]_+  \left( \mathcal{M}_{\beta}(v_1^{\prime}) f(v^{\prime})-\mathcal{M}_{\beta}(v_1)f(v) \right)\ud{v_1} \ud{\hat n}	, \\
\lambda(v)&=\lambda_0 \int_{\mathbb R^3}dv_1 \mathcal{M}_{\beta}(v_1)|v-v_1|, \quad \lambda_0=\int_{\mathbb{S}^{2}}\!\de \hat n \, [\hat n\cdot\hat v]_+, \quad \hat v\in \mathbb{S}^{2} \label{eq:lambdadef},
\end{align}		
with $v,v'$, $v_1,v_1'$ given by the elastic hard-sphere collision rule. Notice that $\lambda_0$ is independent of $\hat v$ due to the rotational invariance of the integral. 

The Boltzmann equation with annihilation \eqref{eq:linBoltzannihil} comes from the corresponding nonlinear Boltzmann equation with ballistic annihilation introduced in \cite{BKL, CDT}. This equation and related models have been introduced to extend the theory of statistical mechanics to systems without conservation of particles, e.g. systems with chemical reactions.  The first rigorous mathematical results on the nonlinear Boltzmann equation with positive annihilation probability $\alpha>0$ prove existence and uniqueness of self-similar solutions as well as their stability (cf. \cite{ABL,jde,jde2}). Recently, a first rigorous derivation of the spatially homogeneous nonlinear Boltzmann equation with annihilation from a Kac-like particle system has been obtained in \cite{LNP}, under suitable assumption on the collision kernel.

In this paper, we provide a rigorous derivation of the Boltzmann-Rayleigh equation \eqref{eq:linBoltzannihil} in an inhomogeneous setting for the case of hard-sphere interactions. 
We start from a particle model that consists of a single tagged point particle interacting with a heat bath of hard-sphere obstacles of radius $\eps>0$ in $\Reals^3$. The tagged particle and the obstacles are assumed to have equal mass, and the initial distribution of obstacles is a grand-canonical distribution with a rescaled  average of  $\mu_\ep=\mu\eps^{-2}$ particles per unit volume.
Every obstacle is initially randomly chosen to be either annihilating with probability $\alpha$, or elastic with probability $1-\alpha$.
The obstacles do not interact among each other, but undergo hard-sphere collisions with the tagged particle if they are elastic. If the tagged particle collides with an annihilating obstacle, it passes to the annihilated state $\A$, and is thus removed from the system.

It is important to remark that we recover the classical Boltzmann-Rayleigh equation in the case $\alpha=0$. This equation has been obtained from Rayleigh gas models in \cite{BLLS} and \cite{S2}, and recently under more general assumptions in \cite{MST}. In the present paper, we present a derivation (cf. Section \ref{s:conv}) of equation \eqref{eq:linBoltzannihil}, and we further provide explicit estimates for the set of pathological configurations, that prevent the process from being Markovian for $\eps>0$ (cf. Section \ref{s:rec}). The estimate is valid for all values $\alpha \in [0,1]$, and is the most delicate part of the analysis. The bound for these pathologies is algebraically vanishing in $\eps \rightarrow 0$, and thus opens the possibility of deriving the diffusion equation which gives the hydrodynamic description on a longer time scale. In Section~\ref{s:hydro} we propose some possible long-time asymptotics in the case $\alpha>0$. 

We observe that the technique we use to provide the qualitative validation of \eqref{eq:linBoltzannihil} (cf. Section~\ref{s:conv}) builds on the constructive approach originally proposed by Gallavotti to obtain the linear Boltzmann equation from a Lorentz gas of hard-spheres (cf. \cite{G}). This approach has later been extended to more general interaction potentials and different physical situations (see for instance \cite{BNP, BNPP, DP, MN, N, NSV}). 
We emphasize that in contrast to the Lorentz gas, the energy of a tagged particle in the Boltzmann-Rayleigh gas is not constant. In particular, the expected time to the next collision, i.e. the inverse of the collision frequency $\lambda(v(t))$, changes during the evolution. This prevents the explicit resumming of the collision expansion as used in the Gallavotti argument. We circumvent this issue by comparing the error of the approximation to the moments of the number of collisions of the limit process (cf. Section \ref{ss:2mom}), similar to the arguments in \cite{BGS} and \cite{NV}.

The Boltzmann-Rayleigh equation and its hydrodynamic limit have also been derived from a slightly different model in \cite{BGS} on the unit torus. In this model, the background obstacles interact among each other, which allows perturbations due to the tagged particle to perpetuate in the heat bath. Due to the rapidly growing collision tree, the analysis is closer to the derivation of the nonlinear Boltzmann equation and yields logarithmic error estimates in the scaling parameter $\eps\rightarrow 0$.

\section{The model and main results}

We consider the dynamics of a tagged point particle in a gas of hard-sphere obstacles. We assume that the obstacles have radius $\eps>0$ and form a gas at thermal equilibrium distributed in the whole space $\Reals^3$.
The dynamics of the system is given by interactions of the tagged particle with the moving obstacles, the obstacles do not interact among each other. This model is referred to as the ideal Rayleigh gas (cf. \cite{S2}). In the following we extend the model by the possibility of annihilation. Each obstacle is either annihilating or elastic, which determines whether the tagged particle is elastically reflected or removed from the system upon collision with it.

The velocity space is then $\R^3$, and the reference measure is the Maxwellian distribution with temperature $\beta^{-1}>0$, whose density with respect to the Lebesgue measure is denoted by $\mathcal{M}_{\beta}(v)$:
$$\mathcal{M}_{\beta}(v)=\left(\frac{\beta}{2\pi}\right)^{\frac{3}{2}}\exp\left(-\frac{\beta}{2}|v|^{2}\right), \qquad v \in \R^{3}.$$ 
We denote by $\mbf{c}=(c_1,\dots,c_q, \ldots)$ the centers $c_{i} \in \R^{3}$ of the countable set of scatters and by $\mbf{w}=(w_1,\dots,w_q, \dots)$ their corresponding velocities. To include annihilating obstacles, we further consider a sequence $\mbf{z}=(z_1,\ldots,z_i,\ldots)$, $z_i\in \{0,1\}$. Here $z_i=0$ denotes that the obstacle $(c_i,w_i,z_i)$ is annihilating, similarly it is elastic if $z_i=1$.

\medskip

\begin{notation} \label{not:bernoulli}
	For $\alpha\in [0,1]$ we denote by $b_\alpha$ the Bernoulli measure on $\{0,1\}$ which is given by $b_\alpha(\{0\})=\alpha$, $b_\alpha(\{1\})=(1-\alpha)$. 
\end{notation}	

\begin{definition} \label{def:annihilated}
Denote by $\X$ the three dimensional phase space with an annihilated state $\A$, i.e. $\X = (\Reals^3 \times \Reals^3) \cup \{\A\}$. 
	We extend the metric of the phase space $\R^3\times \R^3$ by $|\A-(x,v)|=\infty$ for all $(x,v)\in\R^3\times \R^3$. We equip $\X$ with the measure $\ud{m}=\mathbf{1}_{\R^{3}\times\R^{3}}\ud{x}\ud{v}+\delta_{\A}$ where $\ud{x}\ud{v}$ denotes the Lebesgue measure on $\Reals^3\times \Reals^3$ and $\delta_{\A}$ the Dirac measure in $\A$.
\end{definition}

We will now specify the probability distribution yielding the initial datum of the system.

The obstacles will be given by a Point Process $\eta$ on $X:=\Reals^3\times \Reals^3 \times \{0,1\}$, and the tagged particle be distributed according to some smooth probability density $f_0$ on the phase space $\X$.

On the function $f_0(x,v)$ 
defining the initial probability distribution for the tagged particle,  we assume the following: 
\begin{assumption}\label{ass:f0} 
	Let $f_0:\X\to \R^{+}$ be such  that 
	$f_0\in L^1_{x}(\R^3; L^{\infty}_{\mathcal{M}_{\beta}^{-1}}(\R^3))$ and $f_0(\A)=0$.
	More precisely:
	\begin{align*}
	\int_{\R^{3}} \|f_0(x,\cdot) \mathcal{M}^{-1}_\beta(\cdot)\|_{L^{\infty}(\Reals^3)} \ud{x} <\infty.
	\end{align*}
\end{assumption}

In the sequel, random variables are defined with respect to a given fixed probability space $\left({\Omega},\mathcal{F},\PP\right)$.  
\begin{definition}[Boltzmann-Rayleigh gas] \label{def:BRgas}
Let $X:=\Reals^3\times \Reals^3 \times \{0,1\}$. 
	Fix $\eps, \mu,\beta>0$, and $\alpha\in [0,1]$. Let $f_0\in L^1(\X)$ be as in Assumption~\ref{ass:f0}. We define the intensity measure 
	$$\nu(\uud{c}\uud{w}\uud{z})=\mu\uud{c}\,\M_{\beta}(w)\uud{w}\,b_\alpha(\uud{z}).$$
	Consider a \emph{random variable} $(x_0,v_0)$ on $\Reals^3 \times \Reals^3$ and a
	\emph{point process} $\eta=(c_j,w_j,z_j)_{j\in J}$ on $X$, where $J$ is some countable index set. Let
	the joint distribution of $\eta$ and $(x_0,v_0)$ be given by:
	\begin{align} \label{eq:jointDistr}
		\PP(\eta(A_1)=\ell_1,\ldots \eta(A_n)=\ell_n, (x_0,v_0) \in D)= \int_{D} f_0(x,v) \prod_{k=1}^n\frac{\nu(A_k^\eps)^{\ell_k} }{\ell_k!} e^{-\nu(A_k^\eps)}\ud{x} \ud{v}.
	\end{align}
	Here $D\subset \Reals^3 \times \Reals^3$ is a Borel set, $A_k \subset \Reals^3 \times \Reals^3 \times \{0,1\}$, $1\leq k \leq n$ are mutually disjoint Borel sets that are bounded in the first variable.  Moreover, $\eta(A_k)$ denotes the number of obstacles in $A_k$, and for fixed $x\in \Reals^3$ we use the notation
	$A_k^\eps:= A_k\setminus (B_\eps(x)\times \Reals^3 \times \{0,1\})$.
\end{definition}
In the above definition and in all the sequel, unless otherwise specified, we identify a point process $\eta$ on $X$ with its realization $\eta=(c_j,w_j,z_j)_{j\in J}$, with $J$  countable.

We assume that each obstacle of the configuration has radius $\eps>0$.  We now define the evolution of the system.
\begin{definition}[Dynamics of the Rayleigh gas with annihilation]\label{def:particleflow}
	We define the evolution of the particle system as follows. Let $(x,v)\in \X$ the initial position and velocity of the tagged particle, and $\eta=(c_j,w_j,z_j)_{j\in J}$ be  a given initial obstacle configuration. Then $(x(t),v(t))\in \X$ for $t>0$ is determined by the following deterministic dynamics:
	\begin{enumerate}
	\item if $(x(t^{\ast}),v(t^{\ast}))=\A $ for some $0\leq t^{\ast}< t$ then $(x(t),v(t))=\A$.
	\item  if $(x(t^{\ast}),v(t^{\ast}))\in\R^3\times \R^3 $ for all $0\leq t^{\ast}< t$  and $|x(t_-)-c_j(t_-)|>\eps$ for all $j \in J$, then we solve the following system of ODEs:
	\begin{align}
		&\dot{x}(t)=v, \;\;\dot{v}(t)=0, \quad   \quad \text{(tagged particle)}\label{eq:pflow}\\& 
		\dot{c}_j(t)=w_j,\;\; \dot{w}_j (t)=0, \quad   \quad \text{(background)}\label{eq:obflow}
	\end{align} 
 subject to the boundary conditions:
	\begin{enumerate}[(i)]
		\item If $|x(t_-)-c_j(t_-)|= \eps$, $z_j=1$ for some $j\in J$: 
			\begin{align}
				v(t)=v(t_-)- \frac{1}{\eps^2} \big(v(t_-)-w_j(t_-)\big)\cdot \big(x(t_-)-c_j(t_-)\big) (x(t_-)-c_j(t_-)) , \\
				w_j(t)=v(t_-)- \frac{1}{\eps^2}\big(v(t_-)-w_j(t_-)\big)\cdot \big(x(t_-)-c_j(t_-)\big) (x(t_-)-c_j(t_-)) ,	
			\end{align}
		\item If $|x(t_-)-c_j(t_-)|= \eps$, $z_j=0$ for some $j\in J$:
			\begin{align}
				(x(t),v(t))=\A.
			\end{align}	
	\end{enumerate}
	\end{enumerate}
	For the induced mapping of the phase space of the tagged particle we write:
	\begin{align}
		T_{\eta}^{t}(x,v) = (x(t),v(t)).
	\end{align}
	Here, as above, $\eta=(c_j,w_j,z_j)_{j\in J}$ denotes the initial obstacle configuration.
\end{definition}
\begin{remark}
	The definition above is well-posed almost surely. Indeed, it is possible to prove that, with probability one, the tagged particle only collides with one obstacle at any given time, and only finitely many obstacles in finite time.  This has been shown in the literature (see for instance \cite{A}). 
	We remark that the well-posedness, as proved in \cite{A}, is not affected by the introduction of the annihilated state $\A$ since the tagged particle stays in the annihilated state for all times after arriving there, and no further interactions can occur.  

\end{remark}

As we stated above, we assume that each obstacle of the configuration has radius $\eps>0$. We then rescale the intensity $\mu$ of the centers of the obstacles as follows.

\begin{definition}[Boltzmann-Grad Scaling]
	We consider the following scaling limit of the intensity $\mu_{\eps}$:
	\begin{align}
		\mu_{\varepsilon}&=\varepsilon^{-2}\mu. \label{scaling:B}	
	\end{align} 
We denote by $\nu _{\ep}$  the rescaled intensity measure $\nu$, with $\mu$ replaced by $\mu_\ep$. $\E_{\ep}$ will be the expectation with respect to the  Poisson point process $\eta_{\ep}$ with intensity measure $\nu_{\eps}$. 
\end{definition}
\bigskip
 

 We are interested in the evolution of the one-particle correlation function
$f_{\eps}(t,x,v)$ of the tagged particle induced by the dynamics given in Definition \ref{def:particleflow}. 
More precisely, let $T^t_{\eta_{\eps}}(x,v)$ be the phase space location of the tagged particle at time $t$ that starts at $(x_0,v_0)=(x,v)$ for a given obstacle configuration $\eta_{\eps}$.  
Then we define $f_{\ep}$ in the weak formulation as 
\begin{equation}\begin{aligned}\label{def:fep}
\int_{\X}\phi f_{\ep}(t) \ud{m}&= \E[\phi(T^t_{\eta_{\eps}}(x_0,v_0))], \quad \phi \in C_{b}(\X). 
\end{aligned}\end{equation}
In what follows we will use the notation 
$$\displaystyle \langle \phi,f_{\ep}(t)\rangle= \int_{\X} \phi f_\eps(t) \ud{m}.$$ 
We emphasize that the argument in \cite{A} shows that the resulting  function $f_{\eps}\in C([0,T],L^1(\X))$ for any $T>0$. 

\medskip

The main result of the present paper can be summarized in the following theorem.
\begin{theorem}\label{th:MAIN1}
Let 
$f_0$  satisfy Assumption \ref{ass:f0} and let $f_{\ep}$ be defined as in \eqref{def:fep}.
For all $T>0$ we have
\begin{equation*}
\lim_{\ep\to 0}\left\|f_{\ep}-f\right\|_{C([0,T]; L^{1}(\X))}=0
\end{equation*}
where $f$ is the solution to the linear Boltzmann equation with annihilation 
\begin{equation}\label{eq:linearized}
\left\{\begin{array}{ll}\vspace{2mm}
	\partial_t f + v \cdot \nabla_x f + \mu \alpha \lambda(v) f = \mu (1-\alpha) Q(\mathcal{M}_{\beta},f) &\\
	f(x,v,0)=f_0(x,v).&
\end{array}\right.
\end{equation}	
Here the linear Boltzmann operator is defined as 
\begin{equation}\label{eq:opQ}
Q(\mathcal{M}_{\beta},f)(v)= \int_{\R^3} \int_{\mathbb{S}^{2} } [(v-v_1)\cdot \hat n]_+  \left( \mathcal{M}_{\beta}(v_1^{\prime}) f(v^{\prime})-\mathcal{M}_{\beta}(v_1)f(v) \right)\ud{v_1} \ud{\hat n}		
\end{equation}					 
and $(v,v_1)$ is  a  pair  of  incoming velocities   and  $(v^{\prime},v_1^{\prime})$ is  the
corresponding pair of outgoing velocities defined by the elastic reflection with transferred momentum in direction $\hat n$:
\begin{equation}\label{scattering}
v'= v-\hat n\cdot(v-v_1)\,\hat n,\qquad v_1'= v_1+\hat n\cdot(v-v_1)\,\hat n.
\end{equation} 

Moreover, let be $\kappa_0>0$, $r\in (0,\frac 1 3)$ and $\kappa\in (0,\kappa_0)$. Then there exists a constant $C>0$ such that
\begin{align*}
	\|f_{\eps}(t)-f(t)\|_{L^1(\X)} \leq C \left[{\ep}^{\kappa} t + \eps^{r-\kappa_0} ( t+t^2)+\eps^{r-2\kappa_0}( t^\frac 5 2+ t^\frac72)]\right], \qquad \forall t \geq0.
\end{align*}
\end{theorem}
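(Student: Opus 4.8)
The plan is to compare two series expansions — a Dyson--Duhamel series for the limit density $f$ and a collision-history expansion for $f_\ep$ generated by the Poisson obstacle process — and to bound their difference by the expected number of non-Markovian (``pathological'') collision configurations, in the spirit of the constructive Gallavotti argument. First I would rewrite \eqref{eq:linearized} in mild form. Since $v$ is constant along free flight, the two loss contributions combine into $\mu\alpha\lambda f+\mu(1-\alpha)\lambda f=\mu\lambda f$, which is \emph{independent of} $\alpha$; introducing the free-transport-with-absorption semigroup $(\Sigma_t g)(x,v)=e^{-\mu\lambda(v)t}g(x-vt,v)$ and the gain operator $Q^+$ (the positive part of $Q$ in \eqref{eq:opQ}), the equation becomes $f(t)=\Sigma_t f_0+\mu(1-\alpha)\int_0^t \Sigma_{t-s}Q^+f(s)\,\de s$. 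Iterating yields the Dyson series
\[
f(t)=\sum_{n\geq 0}[\mu(1-\alpha)]^n\int_{0<s_1<\cdots<s_n<t}\Sigma_{t-s_n}Q^+\Sigma_{s_n-s_{n-1}}\cdots Q^+\Sigma_{s_1}f_0\,\de s_1\cdots \de s_n,
\]
each term being a backward tree with $n$ elastic scatterings against the Maxwellian background. On the particle side I would expand $\langle\phi,f_\ep(t)\rangle=\E_\ep[\phi(T^t_{\eta_\ep}(x_0,v_0))]$ over the number of obstacles actually met along the backward trajectory. Because $\eta_\ep$ is Poisson, each history factorizes into collision cross-sections at the $n$ elastic hits, a weight $(1-\alpha)$ per elastic obstacle, and an empty-tube exponential $\exp(-\nu_\ep(\text{tube}))$ encoding that no other obstacle is met; the measure of the tube swept by the radius-$\ep$ ball, integrated against $\M_\beta$, is exactly $\mu\int_0^t\lambda(v(s))\,\de s$, which reproduces the absorption factor of $\Sigma$.

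Next I would isolate in the $f_\ep$-expansion the \emph{Markovian} contribution obtained by (i) discarding all histories containing a recollision or an overlap of collision cylinders, and (ii) replacing the finite-radius collision geometry by its $\ep\to0$ idealisation, in which the angular deflection is governed by \eqref{scattering} and successive collision points coincide with the limiting free-flight positions. Each such replacement produces an $O(\ep)$ displacement per collision; summed over a trajectory with $N$ collisions this gives an error $\lesssim \ep N$, which after truncating velocities at scale $\ep^{-\kappa_0}$ and taking expectation against the limit process contributes the $\ep^{\kappa}t$ term. The surviving Markovian terms then coincide term by term with the Dyson series above; dominated convergence, using the $L^\infty_{\M_\beta^{-1}}$ control of Assumption~\ref{ass:f0} and the finiteness of the collision moments, yields the qualitative statement $\|f_\ep-f\|_{C([0,T];L^1(\X))}\to 0$ (cf. Section~\ref{s:conv}).

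The crux, and the most delicate step, is to show that the discarded histories are negligible (cf. Section~\ref{s:rec}). Unlike the Lorentz gas, the tagged particle's speed is not conserved, so $\lambda(v(t))$ varies along the path and the empty-tube exponential cannot be resummed into a closed factor; following \cite{BGS,NV} I would instead control the pathologies by the moments of the number of collisions $N(t)$ of the limit Boltzmann process (cf. Section~\ref{ss:2mom}). A recollision, or an overlap of two collision cylinders, forces two scattering obstacles into a geometrically small set: after the change of variables to impact parameters and relative velocities, the conditional probability of such an event given the collision times and velocities is $O(\ep^{r})$ for $r\in(0,\tfrac13)$, once a set of grazing and small-relative-speed configurations of comparably small measure has been excluded. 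Summing the corresponding indicators over pairs (respectively triples) of collisions reduces the bound to estimating $\E[N(t)^k]$ for small $k$; these moments are finite because $\lambda(v)\lesssim 1+|v|$ and the Maxwellian background keeps $|v(t)|$ controlled in law, and they grow polynomially in $t$. A Cauchy--Schwarz step against $\sqrt{\E[N(t)^k]}$ produces the half-integer powers $t^{5/2},t^{7/2}$, while the velocity truncation at $\ep^{-\kappa_0}$ accounts for the compensating factors $\ep^{-\kappa_0}$ and $\ep^{-2\kappa_0}$.

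Finally I would collect the three sources of error — the $O(\ep)$ geometric corrections to the Markovian part, giving $\ep^{\kappa}t$; the recollision and overlap pairs, giving $\ep^{r-\kappa_0}(t+t^2)$; and the higher-order, second-moment pathologies, giving $\ep^{r-2\kappa_0}(t^{5/2}+t^{7/2})$ — choosing $\kappa\in(0,\kappa_0)$ and $r\in(0,\tfrac13)$ so that the velocity and collision-number truncation losses are absorbed. This yields the stated estimate for every $t\geq 0$ and, in particular, convergence on $[0,T]$. I expect the pathology estimate of the third paragraph to be the principal difficulty: bounding the measure of recollision configurations uniformly in the unbounded velocities, and tying it to the collision moments, is precisely where the non-conservation of energy and the absence of a resummation must be overcome.
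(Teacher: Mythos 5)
Your overall architecture --- Duhamel series for $f$, Poisson collision-history expansion for $f_\eps$, removal of recollision/interference configurations, and control of the discarded set via moments of the collision number of the limit Boltzmann process after a velocity truncation at scale $\eps^{-\kappa_0}$ --- is the paper's strategy. However, one essential idea is missing and two of your error attributions would not produce the stated bound. The missing idea: discarding pathological histories and bounding the empty-tube exponential from below only yields a \emph{one-sided} comparison $0\leq \bar f_\eps\leq f_\eps$, where $\bar f_\eps$ is the Dyson series with the pathology indicators inserted; this says nothing about the discarded part $f_\eps-\bar f_\eps\geq 0$, and ``dominated convergence'' does not close that gap. The paper closes it by a mass-conservation argument on the extended phase space $\X=(\R^3\times\R^3)\cup\{\A\}$: both $f_\eps$ and $f$ carry total mass $1$ on $\X$ (the annihilated state is exactly what restores conservation, since annihilation destroys it on $\R^3\times\R^3$), whence $\|f_\eps-\bar f_\eps\|_{L^1(\X)}\leq\|\bar f_\eps-f\|_{L^1(\X)}=\psi_\eps(t)$. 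This step, together with the bookkeeping of $f_\eps(t,\A)$, $\bar f_\eps(t,\A)$ and $f(t,\A)$ that it requires, is absent from your sketch and cannot be skipped.

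On the error terms: the contribution ${\eps}^{\kappa}t$ does not come from an $O(\eps)$ displacement per collision. Once the obstacle variables are changed to collision times, outgoing velocities and obstacle velocities, the microscopic trajectory of the point tagged particle on non-pathological configurations \emph{is} the piecewise-linear limit flow $\gamma^{-s}$, so there is no geometric correction of that type to estimate; ${\eps}^{\kappa}t$ is the cost of the velocity truncation itself, i.e.\ the weight of histories with some $\mathcal{M}_\beta(v_i)<\eps^{\kappa_0}$, and bounding it needs a separate argument (splitting into $n\geq 2$ collisions on a window of length $\eps^{\kappa_0}$ plus Gaussian tails for $n\leq 1$, then extension to arbitrary $t$ by stationarity of $\mathcal{M}_\beta$ under the Boltzmann flow). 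Likewise, the Cauchy--Schwarz step you propose for the half-integer powers costs a square root of the pathology probability and of a higher moment, giving $\eps^{r/2}$ and integer powers of $t$ rather than $\eps^{r}$ and $t^{5/2},t^{7/2}$. The paper instead proves the pointwise bound $\Psi_{n,\eps}[\bm{\mathbf{1}}^{i,j}]\leq C\eps^{\gamma'-2\kappa_0}(\mu t)^{3/2}\Psi_{n,\eps}[1]$ for each pair $(i,j)$ --- the factor $(\mu t)^{3/2}$ arising from a lower bound on the unconstrained $v_j$-integral restricted to $|v_j|\leq(\mu t)^{-1/2}$ --- and then sums $\sum_n n^2\Psi_{n,\eps}[1]=\bm{n}_2(t)\leq C(\mu t+(\mu t)^2)$, which is where $t^{5/2}$ and $t^{7/2}$ actually originate.
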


\begin{remark}
	We remark that the theorem above yields the classical Rayleigh-Boltzmann equation in the case of non-annihilating particles, i.e. $\alpha=0$.
\end{remark}
\begin{remark}
	The function $\lambda$ (cf. \eqref{eq:lambdadef}) is bounded away from $0$ and it has linear growth for 
	large  $|v|$.
	Therefore  $|v|$ and $|v|^2/\lambda(v)$ have all the exponential moments with respect to $\M_{\beta}(v)\de v$. For further details we refer to 
	\cite{BBB}.
\end{remark}

From now on, we write for simplicity $\LL f=Q(\M_{\beta},f)$ and, as well-known, $\LL$ can be split into a gain and loss term
$$\LL f=\LL^{+}f-\LL^{-}f$$
with 
$$\LL^{-}f(v)=\lambda(v)f(v)$$
where $\lambda(v)$  is the scattering rate given by \eqref{eq:lambdadef}. Moreover, the gain part $\LL^{+}$ can be written as an integral operator thanks to the \emph{Carleman representation} (see for instance \cite{AL,carleman}) resulting in 
\begin{equation}\label{eq:kern}
\LL^{+}f(v)=\int_{\R^{3}}k(v',v)f(v')\ud{v'}, \qquad \text{ with } \quad k(v',v)=\frac{1}{|v-v'|}\int_{E(v',v)}\mathcal{M}_{\beta}(u)\ud{u}.
\end{equation}
Here $E(v,v')$ is the hyperplane orthogonal to $v'-v$ passing through $v'$ (and~$\ud{u}$ is the Lebesgue measure over that hyperplane):
\begin{align}\label{eq:E}
E(v,v'):=\{u \in \Reals^3: (v'-v)\cdot (v'-u)=0\}. 
\end{align} 

\section{From the particle system to the Boltzmann equation with annihilation: convergence of the one-particle correlation function}\label{s:conv}

In this section we present our strategy to prove Theorem \ref{th:MAIN1}. In Section \ref{ss:series}, we first provide a semi-explicit series form of the solution $f(t,x,v)$ to the Boltzmann equation with annihilation  \eqref{eq:linearized}. This allows us to use a direct approach to obtain the Markovian approximation. Indeed, in Section \ref{ss:strategy} we will compare the microscopic solution $f_{\eps}(t,x,v)$ given by \eqref{def:fep} to the series solution $f(t,x,v)$ of \eqref{eq:linearized} and identify the error term which we will estimate in Section~\ref{s:rec}.

\subsection{Series solution of the Boltzmann equation with annihilation }\label{ss:series}

We assume that the initial probability distribution of the  test particle $f_0$ satisfies Assumption \ref{ass:f0}. Consider the linear Boltzmann equation with annihilation rate $\alpha>0$:
\begin{equation}\label{eq:Bolg}
(\partial_{t}+v\cdot \nabla_{x})f(t,x,v)=\mu \left[(1-\alpha)\LL- \alpha \LL^{-} \right]f(t,x,v)=\mu \left[(1-\alpha)\LL^{+}-\LL^{-} \right]f(t,x,v). 
\end{equation} 
Introduce the $C_{0}$-semigroup $(S(t))_{t\geq0}$ with generator given by the transport and absorption operators $(v\cdot\nabla_x+\mu \lambda(v)I)$,
$$S(t)f(x,v):=f(x-vt,v)e^{-\mu \lambda(v)t}, \qquad (x,v) \in \R^{3}\times\R^{3}, \quad t\geq0.$$
We infer the following formula for a solution to equation \eqref{eq:linearized}:
\begin{equation*}
f(t,x,v)=S(t)f_0(x,v)+\mu (1-\alpha)\int_{0}^{t}S(t-s)(\LL^+ f)(x,v,s)\ud{s}.
\end{equation*} 
By iteration we find the following series expansion 
\begin{align} \label{eq:Duhamel1}
f(t)= S(t)f_0+\sum_{n=1}^\infty (1-\alpha)^n \mu ^n \int_{0}^{t}\ud{t_1}\ldots \int_{0}^{t_{n-1}}\ud{t_n}
\, S(t-t_1) \LL^+ \ldots \LL^+ S(t_{n})f_0.
\end{align}
Now, using the Carleman representation \eqref{eq:kern}, we get 
\begin{multline}\label{eq:srsolB_1}
f(t,x,v)= \sum_{n=0}^\infty (1-\alpha)^n \mu^n \int_{0}^{t}\ud{t_1} \int_{\Reals^3}k(v_{1},v)\ud{v_{1}}\ldots\int_{0}^{t_{n-1}}\ud{t_{n}}\int_{\Reals^{3}}k(v_{n},v_{n-1})\ud{v_{n}}
\\
\ldots f_{0}(x-\sum_{i=0}^{n-1} v_i(t_i-t_{i+1})-t_n v_n,v_n)e^{-\mu\sum_{i=0}^{n-1}\lambda(v_i)(t_{i}-t_{i+1})} e^{-\mu\lambda(v_n) t_n}.
\end{multline}
We observe that 
in \eqref{eq:srsolB_1} we used the following ordering for collision times: 
$$0\leq t_n\leq \dots \leq t_1\leq t=: t_0$$ and 
we denoted as $v_{i}$ the outgoing velocity (along the backward trajectory), i.e. the velocity of the tagged particle after suffering a collision at time $t_i$. 
The following Lemma will be useful throughout the analysis. 
\begin{lemma}\label{lem:e}
Let $E(v,v')$ be given by \eqref{eq:E}. 
	For all $v^{\prime}\neq v\in \Reals^3$ we have:
	\begin{align} \label{eq:IntEst}
	\int_{E(v,v')} \mathcal{M}_{\beta}(u) \ud{u} = \mathcal{M}_{\beta}\left(v'\cdot \frac{v'-v}{|v'-v|}\right) =:e(v,v') . 
	\end{align}
\end{lemma}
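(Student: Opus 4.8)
The claim is that integrating the Maxwellian $\mathcal{M}_\beta$ over the hyperplane $E(v,v')$ (orthogonal to $v'-v$, passing through $v'$) yields $\mathcal{M}_\beta\big(v' \cdot \tfrac{v'-v}{|v'-v|}\big)$. The key observation is that $E(v,v')$ is a $2$-dimensional affine plane, and $\mathcal{M}_\beta$ is a Gaussian that factorizes under any orthogonal decomposition of $\R^3$. So the whole computation reduces to choosing coordinates aligned with the normal direction of the plane.

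Let me verify the claim is even dimensionally consistent first. The left side integrates a density over a $2$-plane (so it has units of density times area, i.e. the $3$-dimensional density integrated against $2$-dimensional Lebesgue measure), and the right side is $\mathcal{M}_\beta$ evaluated at a scalar—wait, $\mathcal{M}_\beta$ takes a vector argument $v\in\R^3$, but here the argument $v'\cdot\frac{v'-v}{|v'-v|}$ is a scalar. So I should interpret the right-hand side as the $3$-dimensional Gaussian formula applied to a scalar, meaning the $1$-dimensional-style Gaussian... Let me just compute and see what the natural normalization produces.

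The plan is as follows. Set $\hat n = \frac{v'-v}{|v'-v|}$, the unit normal to the hyperplane $E(v,v')$. Decompose $u = u_\parallel \hat n + u_\perp$ where $u_\perp \perp \hat n$. The constraint defining $E(v,v')$ is $(v'-v)\cdot(v'-u)=0$, i.e. $\hat n \cdot u = \hat n \cdot v'$, so every $u\in E(v,v')$ has fixed parallel component $u_\parallel = v'\cdot\hat n$, and $u_\perp$ ranges over the full orthogonal $2$-plane. Now factorize the Gaussian:

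\begin{align*}
\mathcal{M}_\beta(u) = \left(\frac{\beta}{2\pi}\right)^{3/2} \exp\left(-\frac{\beta}{2}|u_\parallel|^2\right)\exp\left(-\frac{\beta}{2}|u_\perp|^2\right).
\end{align*}

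Integrating over the $2$-plane (Lebesgue measure $\ud u$ on $E(v,v')$ equals $\ud u_\perp$) only touches the perpendicular factor. Using the standard Gaussian integral $\int_{\R^2} \left(\frac{\beta}{2\pi}\right)\exp\left(-\frac{\beta}{2}|u_\perp|^2\right)\ud u_\perp = 1$, I am left with

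\begin{align*}
\int_{E(v,v')}\mathcal{M}_\beta(u)\ud u = \left(\frac{\beta}{2\pi}\right)^{1/2}\exp\left(-\frac{\beta}{2}(v'\cdot\hat n)^2\right),
\end{align*}

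which is exactly the Gaussian expression $\mathcal{M}_\beta$ produces when fed the scalar $v'\cdot\hat n = v'\cdot\frac{v'-v}{|v'-v|}$, under the convention that the one remaining factor $\left(\frac{\beta}{2\pi}\right)^{1/2}$ matches the notation $e(v,v')$ on the right. This confirms the identity. There is no real obstacle here: the only point requiring care is the bookkeeping of the normalization constants—tracking that two of the three $\left(\frac{\beta}{2\pi}\right)^{1/2}$ factors are consumed by the $2$-dimensional Gaussian integral over $u_\perp$, leaving the single factor that makes the right-hand side a genuine one-dimensional Gaussian evaluated at $v'\cdot\hat n$. Everything else is the elementary fact that rotating coordinates to align one axis with $\hat n$ leaves the isotropic Gaussian invariant and splits it as a product.
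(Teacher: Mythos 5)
Your proof is correct and follows essentially the same route as the paper's: choose an orthonormal basis containing the normal $\hat n = (v'-v)/|v'-v|$, observe that on $E(v,v')$ the parallel component is fixed at $v'\cdot\hat n$, and explicitly integrate the Gaussian over the two perpendicular directions. Your remark about normalization is well taken --- the identity only holds if the right-hand side is read as the one-dimensional Maxwellian, with prefactor $\left(\tfrac{\beta}{2\pi}\right)^{1/2}$, evaluated at the scalar $v'\cdot\hat n$, which is the convention the paper's terse proof implicitly adopts.
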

\begin{proof}
	By definition, the set $E(v,v')$ is given by the equation
	\begin{align*}
	\frac{v'-v}{|v'-v|} \cdot (v'-u) = 0.
	\end{align*}
	Hence $E(v,v')$ is a hyperplane with distance $v'\cdot \frac{v'-v}{|v'-v|}$ from the origin.
	Chosing an orthonormal basis including the normal vector to $E(v,v')$, explicit integration yields the desired identity.
\end{proof}
With this notation and using \eqref{eq:srsolB_1}  we get 
\begin{equation}
\begin{aligned}\label{eq:srsolB_3}
&f(t,x,v)=  \sum_{n=0}^\infty (1-\alpha)^n \mu^n \int_{0}^{t}\ud{t_1}\dots \int_{0}^{t_{n-1}}\ud{t_n}\int_{\Reals^3 } \ud{v_1}\ldots \int_{\Reals^3 } \ud{v_n} \\ \ldots
&  f_0(x-\sum_{i=0}^{n-1} v_i(t_{i}-t_{i+1})-t_n v_n,v_n)\left(\prod_{i=1}^n \frac{e(v_i,v_{i-1})}{|v_i-v_{i-1}|}  \right)e^{-\mu\sum_{i=0}^{n-1}\lambda(v_i)(t_{i}-t_{i+1})}e^{-\mu\lambda(v_n)t_n}.
\end{aligned} 
\end{equation}
We will use the formulas \eqref{eq:srsolB_1} and \eqref{eq:srsolB_3} to compare
the distribution $f_{\ep}$ of the tagged particle, in the scaling limit, to the solution $f$ of the limiting equation.

\subsection{Strategy}\label{ss:strategy}

We first give an equivalent definition of the function $f_{\eps}$. In order to do this, we observe that the space $ C_{b}(\X)$ is the set of functions of the form
\begin{equation*}
\phi(x,v)=\tilde{\phi}(x,v)\quad \text{on}\quad \R^3\times \R^3 \quad \text{and} \quad \phi(\A)= \sigma \in \R ,
\end{equation*} 
with $\tilde{\phi}\in C_{b}(\R^3\times\R^3)$. Therefore, it is straightforward to prove the following general result.
\begin{proposition}\label{prop:deffeps}
Let $g$ be a function in $L^1(\X)$. Then $g$ is uniquely determined by
\begin{equation}\label{def:g}
\langle \phi,g\rangle=\int_{\X} \phi \cdot g \ud{m}\quad \text{for}\quad \phi\in C_b(\X), \;\phi(\A)=0,\quad \text{and} \quad g(\A)=\sigma\in \Reals.
\end{equation}
\end{proposition}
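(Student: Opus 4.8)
The plan is to exploit the disjoint-union structure of the measure space $(\X,\ud m)$. Since $\X=(\R^3\times\R^3)\cup\{\A\}$ is equipped with $\ud m=\mathbf{1}_{\R^3\times\R^3}\ud x\ud v+\delta_{\A}$, the atom $\{\A\}$ carries unit mass and is measure-theoretically disjoint from the Euclidean part. Consequently every $g\in L^1(\X)$ is in bijection with the pair $\big(g|_{\R^3\times\R^3},\,g(\A)\big)\in L^1(\R^3\times\R^3;\ud x\ud v)\times\R$, and indeed $\|g\|_{L^1(\X)}=\|g|_{\R^3\times\R^3}\|_{L^1(\R^3\times\R^3)}+|g(\A)|$. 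To show that the data in \eqref{def:g} determine $g$ uniquely it therefore suffices to recover these two components separately. The component $g(\A)$ is handed to us directly by the prescription $g(\A)=\sigma$, so only the Euclidean part requires an argument.

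For the Euclidean part I would invoke the characterization of $C_b(\X)$ recalled immediately before the statement: any $\phi\in C_b(\X)$ with $\phi(\A)=0$ is of the form $\phi=\tilde\phi$ on $\R^3\times\R^3$ with $\tilde\phi\in C_b(\R^3\times\R^3)$. For such $\phi$ the atom contributes nothing to the pairing, so that
\[
\langle\phi,g\rangle=\int_{\R^3\times\R^3}\tilde\phi(x,v)\,g(x,v)\,\ud x\ud v,
\]
and as $\phi$ ranges over all elements of $C_b(\X)$ vanishing at $\A$, the restriction $\tilde\phi$ ranges over all of $C_b(\R^3\times\R^3)$, in particular over the compactly supported continuous functions $C_c(\R^3\times\R^3)$.

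Uniqueness of $g|_{\R^3\times\R^3}$ then follows from the standard duality fact that an $L^1$ function is determined almost everywhere by its integrals against $C_c$ test functions. By linearity it is enough to check that if $g\in L^1(\X)$ satisfies $\langle\phi,g\rangle=0$ for all $\phi\in C_b(\X)$ with $\phi(\A)=0$, together with $g(\A)=0$, then $g=0$ in $L^1(\X)$. The first hypothesis gives $\int_{\R^3\times\R^3}\tilde\phi\,g\,\ud x\ud v=0$ for every $\tilde\phi\in C_c(\R^3\times\R^3)$, whence $g|_{\R^3\times\R^3}=0$ a.e.\ by the fundamental lemma of the calculus of variations; the second gives $g(\A)=0$. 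Combining the two components yields $g=0$ in $L^1(\X)$, as required.

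The statement is essentially a bookkeeping exercise in measure theory, so I do not expect a genuine analytic obstacle. The only points demanding care are the identification $L^1(\X)\cong L^1(\R^3\times\R^3)\oplus\R$ forced by the unit atom at $\A$, and the observation that testing against the smaller class of $\phi\in C_b(\X)$ with $\phi(\A)=0$ still exhausts enough test functions (namely all of $C_c(\R^3\times\R^3)$ after restriction) to pin down the Euclidean part; once these are noted, the duality lemma closes the argument.
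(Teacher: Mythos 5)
Your proof is correct and is exactly the argument the paper has in mind: the paper states the proposition without proof, calling it ``straightforward'' after recalling the characterization of $C_b(\X)$, and your decomposition $L^1(\X)\cong L^1(\R^3\times\R^3)\oplus\R$ together with the standard $C_c$-duality lemma is the natural way to fill in that gap. No issues.
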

Notice that Proposition \ref{prop:deffeps} implies that $f_{\eps}$, defined as in \eqref{def:fep}, can be equivalently defined by
\begin{align}\label{def:fep1}
&\langle \phi,f_{\ep}(t)\rangle
			=\int_{\X} \E\left[ \phi(T^{t}_{\eta_{\eps}}(x,v))|(x_0,v_0)=(x,v)\right]f_0(x,v) \ud{m}, \quad \phi\in C_b(\X), \;\phi(\A)=0\\&
			f_{\eps}(t,A) = 1-\|f_{\eps}(t)\|_{L^{1}(\R^3\times \R^3)}.\label{def:fepA}
\end{align}

In what follows we will identify an obstacle configuration $( c_{q}, w_{q},z_q)_{q\in J}$ with its
empirical measure. Further, for a given initial position $(x,v)$ of the tagged particle, we introduce the spatial localization $\eeR$ to the ball $B_R(x)$:
\begin{align}
	\eta_{\eps}(\uud{y}\uud{v}\uud{z}) &= \sum_{j\in J} \delta(y-c_j) \delta (v-w_j) \delta(z-\alpha_j) \\
	\eeR (\uud{y}\uud{v}\uud{z})			&= \sum_{j\in J: c_j\in B_R(x)} \delta(y-c_j) \delta (v-w_j) \delta(z-\alpha_j).
\end{align}
The function   $f_{\ep}(x,v,t)$ given as in \eqref{def:fep} then satisfies, for all $\phi \in C_b(\X) $: 
\begin{align*}
\langle \phi, f_\eps(t)\rangle=	\int_{\X} \phi f_\eps (t)\ud{m} = \lim_{R\rightarrow \infty} \int_{\X} \E[(\phi \circ T^t_{\eeR})|(x_0,v_0)=(x,v)  ] f_0(x,v)\ud{m},
	\end{align*}
	with the measure $m$ introduced in Definition \ref{def:annihilated}. 
	We observe that the identity above holds as a consequence of the Borel-Cantelli Lemma applied to the family $\{(\eta_{\ep};x_0,v_0)\,:\, x(s)\notin B_R(x),\, s\in [0,t] \}$.
Due to Proposition \ref{prop:deffeps} in what follows it is sufficient to consider as test functions the functions $\phi \in C_b(\X) $ such that $\phi(\A)=0$.
Courtesy of the localization, and $f_0(\A)=0$, we can write the expectation using \eqref{eq:jointDistr}:
 \begin{align*}
\langle \phi, f_\eps(t)\rangle=&\lim_{R\to\infty}\int_{\R^3\times \R^3} \ud{x} \ud{v}\, e^{-\mu_{\ep}|B^{\ep}_R(x)|}\nonumber \\& \quad \quad \sum_{q\geq 0}\frac{\mu_{\ep}^{q}}{q!}\int_{(B^{\ep}_R(x))^q}\uud{\mbf{c}}_{q}\int_{(\R^3)^q} \uud{\mbf{w}}_q \mathcal{M}_{\beta}(\mbf{w}_q)\int_{\{0,1\}^q}\uud{\mbf{z}}_qb_{\alpha}(\mbf{z}_q) (\phi \circ T^t_{\eeR}) f_0 .
\end{align*} 
Here we use the notation $B^{\ep}_R:=B_R(x)\setminus B_{\ep}(x)$ where $B_R(x)$ and $B_{\ep}(x)$ denote the balls centered in $x$ with radius $R$ and $\ep$ respectively.  This ensures the condition that the tagged particle does not start from an obstacle.

We distinguish the obstacles of the configuration $({c}_{q},{w}_{q},z_q)_{q}$ which, up to the time $t$, influence the motion, called internal obstacles, and the external ones. More precisely,
an obstacle  $(c_i,w_i,z_i)$ is internal if
\begin{equation*}
\inf_{0\leq s\leq t}|x(s)-(c_i+w_i s)|\leq\ep,
\end{equation*}
and is called external otherwise, i.e. if
\begin{equation*}
\inf_{0\leq s\leq t}|x(s)-(c_i+w_i s)|>\ep.
\end{equation*} 

For simplicity we write  $\mbf{c}_{q}=(c_1,\dots, c_q)$, $\mbf{w}_{q}=(w_1,\dots ,w_q)$ and $\mbf{z}_q=(z_1,\dots, z_q)$. Then, we decompose a given configuration $(\mbf{c}_{q},\mbf{w}_{q}, \mbf{z}_q)=(\mbf{b}_{n},\mbf{u}_{n},\mbf{z}_n)\cup (\mbf{\tilde{b}}_p,\mbf{\tilde{w}}_p,\tilde{\mbf{z}}_p )$ where $(\mbf{b}_{n},\mbf{u}_{n},\mbf{z}_n)$ is the set of all the \textit{internal} obstacles and $(\mbf{\tilde{b}}_p,\mbf{\tilde{w}}_p,\tilde{\mbf{z}}_p )$ the set of all the \textit{external} ones. From now on $\mathbf{1}_{\{\cdot\}}$ will denote the characteristic function of the set or event $\{\cdot\}$. 
Since the $q$ obstacles can be grouped into $n$ internal and $p=q-n$ external obstacles in
${q \choose n}$ different ways, we can use this decomposition to write:
\begin{align*}
\langle \phi, f_\eps(t)\rangle&=  \lim_{R\to\infty}\int_{\R^3\times\R^3 }\ud{x} \ud{v} \, e^{-\mu_{\ep}|B^{\ep}_R|}\\&
\quad \sum_{n\geq 0}\frac{\mu_{\ep}^{n}}{n!}\int_{(B^{\ep}_R)^n}\uud{\mbf{b}}_{n}\int_{(\R^3)^n} \uud{\mbf{u}}_{n} \mathcal{M}_{\beta}(\mbf{u}_n)\int_{\{0,1\}^n}d\mbf{z}_n\,b_{\alpha}(\mbf{z}_n) \mathbf{1}_{\{{\eeRn} \; \text{internal}\}}\\&
\quad  \sum_{p\geq 0}\frac{\mu_{\ep}^{p}}{p!}\int_{(B^{\ep}_R(x))^p}\uud{\mbf{\tilde{b}}}_{p}\int_{(\R^3)^p} d\mbf{\tilde{w}}_p \mathcal{M}_{\beta}(\mbf{\tilde{w}}_p)\int_{\{0,1\}^p}\uud{\mbf{z}}_p\,b_{\alpha}(\mbf{z}_p)  \mathbf{1}_{\{ {\eta_{\varepsilon, p}}_{|R} \; \text{external}\}}\phi \circ T^t_{{\eeR}}\,f_0 .
\end{align*}
We notice that $T^{t}_{{\eeR}}(x,v)=T^{t}_{{\eeRn}}(x,v)$. Further notice that $T^{t}_{\eeRn}(x,v)=\A$ if there is an annihilating internal obstacle. Moreover, since we require $\phi\in C_b(\X)$ with $\phi(\A)=0$ 
we can also perform the integrals in  $z_i=1$ for all internal obstacles. Then, integrating over the external obstacles and taking the limit $R\to \infty$, yields:
\begin{align}\label{eq:fep_intermediate}
 \langle \phi,f_{\ep}(t)\rangle=& \sum_{n\geq 0}\int_{\R^3\times\R^3}\ud{x} \ud{v} \, (1-\alpha)^{n}\frac{\mu_{\ep}^{n}}{n!}\int_{(\Reals^3_\eps)^n}\uud{\mbf{b}}_{n}\,\int_{(\R^3)^n} \uud{\mbf{u}}_n \mathcal{M}_{\beta}(\mbf{u}_n)\, \mathbf{1}_{(\{ ( \mbf{b}_n,\mbf{u}_n )_{n}\;\text{internal}\})} \nonumber \\&
\quad
\exp\left({-\mu_{\ep} \int_{\R^3} du \mathcal{M}_{\beta}(u) \, |\T_{t}(x,v,u;{\eta}_{\ep,n})|}\right)  \, 
 f_0(x,v) \phi(T^{t}_{{\eta_{\ep,n}}}(x,v)) ,
\end{align}
where $\T_{t}(x,v,w;{\eta}_{\ep,n})$ is the tube defined as follows (see Figure \ref{fig1}). 
\begin{figure}[th]
\label{fig1}\centering
\includegraphics [scale=0.3]{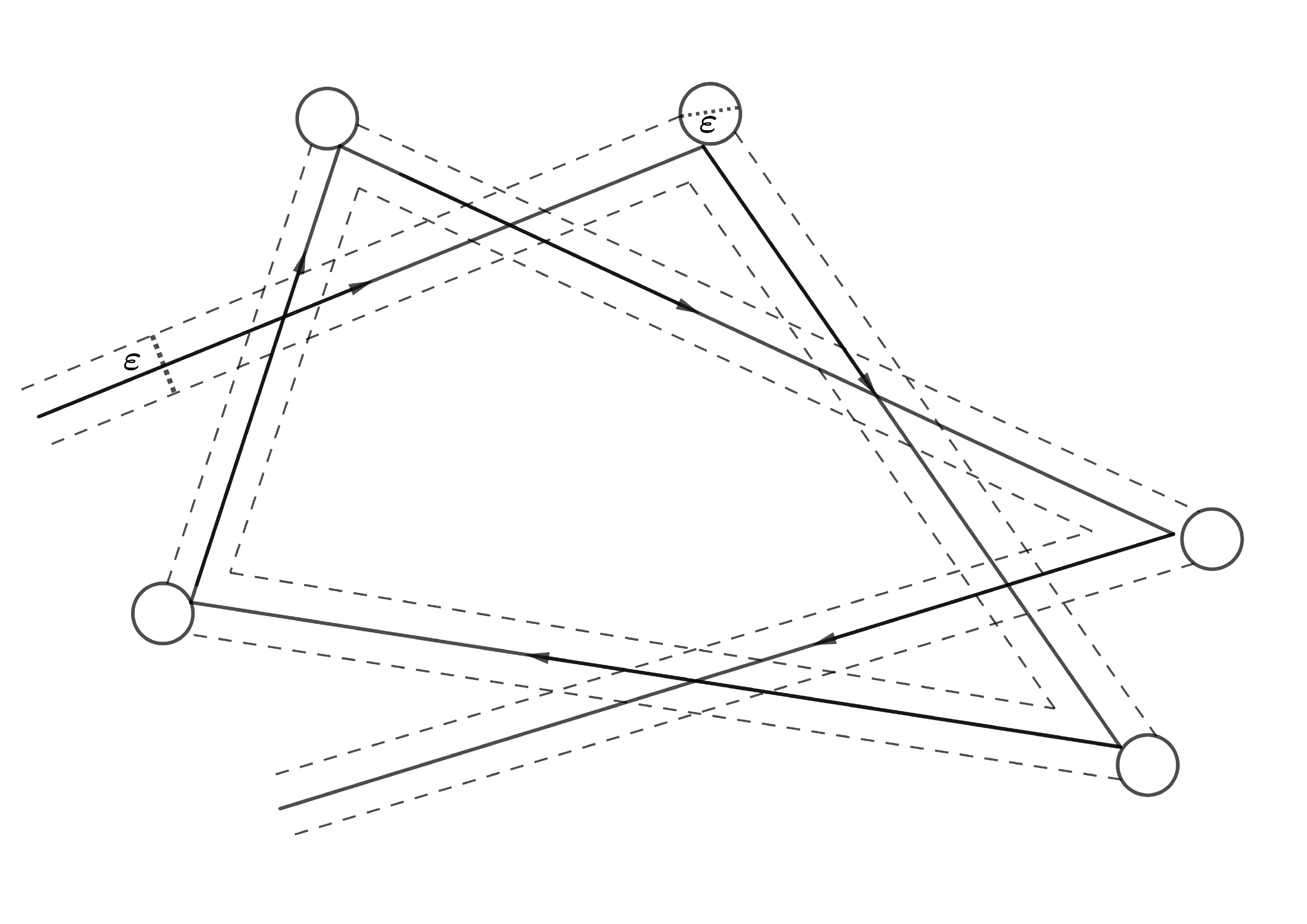}\caption{The dynamical tube $\T_{t}(x,v,w; {\eta}_{\ep,n})$ spanned by the tagged particle} %
\end{figure}

	For a positive time $t>0$,  $x,v\in \Reals^3$, a tuple of obstacles $\eta_{\ep,n}$ and a velocity $w\in \Reals^3$ we define
	\begin{align} \label{eq:colltube}
		\mathcal{T}_{t}(x,v,w;\eta_{\ep,n})= \{y\in \Reals^3: \exists s\in [0,t] \text{ s.t. } |y+w s-x(s)|\leq \eps\}.
	\end{align}

\medskip	
	In order to prove Theorem \ref{th:MAIN1}, namely the convergence of $f_{\eps}$ towards the solution $f$ of the Rayleigh Boltzmann equation with annihilation \eqref{eq:linearized} we need the following auxiliary Definitions and Propositions.
	
Note that, according to a classical argument introduced in \cite{G} (see also  \cite{BNP}, \cite{BNPP}, \cite{DP}, \cite{MN}, \cite{N}), we remove from ${f}_{\ep}$ all the bad events, namely those untypical with respect to the Markov process described by $f$. Then we will show they have low probability. Following this idea, 
 we build now a function $\tilde{f}_{\eps}$ from \eqref{eq:fep_intermediate} that forgets all trajectories that involve multiple collisions with the same obstacle:
\begin{definition}\label{def:tildefep}
Let $f_0$ satisfy Assumption \ref{ass:f0}. For any $\eps> 0$ and any $t \geq 0$, we will say that a given configuration of \emph{internal} obstacles $(\mbf{b}_n,\mbf{u}_n)_{n}$ belongs to $\Delta_{n}(t)$ if each obstacle is hit exactly \emph{once} by the trajectory $s \mapsto T^{s}_{{\eta_{\ep,n}}}(x,v)$ in the time interval $[0,t]$. We define the function $\tilde{f}_{\eps}(t)\in L^1(\X)$ as follows
\begin{align}\label{eq:tildefep1}
 \langle \phi,\tilde{f}_{\ep}(t)\rangle=& \sum_{n\geq 0}(1-\alpha)^{n}\frac{\mu_{\ep}^{n}}{n!}\int_{\R^3\times\R^3}\ud{x} \ud{v} \,\int_{(\Reals^3_\eps)^n}\uud{\mbf{b}}_{n}\,\int_{(\R^3)^n} \uud{\mbf{u}}_n \mathcal{M}_{\beta}(\mbf{u}_n)\, \mathbf{1}_{\{(\mbf{b}_n,\mbf{u}_n )_{n} \in \bm{\Delta}_{n}(t)\}}\nonumber \\&
\quad
\exp\left({-\mu_{\ep} \int_{\R^3} du \mathcal{M}_{\beta}(u) \, |\T_{t}(x,v,u;{\eta}_{\ep,n})|}\right)  \, 
 f_0(x,v) \phi(T^{t}_{{\eta_{\ep,n}}}(x,v)) ,
\end{align}
for all $\phi \in C_b(\X) $ with $\phi(\A)=0$ and set
\begin{align}\label{eq:tildefep1A}
\tilde{f}_{\ep}(t,\A)=& \sum_{n\geq 0}\left(1-(1-\alpha)^{n}\right)\int_{\R^3\times\R^3}\ud{x} \ud{v} \, \int_{(\Reals^3_\eps)^n}\uud{\mbf{b}}_{n}\,\int_{(\R^3)^n} \uud{\mbf{u}}_n \mathcal{M}_{\beta}(\mbf{u}_n)\, \mathbf{1}_{\{(\mbf{b}_n,\mbf{u}_n )_{n} \in \bm{\Delta}_{n}(t)\}}\nonumber \\&
\quad
\exp\left({-\mu_{\ep} \int_{\R^3} du \mathcal{M}_{\beta}(u) \, |\T_{t}(x,v,u;{\eta}_{\ep,n})|}\right)  \, 
 f_0(x,v) .
\end{align}
\end{definition}
	
\begin{remark}
We notice that $\bm{\Delta}_{n}(t)=\bm{\Delta}_{n}(t;x,v)$ depends on the initial position and velocity of the tagged particle and it satisfies 
$\mathbf{1}_{\{(\mbf{b}_n,\mbf{u}_n )_{n} \in \bm{\Delta}_{n}(t)\}}=\mathbf{1}_{(\{ ( \mbf{b}_n,\mbf{u}_n )_{n}\;\text{internal}\})}\mathbf{1}_{\{\rm{precisely}\; n \;\rm{ collisions}\}}$.
\end{remark}


\begin{definition}\label{def:limitflow}
Let $(x,v)\in \R^3\times \R^3$ and let $\{t_i\}_{i=0}^n$ be a sequence of positive collision times such that $0\leq t_{n}\leq t_{n-1}\leq \dots \leq t_2\leq t_1\leq t_0=t$ and $v=v_0\to v_1\to \dots \to v_n$ a sequence of velocities. We define the backward limit flow 
as follows: 
for $s\in [t_{i+1},t_{i})$ we set 
\begin{equation}
\gamma^{-s}(x,v)=\left(x-\sum_{k=0}^{i-1} v_k(t_{k}-t_{k+1})-(t_i-(t-s))v_i,v_i\right),
\end{equation}
and, in particular, for $s=t$ we have
\begin{equation}
\gamma^{-t}(x,v)=\left(x-\sum_{k=0}^{n-1} v_k(t_{k}-t_{k+1})-t_n v_n,v_n\right).
\end{equation}
Moreover, we will write, for any $s\in [0,t]$, $\gamma^{-s}(x,v)=(x(-s),v(-s))$ to denote the first and second component respectively. 
\end{definition}

\begin{definition}\label{def:pathologies}
We define the set of pathological configurations as follows: 
\begin{itemize}
\item[i)] \textbf{Recollisions:}\\ There exists $b_i(-s)$ such that for some $s\in(t-t_{j},t-t_{j+1})$, $j>i$, $x(-s)\in \partial B(b_i(-s),\ep)$.
\item[ii)] \textbf{Interferences:}\\ There exists $b_j(-s)$ such that $x(-s)\in B(b_j(-s),\ep)$ for some $s\in(t-t_{i},t-t_{i+1})$, $j>i$.
\end{itemize}
Here we used the ordering of collision times $0\leq t_{n}\leq t_{n-1}\leq \dots \leq t_2\leq t_1\leq t_0=t.$ 
Further we denoted by $x(-s)$ the first component of the backward limit flow given as in Definition \eqref{def:limitflow} and by $b_i(-s)$ the position of the obstacles along the backward in time evolution.
\end{definition}

\begin{proposition}\label{prop:step1strategy}
Let $f_0$ satisfy Assumption \ref{ass:f0} and let  $ f_{\eps}$  be defined as in \eqref{def:fep} (or, equivalently, as in \eqref{def:fep1}-\eqref{def:fepA}).   
The function $\tilde{f}_{\eps}$ defined in \eqref{eq:tildefep1}-\eqref{eq:tildefep1A} satisfies 
$$0 \leq \bar{f}_{\ep}(t) \leq \tilde{f}_{\ep}(t) \leq f_{\ep}(t)$$
where 
$\bar{f}_{\ep}\in L^1(\X)$ is defined as 
\begin{equation}
\begin{aligned} \label{formula6}
\bar{f}_{\ep}(t,x,v)&=\sum_{n=0}^\infty  (1-\alpha)^{n}\mu^n \int_{\Omega'_n}\uud{t}_1 \uud{v}_1\uud{\xi'_{1}}\ldots \uud{t}_n \uud{v}_n\uud{\xi'_{n}} \left(\prod_{i=1}^n \frac{\mathcal{M}_{\beta}(\xi'_i)}{|v_i-v_{i-1}|}\right)  \\
&\quad e^{- \mu \sum_{i=0}^{n-1} \lambda(v_i)(t_{i}-t_{i-1})}\,e^{- \mu\lambda(v_n)t_n}  \,(1-\bm{\mathbf{1}}_{\rm{rec}})(1-\bm{\mathbf{1}}_{\rm{int}}) f_0(\gamma^{-t}(x,v))
\end{aligned}
\end{equation}
and  
\begin{align}\label{formula6A}
	\bar{f}_{\ep}(t,\A)&=\int_{\R^3\times \R^3} \ud{x} \ud{v}\, \sum_{n=0}^\infty  \left(1-(1-\alpha)^{n}\right)\mu^n \int_{\Omega'_n}\uud{t}_1 \uud{v}_1\uud{\xi'_{1}}\ldots \uud{t}_n \uud{v}_n\uud{\xi'_{n}}  \left( \prod_{i=1}^n \frac{\M_{\beta}(\xi'_i)}{|v_i-v_{i-1}|}\right)\nonumber \\&
	\quad e^{- \mu \sum_{i=0}^{n-1} \lambda(v_i)(t_{i}-t_{i+1})}\,e^{- \mu\lambda(v_n)t_n}  \,(1-\bm{\mathbf{1}}_{\rm{rec}})(1-\bm{\mathbf{1}}_{\rm{int}}) f_0(\gamma^{-t}(x,v)) .
\end{align}	
Here we used the notation
 $$ \int_{0}^{t_0}\uud{t}_1\int \uud{v}_1\int_{E(v_1,v)} \uud{\xi'_1}  \dots  \int_{0}^{t_{n-1}}\uud{t}_n \int \uud{v}_n \int_{E(v_n,v_{n-1})} \uud{\xi'_n}= \int_{\Omega'_n}\uud{t}_1 \uud{v}_1\uud{\xi'_{1}}\ldots \uud{t}_n \uud{v}_n\uud{\xi'_{n}} ,$$
 with $E(v_i,v_{i-1})$ as in  \eqref{eq:E}. Moreover, $\bm{\mathbf{1}}_{\rm{rec}}$ and ${\bm{\mathbf{1}}}_{\rm{int}}$ denote the indicator functions of the sets of pathological configurations i) and ii) introduced in Definition \ref{def:pathologies} and $\gamma^{-t}(x,v)$ is the backward limit flow defined in Definition \ref{def:limitflow}. 
\end{proposition}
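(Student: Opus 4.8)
The plan is to prove the chain in the order $f_{\ep}(t)\ge \tilde f_{\ep}(t)\ge\bar f_{\ep}(t)\ge 0$, treating the part on $\R^3\times\R^3$ and the value at $\A$ separately, as Proposition~\ref{prop:deffeps} permits. Nonnegativity of $\bar f_{\ep}$ is immediate from \eqref{formula6}: the Maxwellian weights $\M_{\beta}(\xi'_i)$, the Carleman factors $|v_i-v_{i-1}|^{-1}$, the damping exponentials and $f_0$ are all nonnegative, while $(1-\mathbf{1}_{\rm rec})(1-\mathbf{1}_{\rm int})\in\{0,1\}$.

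For $\tilde f_{\ep}\le f_{\ep}$ on $\R^3\times\R^3$ I would test both against an arbitrary $\phi\in C_b(\X)$ with $\phi\ge 0$, $\phi(\A)=0$, and compare \eqref{eq:fep_intermediate} with \eqref{eq:tildefep1} term by term in $n$. The two carry identical integrands — the same obstacle measure, the same survival exponential built from the tube $\T_t$, the same trajectory $T^t_{\eta_{\ep,n}}$ — and differ only in the indicator. By the Remark following Definition~\ref{def:tildefep}, $\mathbf{1}_{\{(\mbf{b}_n,\mbf{u}_n)\in\bm{\Delta}_n(t)\}}=\mathbf{1}_{\{\text{internal}\}}\mathbf{1}_{\{n\text{ collisions}\}}\le \mathbf{1}_{\{\text{internal}\}}$, so the difference of integrands is nonnegative and $\langle\phi,\tilde f_{\ep}\rangle\le\langle\phi,f_{\ep}\rangle$. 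For the value at $\A$ I would invoke conservation of mass: decomposing the full dynamics according to the number $n$ of obstacles met along the elastic trajectory gives $\|f_{\ep}(t)\|_{L^1(\R^3\times\R^3)}=\sum_n(1-\alpha)^n M_n^{\rm full}$ with $\sum_n M_n^{\rm full}=\|f_0\|_{L^1}$, whence $f_{\ep}(t,\A)=1-\|f_{\ep}(t)\|_{L^1(\R^3\times\R^3)}=\sum_n\bigl(1-(1-\alpha)^n\bigr)M_n^{\rm full}$. Formula \eqref{eq:tildefep1A} is the same sum with $M_n^{\rm full}$ replaced by its restriction to $\bm{\Delta}_n(t)$; since $0\le\mathbf{1}_{\bm{\Delta}_n}\le\mathbf{1}_{\{\text{internal}\}}$ and $1-(1-\alpha)^n\ge0$, this yields $\tilde f_{\ep}(t,\A)\le f_{\ep}(t,\A)$.

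The substantial step is $\bar f_{\ep}\le\tilde f_{\ep}$. Starting from \eqref{eq:tildefep1}, I would perform the Gallavotti change of variables on the single-collision configurations $\bm{\Delta}_n(t)$, passing from the scatterer data $(\mbf{b}_n,\mbf{u}_n)$ to the collision parameters (collision times $t_i$, impact vectors, scatterer velocities) and then, through the scattering rule \eqref{scattering} and Lemma~\ref{lem:e}, to the variables $(t_i,v_i,\xi'_i)$ of \eqref{formula6}; the $n$ Jacobian factors turn $\mu_{\ep}^n\eps^{2n}$ into $\mu^n$ and reproduce exactly the kernel $\prod_i \M_{\beta}(\xi'_i)/|v_i-v_{i-1}|$. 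Under this map the event $\bm{\Delta}_n(t)$ is carried onto the collision histories free of recollisions and interferences, so the constraint becomes $(1-\mathbf{1}_{\rm rec})(1-\mathbf{1}_{\rm int})$ and the microscopic backward trajectory agrees with the limit flow $\gamma^{-t}$ of Definition~\ref{def:limitflow}. What remains is to compare the true survival factor $\exp(-\mu_{\ep}\int\M_{\beta}(u)\,|\T_t|\,\ud u)$ with the idealized Markovian factor $\exp(-\mu\sum_i\lambda(v_i)(t_i-t_{i+1}))$. I would use that the velocity-averaged volume of the cylinder swept on the $i$-th leg equals exactly $\eps^2\lambda(v_i)(t_i-t_{i+1})$ (this is where $\lambda_0=\int_{\mathbb{S}^2}[\hat n\cdot\hat v]_+\,\ud\hat n$ enters), and then bound the dynamical tube $\T_t$ above by the union of these per-leg cylinders, so that the true survival factor dominates the idealized one and $\bar f_{\ep}\le\tilde f_{\ep}$ follows; the annihilation components $\bar f_{\ep}(t,\A)\le\tilde f_{\ep}(t,\A)$ are obtained from the same change of variables and tube bound applied to \eqref{formula6A} and \eqref{eq:tildefep1A}. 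This geometric comparison of the collision tube with the sum of independent idealized cylinders — and in particular the control of the end-cap and corner corrections at scale $\eps$, which survive precisely because recollisions and interferences have been removed — is the main obstacle; everything else reduces to bookkeeping of manifestly nonnegative quantities.
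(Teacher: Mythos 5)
Your proposal is correct and follows essentially the same route as the paper: the comparison $\tilde f_{\ep}\leq f_{\ep}$ via $\mathbf{1}_{\bm{\Delta}_n(t)}\leq\mathbf{1}_{\{\text{internal}\}}$, the Gallavotti change of variables with Jacobian $\eps^{2n}/\prod_i|\tilde v_i-\tilde v_{i-1}|$ converting $\mu_\eps^n$ into $\mu^n$ and producing the Carleman kernel, and the upper bound $|\T_t|\leq\sum_i\pi\eps^2|v_i-u|(\tau_{i+1}-\tau_i)$ which makes the true survival factor dominate the Markovian one, are exactly the paper's three ingredients (the paper merely applies the tube bound before, rather than after, the change of variables, introducing an intermediate function $h_\eps$ and then showing $h_\eps=\bar f_\eps$). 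The only cosmetic difference is your mass-conservation argument for the values at $\A$, which is consistent with the paper's formulas \eqref{def:fepA} and \eqref{eq:tildefep1A}.
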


\begin{proof} 
 	For an obstacle configuration $\eta_{\ep,n}$ and an initial phase-space position $(x,v)$, assume that 
 	the tagged particle experiences collisions at times $\tau_i$, $i=1,\ldots,n$ between
 	the times $\tau_0=0$ and 
	$t$ and has velocity $v_i$ in the time interval $[\tau_i,\tau_{i+1}]$. Notice that here we are looking at the forward evolution, hence the different notation used to denote collision times ($\tau_i$ instead of $t_i$) along the forward flow. 
	
	Then we can estimate the size of the collision tube defined as in \eqref{eq:colltube} by:
	\begin{align}\label{eq:boundTube}
		|\mathcal{T}_t (x,v,u;\eta_{\ep,n})| \leq \sum_{i=0}^{n} \eps^2 \pi |v_i-u| (\tau_{i+1}-\tau_i).
	\end{align}
	We now consider the argument of the exponential in \eqref{eq:tildefep1} 
	and, using \eqref{eq:boundTube} as well as the definition of the function $\lambda(v)$ introduced in \eqref{eq:lambdadef}, we obtain:
	\begin{align}\label{eq:tubeintupper}
		\mu_\eps \int \M_{\beta}(u) |\mathcal{T}_t (x,v,u;\eta_{\ep,n})|\ud{u} \leq \mu_\eps \eps^2\sum_{i=0}^n\lambda(v_i) (\tau_{i+1}-\tau_i)=\mu \sum_{i=0}^n\lambda(v_i) (\tau_{i+1}-\tau_i).
	\end{align}
We now introduce, for any $t \geq0$, $h_\eps$
as follows:
\begin{align}\label{eq:intermediate}
\langle \phi,h_\eps(t)\rangle:=& \int_{\R^3\times\R^3 } \ud{x} \ud{v} \, \sum_{n\geq 0}(1-\alpha)^{n}\frac{\mu_{\ep}^{n}}{n!} \int_{(\Reals^3_\eps)^n}\uud{\mbf{b}}_{n}\,\int_{(\R^3)^n} \uud{\mbf{u}}_{n} \mathcal{M}_{\beta}(\mbf{u}_n)\,\mathbf{1}_{\{(\mbf{b}_n,\mbf{u}_n)_{n}\in \bm{\Delta}_{n}(t)\}}\nonumber \\&
\exp\left({-\mu \sum_{i=0}^n\lambda(v_i) (\tau_{i+1}-\tau_i)}\right)  \,  f_0(x,v) \phi(T^{t}_{{\eta_{\ep,n}}}(x,v)),
\end{align}
for all $\phi \in C_b(\X) $ with $\phi(A)=0$, and  
\begin{align}\label{eq:barfep1A}
	h_{\ep}(t,A)&=\int_{\R^3\times \R^3} \ud{x} \ud{v}\, \sum_{n=0}^\infty  \left(1-(1-\alpha)^{n}\right)\mu^n \frac{\mu_{\ep}^{n}}{n!} \int_{(\Reals^3_\eps)^n}d\mbf{b}_{n}\,\int_{(\R^3)^n} d\mbf{u}_n \mathcal{M}_{\beta}(\mbf{u}_n)\, \nonumber \\&
	\qquad  \mathbf{1}_{\{(\mbf{b}_n,\mbf{u}_n)_{n}\in \bm{\Delta}_{n}(t)\}} 
	\exp\left({-\mu \sum_{i=0}^n\lambda(v_i) (\tau_{i+1}-\tau_i)}\right)  \,  f_0(x,v) .
\end{align}	
Thanks to the estimate above (cf.~\eqref{eq:tubeintupper}), we can handle the exponential factor in \eqref{eq:tildefep1} and we get 
 $0\leq h_\eps \leq \tilde{f}_{\ep} \leq f_\eps$. It remains to prove that $h_{\ep}=\bar{f}_{\ep}$ given as in \eqref{formula6}-\eqref{formula6A}.
 
Since, thanks to $\mathbf{1}_{\{(\mbf{b}_n,\mbf{u}_n)_{n}\in \bm{\Delta}_{n}(t)\}}$, we restrict to the set of realizations for which the tagged particle collides with
each of the obstacles $(b_i,u_i)$ precisely once (cf. Definition \ref{def:tildefep}), we can order the obstacles
according to the scattering sequence, i.e. $(b_i,u_i)$ is collided before the obstacle $(b_j,u_j)$ if $i<j$. Then we perform the following change of variables  \begin{equation}
\label{change var}
(b_1,u_1),\dots,(b_n,u_n) \rightarrow (\tau_1,\tilde{v}_1,\tilde{\xi}_1),\dots , (\tau_n,\tilde{v}_n,\tilde{\xi}_n) 
\end{equation}
with 
\begin{equation*}
0= \tau_0 \leq \tau_1 \leq \ldots \tau_n \leq t
\end{equation*}
and where $\tilde{v}_i\in\R^3$ are the outgoing velocities of the tagged particle with respect to the forward flow after a collision suffered at time $\tau_i$ with an obstacle of velocity $\tilde{\xi}_i\in E(\tilde{v}_{i-1},\tilde{v}_i)$. 
Taking into account the Jacobian of the change of variables, i.e.
$$\frac{\ud{b}_{1}\ud{v}_{1}\ldots\ud{b}_{n}\ud{v}_{n}}{n!}=\varepsilon^{2n}\frac{\ud{\tau}_{1}\ldots\ud{\tau}_{n}\ud{\tilde{v}}_{1}\ldots\ud{\tilde{v}}_{n}\ud{\tilde{\xi}}_{1}\ldots\ud{\tilde{\xi}}_{n}}{\prod_{i=1}^{n}|\tilde{v}_{i}-\tilde{v}_{i-1}|},$$
(the factor $n!$ from picking the collision order for the obstacles) we can rewrite $h_\eps$, given by \eqref{eq:intermediate}, as follows:
\begin{align} \label{formula4}
 	\langle \phi,h_{\ep}(t)\rangle&= \int_{\R^3\times\R^3 } \ud{x} \ud{v}\, \sum_{n\geq 0}(1-\alpha)^{n}(\mu_\ep \eps^2)^n \int_{0}^{t}\uud{\tau}_1\ldots \int_{\tau_{n-1}}^{t}\uud{\tau}_n\int_{\R^3\times E(\tilde{v}_{0},\tilde{v}_1)} \uud{\tilde{v}_1}\uud{\tilde{\xi}_1}\ldots  		       \nonumber \\&
	\qquad \int_{\R^3\times E(\tilde{v}_{n-1},\tilde{v}_{n})} \uud{\tilde{v}_n}\uud{\tilde{\xi}_n}  \left( \prod_{i=1}^n \frac{\mathcal{M}_{\beta}(\tilde{\xi}_i)}{|\tilde{v}_i-\tilde{v}_{i-1}|}\right) \exp\left({-\mu\sum_{i=0}^n\lambda(\tilde{v}_i) (\tau_{i+1}-\tau_i)}\right) \nonumber 
	\\&	\qquad  \mathbf{1}_{\{(\mbf{b}_n,\mbf{u}_n)_{n}\in \bm{\Delta}_{n}(t)\}}  f_0(x,v) \phi(T^{t}_{\mbf{\eta_{\ep,n}}}(x,v)) .
\end{align}

Here the indicator function $\mathbf{1}_{\{(\mbf{b}_n,\mbf{u}_n)_{n}\in \bm{\Delta}_{n}(t)\}}  $ ensures that the map \eqref{change var} is one-to-one. Now we relabel the variables $t_i = \tilde{t}_{n-i} $, $v_i = \tilde{v}_{n-i} $, $\xi'_i = \tilde{\xi}_{n-i} $ with $\xi_{i}' \in E(v_i,v_{i-1})$. 
Notice that, with this relabeling, the right time simplex is: $0\leq t_{n}\leq t_{n-1}\leq \dots \leq t_2\leq t_1\leq t_0=t.$ 

This allows to construct for any $s\in [0,t]$ the backward limit flow $$\gamma^{-s}(x,v)=(x(-s),v(-s))$$ introduced in Definition \ref{def:limitflow}. For notational simplicity we skipped the dependence of the flow on the ordered obstacle configurations.

In the new variables $t_i,{v}_i,{\xi'}_i$  we can express 
$$\mathbf{1}_{\{(\mbf{b}_n,\mbf{u}_n)_{n}\in \bm{\Delta}_{n}(t)\}} =(1-\bm{\mathbf{1}}_{\rm{rec}})(1-\bm{\mathbf{1}}_{\rm{int}}),$$
where we defined the characteristic functions to exclude the set of pathological situations, namely the set of configurations delivering Recollisions and Interferences (cf. items $i), ii)$ in Definition \ref{def:pathologies}). 
More precisely, we recall that we defined $\bm{\mathbf{1}}_{\rm{rec}}$, $\bm{\mathbf{1}}_{\rm{int}}$ as
\begin{align}\label{def:chirec}
\bm{\mathbf{1}}_{\rm{rec}}=&\mathbf{1}_{\{\mbf{b}_N\;\text{s.t.}\;\text{i)}\;\text{is}\;\text{realized}\}}\\
\bm{\mathbf{1}}_{\rm{int}}=&\mathbf{1}_{\{\mbf{b}_N\;\text{s.t.}\;\text{ii)}\;\text{is}\;\text{realized}\}}\label{def:chiint}.
\end{align}

 Then, to simplify the notation, in what follows we denote by $\Omega'_n$ the full integration domain and  
$$ \int_{0}^{t_0}\uud{t}_1\int \uud{v}_1\int_{E(v_1,v)} \uud{\xi'_1}  \dots  \int_{0}^{t_{n-1}}\uud{t}_n \int \uud{v}_n \int_{E(v_n,v_{n-1})} \uud{\xi'_n} = \int_{\Omega'_n}\uud{t}_1 \uud{v}_1\uud{\xi'_{1}}\ldots \uud{t}_n \uud{v}_n\uud{\xi'_{n}} .$$
Then we can rewrite \eqref{formula4} as 
\begin{equation}
\begin{aligned} \label{formula5}
\langle \phi,h_{\ep}(t)\rangle&=\sum_{n=0}^\infty \int (1-\alpha)^{n}\mu^n \int_{\Omega'_n}\uud{t}_1 \uud{v}_1\uud{\xi'_{1}}\ldots \uud{t}_n \uud{v}_n\uud{\xi'_{n}}  \left( \prod_{i=1}^n \frac{\mathcal{M}_{\beta}({\xi}'_i)}{|v_i-v_{i-1}|} \right) \\
&\quad e^{- \mu \sum_{i=0}^{n-1} \lambda(v_i)(t_{i}-t_{i+1})}\, e^{- \mu\lambda(v_n)t_n}  \,(1-\bm{\mathbf{1}}_{\rm{rec}})(1-\bm{\mathbf{1}}_{\rm{int}}) f_0(\gamma^{-t}(x,v)) \phi(x,v) \ud{x} \ud{v} .
\end{aligned}
\end{equation}
This implies $\langle \phi,h_{\ep}(t)\rangle=\langle \phi,\bar{f}_{\ep}(t)\rangle$ for any $\phi \in C_b(\X) $ with $\phi(A)=0$. With the same change of variables we obtain 
 $h_{\ep}(t,\A)=\bar{f}_\eps(t,\A).$
 
 Using Proposition \ref{prop:deffeps} we obtain $h_{\ep}=\bar{f}_{\ep}$ as given in \eqref{formula6}-\eqref{formula6A}.
This concludes the proof of Proposition \ref{prop:step1strategy}.
\end{proof}

We will later prove that the characteristic functions \eqref{def:chirec}-\eqref{def:chiint} satisfy:
\begin{equation} \label{eq:badevents}
	(1-\bm{\mathbf{1}}_{\rm{rec}})(1-\bm{\mathbf{1}}_{\rm{int}})\rightarrow 1, \quad  \text{as $\eps \rightarrow 0$}.	
\end{equation}
Our goal is to make this quantitative.  To this end, we keep track of the mass that goes to the annihilated state $\A$. We recall that $\bar{f}_{\ep}(t,\A)$ is given by \eqref{formula6A} and  $f_{\ep}(t,\A)$ is given by \eqref{def:fepA}, i.e. $ f_{\ep}(t,\A)=1- \|f_{\ep}\|_{L^1(\Reals^3 \times \Reals^3)}$. We further define 	
\begin{align*}
	{f}(t,\A)&=\int_{\R^3\times \R^3} \ud{x} \ud{v}\, \sum_{n=0}^\infty  \left(1-(1-\alpha)^{n}\right)\mu^n \int_{\Omega'_n}\uud{t}_1 \uud{v}_1\uud{\xi'_{1}}\ldots \uud{t}_n \uud{v}_n\uud{\xi'_{n}}  \left( \prod_{i=1}^n \frac{\M_{\beta}(\xi'_i)}{|v_i-v_{i-1}|}\right)\nonumber \\& 
	\quad e^{- \mu \sum_{i=0}^{n-1} \lambda(v_i)(t_{i}-t_{i+1})}\,e^{- \mu\lambda(v_n)t_n}  \, f_0(\gamma^{-t}(x,v)), 
	\end{align*}
	in other words we have
\begin{equation*}
	f(t,\A)=1- \|f\|_{L^1(\Reals^3 \times \Reals^3)}
\end{equation*}
where $f$ is the solution of the annihiled linear Boltzman equation \eqref{eq:linearized} 
given by \eqref{eq:srsolB_3}.

We introduce the error function:
\begin{equation} \label{eq:psidef}
	\begin{aligned}
		{\psi}_\eps(t) &=\sum_{n\geq 0} \int_{\R^3\times\R^3}\ud{x}\ud{v} \, \mu^n  \int_{\Omega'_n}\uud{t}_1 \uud{v}_1\uud{\xi'_{1}}\ldots \uud{t}_n \uud{v}_n\uud{\xi'_{n}}   \left( \prod_{i=1}^n \frac{\M_{\beta}(\xi'_i)}{|v_i-v_{i-1}|}\right) \\
		& \quad e^{- \mu \sum_{i=0}^{n-1} \lambda(v_i)(t_{i}-t_{i+1})}  (\bm{\mathbf{1}}_{\rm{rec}}+\bm{\mathbf{1}}_{\rm{int}}) f_0(\gamma^{-t}(x,v))  .
	\end{aligned}
\end{equation}
Clearly, ${\psi}_\eps(t)=\|\bar{f}_\eps(t)-f(t)\|_{L^1(\X)} $. Notice that 
$$\|\bar{f}_\eps(t)-f(t)\|_{L^1(\X)}=\|\bar{f}_\eps(t)-f(t)\|_{L^1(\R^3\times \R^3)}+\left|\bar{f}_\eps(t,\A)-f(t,\A)\right|.$$
In the following we will prove an explicit bound for the error function $\psi_{\ep}$, that we state in the following proposition.
\begin{proposition}\label{lem:expliciterror}
Let ${\psi}_\eps(t)$ be defined as in \eqref{eq:psidef}. Let be $\kappa_0>0$ and $r\in (0,\frac 1 3)$. Then, for any $\kappa \in (0,\kappa_0)$ we have
\begin{align} \label{eq:L1rec}
{\psi}_\eps(t) \leq C \left[{\ep}^{\kappa} t (\mu +\mu^2)+ \eps^{r-2\kappa_0} [\eps^{\kappa_0}\mu(\mu t+(\mu t)^2)+((\mu t)^\frac52+(\mu t)^\frac72)]\right], \qquad \forall t \geq0.\end{align}
\end{proposition}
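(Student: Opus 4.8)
The plan is to read $\psi_\eps(t)$ as the part of $\|\bar f_\eps(t)-f(t)\|_{L^1(\X)}$ that is carried exactly by the pathological indicators, and to bound the measure of the recollision and interference sets inside the collision expansion \eqref{eq:psidef}. Since the integrand contains the factor $\bm{\mathbf{1}}_{\rm{rec}}+\bm{\mathbf{1}}_{\rm{int}}$, by linearity I first split $\psi_\eps=\psi_\eps^{\rm rec}+\psi_\eps^{\rm int}$ and treat the two pathologies of Definition \ref{def:pathologies} by the same scheme. In both terms the surviving weight is $\sum_{n}\mu^n\int_{\Omega'_n}\bigl(\prod_{i=1}^n \M_\beta(\xi'_i)/|v_i-v_{i-1}|\bigr)\,e^{-\mu\sum_{i=0}^{n-1}\lambda(v_i)(t_i-t_{i+1})}\,f_0(\gamma^{-t}(x,v))$, so the Gaussian factors $\M_\beta(\xi'_i)$ and the exponential, together with the lower bound on $\lambda$ and the fact that $|v|$ and $|v|^2/\lambda(v)$ have all exponential moments with respect to $\M_\beta$ (recorded in the Remark after Theorem \ref{th:MAIN1}), supply all the integrability needed for the summation over $n$.

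Next I would introduce two cutoffs at the scales fixed by the statement. A high-velocity cutoff $|v_i|\le \eps^{-\kappa_0}$ on every outgoing velocity: on its complement the Gaussian decay of the $\M_\beta(\xi'_i)$ and the exponential moments above render the contribution smaller than any fixed power of $\eps$, so it is absorbed into the term $\eps^{\kappa}t(\mu+\mu^2)$. A small-gap cutoff isolating the configurations with two consecutive collision times $t_i-t_{i+1}<\eps^{\kappa}$: restricting the corresponding time integration to an interval of length $\eps^\kappa$ costs one factor $\eps^\kappa$, and after summing the remaining simplex integrals against the $\lambda$-exponential and counting the $O(n)$ positions of the short gap, the two-moment estimates of Section \ref{ss:2mom} bound the whole sum by $\eps^{\kappa}t(\mu+\mu^2)$. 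This produces the first term of \eqref{eq:L1rec}.

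The heart of the matter is the geometric estimate on the remaining region, where every velocity obeys $|v_i|\le\eps^{-\kappa_0}$ and every pair of consecutive collision times is separated by at least $\eps^\kappa$. Here I would fix the pair of collision indices $(i,j)$ responsible for the event and freeze all variables except the one entering the constraint $x(-s)\in\partial B(b_i(-s),\eps)$ for a recollision, respectively $x(-s)\in B(b_j(-s),\eps)$ for an interference. Since between two successive collisions the backward flow $\gamma^{-s}$ is a straight segment, the constraint confines the relevant velocity (and the time $s$) to a set of measure at most $C\eps^{r}$ times a bounded power of the cutoff $\eps^{-\kappa_0}$, the loss $\eps^{-\kappa_0}$ or $\eps^{-2\kappa_0}$ being the price of the Jacobian and of the singular kernels $1/|v_i-v_{i-1}|$ after a further negligible excision of the near-grazing set $|v_i-v_{i-1}|\ll 1$. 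Moreover the time window in which a return to a fixed ball is possible carries an inverse-square-root singularity, contributing an extra $t^{1/2}$ through $\int_0^t s^{-1/2}\,\uud s$. This step is the main obstacle: one must make the bound uniform in the bounded velocities and in $n$ while simultaneously controlling the singular kernel $1/|v_i-v_{i-1}|$ and the motion of the obstacle centers $b_i(-s)$ along the backward trajectory.

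Finally I would sum over $n$ and over the $\binom{n}{2}$ admissible pairs $(i,j)$. Because a recollision or interference is intrinsically a two-collision event, the number of relevant pairs grows like $n^2$; integrating the gain against the simplex volume and the $\lambda$-exponential then resums the series into moments of the number of collisions of the limit process. The sub-case costing only one cutoff factor yields the gain $\eps^{r-\kappa_0}$ paired with the low moments $\mu t+(\mu t)^2$, i.e.\ the summand $\eps^{r-2\kappa_0}\,\eps^{\kappa_0}\mu(\mu t+(\mu t)^2)$ of \eqref{eq:L1rec}; the sub-case costing two cutoff factors and carrying the extra $t^{1/2}$ of the return geometry yields $\eps^{r-2\kappa_0}$ paired with the higher moments $(\mu t)^{5/2}+(\mu t)^{7/2}$. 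Collecting the high-velocity and short-gap contributions together with these two geometric contributions assembles exactly the right-hand side of \eqref{eq:L1rec}, for all $\kappa\in(0,\kappa_0)$ and $r\in(0,\tfrac13)$, which completes the plan.
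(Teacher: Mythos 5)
Your skeleton matches the paper's: split off the high\nobreakdash-velocity contribution, decompose the pathology indicators over pairs $(i,j)$ of collision indices, prove a geometric smallness estimate for each pair, and resum over $n$ and over the $O(n^2)$ pairs using the second moment $\bm{n}_2(t)\le C(\mu t+(\mu t)^2)$ of the collision number of the limit process. However, the quantitative mechanisms you invoke do not produce the exponents in \eqref{eq:L1rec}, and the step you yourself call ``the main obstacle'' is left unresolved. First, the velocity cutoff: the paper truncates at $\mathcal{M}_{\beta}(v_i)\ge \eps^{\kappa_0}$, i.e.\ $|v_i|\lesssim\sqrt{|\log\eps|}$, not at $|v_i|\le\eps^{-\kappa_0}$. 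This logarithmic scale is what makes the discarded Gaussian tail only $O(\eps^{\kappa})$ for $\kappa<\kappa_0$ (whence the first term $\eps^{\kappa}t(\mu+\mu^2)$, proved via a short-time estimate at $t=R_\eps^{-1}$ plus stationarity), and, crucially, it is the \emph{source} of the losses $\eps^{-\kappa_0}$, $\eps^{-2\kappa_0}$: on the support of the cutoff one has $e(v_{j-1},v_j)e(v_j,v_{j+1})\ge\eps^{2\kappa_0}$, which is what allows the constrained $v_j$-integral to be compared from below with the unconstrained one, i.e.\ with $\Psi_{n,\eps}[1]$. With your cutoff $|v_i|\le\eps^{-\kappa_0}$ that lower bound degenerates to $e^{-c\eps^{-2\kappa_0}}$ and the comparison collapses; attributing these losses to ``the Jacobian and the singular kernels'' does not recover them.

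Second, the time powers do not close. The factor multiplying $\bm{n}_2(t)$ in the dominant term is $(\mu t)^{3/2}$, and it arises from a velocity-space volume comparison: the unconstrained collision integral is bounded below by restricting to $|v_j|\le(\mu t)^{-1/2}$, a ball of volume $(\mu t)^{-3/2}$, so the ratio constrained/unconstrained costs $(\mu t)^{3/2}$, and $(\mu t)^{3/2}\cdot(\mu t+(\mu t)^2)=(\mu t)^{5/2}+(\mu t)^{7/2}$. Your proposed $\int_0^t s^{-1/2}\,\uud{s}=2t^{1/2}$ would yield $(\mu t)^{3/2}+(\mu t)^{5/2}$, which is not the claimed bound. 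Third, you assert that the pathological set for a fixed pair has measure $C\eps^{r}$ with $r<\tfrac13$ but give no reason for the threshold $\tfrac13$. In the paper it comes from a dichotomy on the distance $|x-b_i|$ at the $j$-th collision relative to $\eps^{\gamma}$: the long-distance case confines $v_j-u_i$ to a cone of aperture $\eps^{1-\gamma}$, hence (after intersecting with the velocity cutoff) to a cylinder of radius $\eps^{\gamma'}$, $\gamma'<1-\gamma$; the short-distance case is split again according to whether the \emph{relative} velocity $|v_j-u_{i,j}|$ is below or above $\eps^{\gamma/2}$ (small ball in $v_j$, respectively a time gap $|t_j-t_{j+1}|\le\eps^{\gamma/2}$ which collapses one collision and yields the $\eps^{\kappa_0}\mu(\mu t+(\mu t)^2)$ summand). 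Optimizing $\min(1-\gamma,\gamma/2)$ gives $\gamma=2/3$ and $r<\tfrac13$. This relative-velocity case analysis is also precisely how the motion of the obstacle center $b_i(-s)$ is controlled; without it the geometric estimate for a \emph{moving} scatterer is not established.
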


\begin{remark}
We remark that for any given $\mu>0$ the bound \eqref{eq:L1rec} becomes
\begin{align} \label{eq:L1rec_bis}
{\psi}_\eps(t) \leq C_{\mu} \left[{\ep}^{\kappa} t + \eps^{r-\kappa_0} ( t+t^2)+\eps^{r-2\kappa_0}( t^\frac 5 2+ t^\frac72)]\right].
\end{align}
Notice that we keep the $\mu$ dependence in the bound \eqref{eq:L1rec} as well as in the proof of Proposition~\ref{lem:expliciterror} since it is important when considering hydrodynamic rescalings of the system due to possible further rescalings of the intensity $\mu M_\eps \to \infty$ and of the timescale (cf. Section \ref{s:hydro}). 

\end{remark}

We observe that Proposition \ref{lem:expliciterror} guarantees that the non Markovian pathologies, namely the recollisions and interferences, are vanishing in the limit $\ep\to 0$. The proof is postponed to Section~\ref{s:rec}. 

\medskip

On the extended phase space $\X=\Reals^3 \times \Reals^3 \cup \{\A\}$ we have
$ f_\eps-\bar{f}_\eps\geq 0,\, f\geq 0$, and $f$, $f_\eps$ have unit total mass in $\X$, therefore:
\begin{align*}
1 = \|f_\eps(t)\|_{L^1(\X)} 	&= \| f_\eps - \bar{f}_\eps + f+ (\bar{f}_\eps-f)\|_{L^1(\X)} \\
&\geq \|f_\eps - \bar{f}_\eps\|_{L^1(\X)} + \|f \|_{L^1(\X)} - \|\bar{f}_\eps-f\|_{L^1(\X)} . 
\end{align*} 
We then have  
\begin{equation}\label{eq:esterr}
\|f_\eps - \bar{f}_\eps\|_{L^1(\X)}\leq \|\bar{f}_\eps-f\|_{L^1(\X)}  \leq  C_{\mu} \left[{\ep}^{\kappa} t + \eps^{r-\kappa_0} ( t+t^2)+\eps^{r-2\kappa_0}( t^\frac 5 2+ t^\frac72)]\right],
\end{equation}
where in the last inequality we used \eqref{eq:L1rec_bis}.
Therefore, using \eqref{eq:esterr} and applying  \eqref{eq:L1rec_bis} we get 
\begin{align}
	\|f_{\eps} -f \|_{L^1(\X)} &\leq  \|f_{\eps} - \bar{f}_\eps\|_{L^1(\X)}+ \|\bar{f}_\eps-f \|_{L^1(\X)}\nonumber\\&
	\leq C_{\mu} \left[{\ep}^{\kappa} t + \eps^{r-\kappa_0} ( t+t^2)+\eps^{r-2\kappa_0}( t^\frac 5 2+ t^\frac72)]\right].
\end{align}
This conludes the proof of Theorem \ref{th:MAIN1}.

\section{Explicit control of the non Markovian pathologies: proof of Proposition \ref{lem:expliciterror}} \label{s:rec}
Our goal is to estimate the occurence of trajectories with pathologies: recollisions and interferences. 
In the following we will assume that $f_0$ satisfies Assumption~\ref{ass:f0}, i.e. $f_0(x,v)=g_0(x,v) \mathcal{M}_{\beta}(v)$ with
\begin{align} \label{eq:gass}
	\int_{\Reals^3} \|g_0(x,\cdot)\|_{L^\infty(\Reals^3)} \ud{x} \leq C.
\end{align}

We first estimate the error function $\psi_\eps$ introduced in \eqref{eq:psidef}, using \eqref{eq:gass} for $f_0(\gamma^{-t}(x,v))$. This yields
\begin{align*}
	\psi_\eps(t) \leq  \sum_{n\geq 0} \mu^n \int_{\Reals^3} \ud{x}\int_{\Reals^3}& \ud{v} \,\mathcal{M}_{\beta}(v_n) \, \|g_0(\gamma^{-t}_1(x,v),\cdot)\|_{L^\infty(\Reals^3)} \int_{\Omega'_n}\uud{t}_1 \uud{v}_1\uud{\xi'_{1}}\ldots \uud{t}_n \uud{v}_n\uud{\xi'_{n}}   \\
	&\left( \prod_{i=1}^n \frac{\M_{\beta}(\xi'_i)}{|v_i-v_{i-1}|}\right) e^{- \mu \sum_{i=0}^{n-1} \lambda(v_i)(t_{i}-t_{i+1})}\,e^{- \mu\lambda(v_n)t_n}  (\bm{\mathbf{1}}_{\rm{rec}} +\bm{\mathbf{1}}_{\rm{int}}),
\end{align*}
where $\bm{\mathbf{1}}_{\rm{rec}}$ is the characteristic function of having a recollision event defined as in \eqref{def:chirec}. Now we use that $\bm{\mathbf{1}}_{\rm{rec}},\, \bm{\mathbf{1}}_{\rm{int}}$ are translation invariant in the variable $x$. We fix $x(0)=x_0$ with $x_0\in \R^3$ arbitrary and perform the integration in $x$:
\begin{align}\label{eq:bdPsi}
	\psi_\eps(t) \leq  C\sum_{n\geq 0} \mu^n \int_{\Reals^3}& \ud{v} \,\mathcal{M}_{\beta}(v_n) \,  \int_{\Omega'_n}\uud{t}_1 \uud{v}_1\uud{\xi'_{1}}\ldots \uud{t}_n \uud{v}_n\uud{\xi'_{n}}   \left( \prod_{i=1}^n \frac{\M_{\beta}(\xi'_i)}{|v_i-v_{i-1}|}\right) \nonumber\\
	&e^{- \mu \sum_{i=0}^{n-1} \lambda(v_i)(t_{i}-t_{i+1})}\,e^{- \mu\lambda(v_n)t_n}  (\bm{\mathbf{1}}_{\rm{rec}} +\bm{\mathbf{1}}_{\rm{int}}).
\end{align}
We now use the Maxwellian energy conservation identity 
for a collision $(v_i,\xi_i')\leftrightarrow (v_{i-1},\xi)$ to write 
$$\mathcal{M}_{\beta}(v_n) \prod_{i=1}^n \mathcal{M}_{\beta}(\xi'_i)= \mathcal{M}_{\beta}(v)\prod_{i=1}^n\left(\frac{\mathcal{M}_{\beta}(v_{i})}{\mathcal{M}_{\beta}(v_{i-1})} \mathcal{M}_{\beta}(\xi'_i)\right)=\mathcal{M}_{\beta}(v) \prod_{i=1}^n\left(\mathcal{M}_{\beta}(\xi_i)\right).$$
Then, we change variables $\xi'_i\to \xi_i$ denoting by $\Omega_n$ the new integration domain, i.e. we set 
 $$ \int_{0}^{t_0}\uud{t}_1\int \uud{v}_1\int_{E(v,v_1)} \uud{\xi}_1  \ldots  \int_{0}^{t_{n-1}}\uud{t}_n \int \uud{v}_n \int_{E(v_{n-1},v_{n})} \uud{\xi}_n = \int_{\Omega_n}\uud{t}_1 \uud{v}_1\uud{\xi}_1\ldots \uud{t}_n \uud{v}_n\uud{\xi}_n.$$
 This allows to rewrite \eqref{eq:bdPsi} as 
 \begin{align*} 
	\psi_\eps(t) \leq  C\sum_{n\geq 0} \mu^n \int_{\Reals^3}& \ud{v} \,\mathcal{M}_{\beta}(v) \,  \int_{\Omega_n}\uud{t}_1 \uud{v}_1\uud{\xi}_1\ldots \uud{t}_n \uud{v}_n\uud{\xi}_n  \left( \prod_{i=1}^n \frac{\M_{\beta}(\xi_i)}{|v_i-v_{i-1}|}\right) \nonumber\\
	&e^{- \mu \sum_{i=0}^{n-1} \lambda(v_i)(t_{i}-t_{i+1})}\,e^{- \mu\lambda(v_n)t_n}  (\bm{\mathbf{1}}_{\rm{rec}} +\bm{\mathbf{1}}_{\rm{int}}).
\end{align*}
The goal of this section is to prove Proposition \ref{lem:expliciterror} providing the quantitative estimate for the error function $\psi_{\ep}(t)$.  
In order to estimate  $\psi_\eps$, we distinguish between trajectories where the tagged particle is very fast and those where the velocity is controlled. 
\begin{definition} For any $\kappa_{0} >0$, set
$$\vartheta(w)=\begin{cases} 1  &\text{ if }\quad  \mathcal{M}_{\beta}(w) \geq \eps^{\kappa_{0}}\\
0 &\text{ otherwise }\end{cases}$$
and we define
\begin{align}
\bar{\psi}_\eps(t) =	&\int_{\Reals^3} \ud{v} \, \mathcal{M}_{\beta}(v) \sum_{n\geq 0}\mu^n  \int_{\Omega_n}\uud{t}_1 \uud{v}_1\uud{\xi}_1\ldots \uud{t}_n\uud{v}_n\uud{\xi}_n  \left( \prod_{i=1}^n \frac{\mathcal{M}_{\beta}(\xi_i)}{|v_i-v_{i-1}|}\vartheta(v_i)\right) \nonumber \\& 
\quad e^{- \mu \sum_{i=0}^{n-1} \lambda(v_i)(t_{i}-t_{i-1})}\,e^{- \mu\lambda(v_n)t_n} ( \bm{\mathbf{1}}_{\rm{rec}}+\bm{\mathbf{1}}_{\rm{int}} ) \label{eq:barpsi}  \\
\psi^\infty_\eps(t) = &\int_{\Reals^3} \ud{v}\, \mathcal{M}_{\beta}(v) \sum_{n\geq 0}\mu^n \int_{\Omega_n}\uud{t}_1 \uud{v}_1\uud{\xi}_1\ldots\uud{t}_n \uud{v}_n\uud{\xi}_n   \left( \prod_{i=1}^n \frac{\mathcal{M}_{\beta}(\xi_i)}{|v_i-v_{i-1}|}  \right) \left(1-\prod_{i=1}^n \vartheta(v_i) \right) \nonumber \\&  
\quad  e^{- \mu \sum_{i=0}^{n-1} \lambda(v_i)(t_{i}-t_{i-1})}\,e^{- \mu\lambda(v_n)t_n}  .  \label{eq:infpsi}
\end{align}\end{definition}
We have 
$$\psi_{\eps} \leq \psi^{\infty}_{\eps}+ \bar{\psi}_{\eps}$$ so it suffices to estimate separately $\psi^{\infty}_{\eps}$ and $\bar{\psi}_{\eps}$. We first prove that the contribution of  $\psi^\infty_\eps$ vanishes in the limit $\ep\to 0$, and it satisfies the following bound.

\begin{lemma}[Small probability for fast particles] \label{lem:fast}
	Let be $\kappa_0>0$. For  $\kappa \in (0,\kappa_0)$ there exists a constant $C>0$ 	such that we have
	\begin{align*}
	\psi^{\infty}_{\eps}(t) \leq C{\ep}^{\kappa} t (\mu +\mu^2).
	\end{align*}
\end{lemma}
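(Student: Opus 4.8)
The plan is to read $\psi^\infty_\eps(t)$ probabilistically. After dropping the indicator $1-\prod_i\vartheta(v_i)$, the series \eqref{eq:infpsi} is exactly the total mass of the limiting velocity jump process (holding rate $\mu\lambda(w)$, post-collision law $k(w,\cdot)/\lambda(w)$; recall $\int_{\Reals^3}k(w,v')\,\ud v'=\lambda(w)$, so the process is conservative), started from $\M_\beta$, and therefore equals $1$. The factor $1-\prod_{i=1}^n\vartheta(v_i)$ restricts to trajectories visiting the fast set $F:=\{w:\M_\beta(w)<\eps^{\kappa_0}\}=\{|w|>R_\eps\}$ at least once after a collision, where $R_\eps$ is fixed by $\M_\beta(R_\eps\hat e)=\eps^{\kappa_0}$, i.e. $R_\eps^2=\tfrac{2\kappa_0}\beta\log\tfrac1\eps+O(1)$. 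Thus $\psi^\infty_\eps(t)$ is the probability of visiting $F$ by time $t$, and the first step is the union bound
\[
1-\prod_{i=1}^n\vartheta(v_i)\le\sum_{i=1}^n\big(1-\vartheta(v_i)\big)=\sum_{i=1}^n\mathbf{1}_{\{v_i\in F\}},
\]
which bounds $\psi^\infty_\eps(t)$ by the expected number of collisions whose outgoing velocity lies in $F$.

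For a fixed index $j$ I would factor the trajectory at the $j$-th collision. Integrating out $v_{j+1},\dots,v_n$, the times $t_{j+1},\dots,t_n$ and summing over $n\ge j$ reproduces the free evolution after the $j$-th collision (including the holding weight of $v_j$); by conservativeness this contributes exactly $1$. What remains is a head of $j$ collisions carrying $\M_\beta(v_0)$, the weights $\prod_{i=0}^{j-1}e^{-\mu\lambda(v_i)(t_i-t_{i+1})}$ and $\mathbf{1}_{\{v_j\in F\}}$. The crucial ingredient is the reversibility of the kernel with respect to the Maxwellian,
\[
\M_\beta(w)\,k(w,v')=\M_\beta(v')\,k(v',w),
\]
a direct computation from the explicit Carleman kernel \eqref{eq:kern}--\eqref{eq:E} and Lemma \ref{lem:e}. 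It keeps the velocity marginal Maxwellian through the head, so that the head collapses (Campbell's formula for the jump measure of the stationary process) and summing over $j$ gives the rate identity $\psi^\infty_\eps(t)\le\mu t\int_F\lambda(v')\,\M_\beta(v')\,\ud v'$, using $\int\M_\beta(w)k(w,v')\,\ud w=\lambda(v')\M_\beta(v')$. The mechanism producing the linear factor $\mu t$ while cancelling the unbounded weight $\lambda$ is the elementary bound $1-e^{-\mu\lambda(w)t}\le\mu\lambda(w)t$ on the outermost time integral, since the $1/\lambda(w)$ left by that integration is exactly cancelled.

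It then remains to estimate $\int_F\lambda\,\M_\beta$. Using that $\lambda$ has at most linear growth (cf. the Remark after Theorem \ref{th:MAIN1} and \eqref{eq:lambdadef}) and the Gaussian decay of $\M_\beta$,
\[
\int_F\lambda(v')\,\M_\beta(v')\,\ud v'\le C(1+R_\eps)R_\eps^2\,e^{-\beta R_\eps^2/2}\le C\,\eps^{\kappa_0}\big(\log\tfrac1\eps\big)^{3/2}\le C_\kappa\,\eps^{\kappa}
\]
for every $\kappa<\kappa_0$; this is exactly where the strict inequality $\kappa<\kappa_0$ enters, to absorb the polylogarithmic correction. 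Combined with the previous step this yields the leading bound $\psi^\infty_\eps(t)\le C\eps^{\kappa}\mu t$.

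The main obstacle is the interplay between the linear growth of $\lambda$ and the bookkeeping of the collision number. Discarding the holding exponentials makes the velocity integrals $\int k(v_{i-1},v_i)\,\ud v_i=\lambda(v_{i-1})$ grow geometrically and the $j$-sum diverge exponentially in $\mu t$, whereas bounding each time integral by $1/(\mu\lambda)$ destroys the $t$-dependence and again makes the $j$-sum diverge. Reversibility resolves this tension by pairing every $\lambda$ with the invariant Maxwellian. Carrying out the estimate with the explicit control of the first and second moments of the collision number (as in \cite{BGS,NV}, cf. Section \ref{ss:2mom}) produces, besides the leading $\mu t$, a subleading $\mu^2 t$ term, which accounts for the stated prefactor $t(\mu+\mu^2)$.
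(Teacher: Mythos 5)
Your proof is correct, but it takes a genuinely different route from the paper's. The paper splits $\psi^{\infty}_{\eps}$ according to the number of collisions on a short time window $t=\eps^{\kappa_0}$: the contribution of $n\geq 2$ collisions is bounded crudely by $(C\mu t)^2$ (this is the origin of the $\mu^2$ in the statement), while for $n\leq 1$ it distinguishes whether the initial velocity or the post-collisional velocity $v_1$ is fast, using the Gaussian tail of $\mathcal{M}_{\beta}(v)$ in the first case and the bound $e(v,v_1)\leq \mathcal{M}_{\beta}(v_1/2)$ in the second; the estimate is then propagated to arbitrary times by the subadditivity $\psi^{\infty}_{\eps}(t)\leq \eps^{-\kappa_0}t\,\psi^{\infty}_{\eps}(\eps^{\kappa_0})$ coming from stationarity. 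You instead apply the union bound $1-\prod_i\vartheta(v_i)\leq\sum_i\mathbf{1}_{\{v_i\in F\}}$ and identify the result with the expected number of jumps of the stationary Boltzmann process landing in the fast set $F$, which by invariance of $\mathcal{M}_{\beta}$ (equivalently the detailed-balance relation $\mathcal{M}_{\beta}(w)k(w,v')=\mathcal{M}_{\beta}(v')k(v',w)$, readily checked from Lemma \ref{lem:e}) equals $\mu t\int_F\lambda\,\mathcal{M}_{\beta}\,\ud{v'}$; the Gaussian tail then gives $C_\kappa\eps^{\kappa}\mu t$. Your route is cleaner and in fact yields a slightly sharper bound — no $\mu^2 t$ term is needed, so your closing remark about recovering it from second-moment estimates is superfluous, since $\mu t\leq t(\mu+\mu^2)$ trivially. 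The price is that the ``collapse of the head'' via Campbell's formula should be spelled out: the factorization of the tail into total mass $1$ uses the conservativeness $\int k(w,v')\,\ud{v'}=\lambda(w)$ together with non-explosion of the jump process, both of which the paper also relies on implicitly through Lemma \ref{lem:momBoltz}. The paper's argument is more elementary and reuses the short-time-plus-stationarity template of Section \ref{ss:2mom}; both proofs ultimately rest on the same fact that the Maxwellian is invariant under the limiting dynamics.
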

The proof of Lemma \ref{lem:fast} will be presented in Subsection \ref{ss:fastpart}. \medskip

Next, we estimate the function $\bar{\psi}_\eps$. Our goal is to prove the following
\begin{lemma}[Small probability for non Markovian pathologies of slow particles] \label{lem:slow}
	Let be $\kappa_0>0$ and $r \in (0,\frac 1 3)$. Then there exists a constant $C>0$  such that we have
	\begin{align*}
	\bar{\psi}_\eps(t) \leq C \eps^{r-2\kappa_0} [\eps^{\kappa_0}\mu(\mu t+(\mu t)^2)+(\mu t)^\frac52+(\mu t)^\frac72 ].
	\end{align*}
\end{lemma}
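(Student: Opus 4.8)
The goal is to bound $\bar{\psi}_\eps(t)$, the contribution of pathological (recollision or interference) configurations restricted to trajectories whose velocities all satisfy the cutoff $\vartheta(v_i)=1$, i.e. $\mathcal{M}_\beta(v_i)\geq \eps^{\kappa_0}$. The overall strategy is to fix the collision index $j$ at which the pathology (a recollision with $b_i$, $i<j$, or an interference) occurs, and to integrate out all the ``free'' variables while spending the smallness coming from the geometric constraint that the pathology imposes on the relevant velocity or impact variable. Since $\bm{\mathbf{1}}_{\rm{rec}}+\bm{\mathbf{1}}_{\rm{int}}$ is a sum over such events, I would first write $\bm{\mathbf{1}}_{\rm{rec}}\leq \sum_{i<j}\bm{\mathbf{1}}_{\rm{rec}}^{(i,j)}$ and similarly for interferences, and estimate each term.

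\textbf{Key steps.} First I would isolate the two velocity integrations and one time integration directly tied to the pathological pair $(i,j)$, and bound the remaining factors crudely. The factors $\mathcal{M}_\beta(\xi_i)/|v_i-v_{i-1}|$ together with the collision kernel are exactly the Carleman weights, so integrating a velocity $v_k$ (with its hyperplane variable $\xi_k$) against one such factor reproduces the bounded collision operator and contributes a factor controlled by $\lambda$; summed against the exponential $e^{-\mu\lambda(v_k)(t_k-t_{k+1})}$ and integrated in the corresponding time, each ``non-pathological'' collision contributes a factor $\mu t$ after using $\int_0^{\cdot}\!\ud t_k\,\mu\lambda \,e^{-\mu\lambda(\cdot)}\leq$ const and the boundedness of $\LL^+$ on the weighted space. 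This is the mechanism producing the powers $(\mu t)^{5/2}$ and $(\mu t)^{7/2}$: roughly $n$ free collisions give $(\mu t)^n$, and summing the series in $n$ against the smallness per pathology yields a finite sum dominated by the lowest few powers, after the factorials from the time-simplex integration $\int_{\Omega_n}$ tame the series.

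\textbf{The geometric gain.} The heart of the estimate is the constraint imposed by the pathology. For a recollision, the condition $x(-s)\in\partial B(b_i(-s),\eps)$ for some $s$ in a later time interval forces the relative velocity $v_j-v_i$ (or the outgoing velocity after collision $j$) to point, up to an error of order $\eps$, back toward the obstacle $b_i$; this confines one velocity integration to a set of measure $O(\eps^{2}/\text{(separation)})$, which after dividing by the minimal relative speed (bounded below using the velocity cutoff $\vartheta$, hence the appearance of $\eps^{-\kappa_0}$ and $\eps^{-2\kappa_0}$) and integrating over the free time between collisions $i$ and $j$ produces a net smallness $\eps^{r}\eps^{-2\kappa_0}$ for $r\in(0,\tfrac13)$. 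The exponent $r<\tfrac13$ is the price of the standard trade-off in bounding the recollision set: one splits according to whether the time gap $t_i-t_j$ is small (short free flight, small measure in time) or large (large separation, small solid angle back to $b_i$), and optimizes the split. Interferences are handled analogously but with a solid rather than a boundary constraint, giving a comparable bound.

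\textbf{Main obstacle.} The delicate point I expect to dominate the work is precisely this geometric recollision estimate: one must bound, uniformly in the positions of the \emph{intermediate} obstacles $b_{i+1},\dots,b_{j-1}$ and in the collision times, the measure of the set of velocities $(v_i,\dots,v_j)$ and impact parameters forcing the backward trajectory to re-enter $B(b_i,\eps)$, and then carry this smallness through the full velocity--time integration without losing the summability in $n$. The lower velocity bound from $\vartheta$ is essential to prevent the $1/|v_j-v_i|$ and the inverse-speed factors from destroying the smallness, and balancing $r$ against $\kappa_0$ so that the exponent $r-2\kappa_0$ remains negative but the overall bound still vanishes as $\eps\to 0$ (together with the fast-particle bound of Lemma~\ref{lem:fast}, on choosing $\kappa_0$ small) is the technical crux. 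Once the single-pathology bound is in hand, summing over the pair $(i,j)$ costs only a polynomial factor in $n$, which is absorbed by the factorial gain from the ordered time-simplex, completing the proof.
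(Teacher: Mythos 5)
Your skeleton matches the paper's: decompose $\bm{\mathbf{1}}_{\rm{rec}}\leq\sum_{i<j}\bm{\mathbf{1}}_{\rm{rec}}^{i,j}$ (likewise for interferences), split each pair into a long-distance and a short-distance regime according to the separation $|x(-(t-t_j))-b_i(-(t-t_j))|\gtrless\eps^{\gamma}$, extract angular smallness in the first regime and a small-ball or small-time-gap restriction in the second, and optimize ($\gamma=2/3$ gives $r<1/3$). However, two of your central mechanisms are wrong or missing, and they are exactly where the stated exponents come from.

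First, the cutoff $\vartheta(w)=\mathbf{1}_{\{\mathcal{M}_{\beta}(w)\geq\eps^{\kappa_0}\}}$ is an \emph{upper} bound on the speed ($|w|\lesssim\sqrt{|\log\eps|}$), not a lower bound; it cannot bound relative speeds away from zero, and the $1/|v_j-v_{j-1}|$ singularity is simply integrable in $\Reals^3$ and needs no such protection. The factors $\eps^{-\kappa_0}$ and $\eps^{-2\kappa_0}$ do not come from dividing by a minimal relative speed. They come from a comparison argument you do not have: the paper bounds the \emph{constrained} $v_j$-integral by $C\eps^{\gamma'}e^{-\mu\lambda(0)\Delta t_j}$ and then bounds the \emph{unconstrained} $v_j$-integral from \emph{below} by $C\eps^{2\kappa_0}(\mu t)^{-3/2}e^{-\mu\lambda(0)\Delta t_j}$ (the $\eps^{2\kappa_0}$ is the lower bound for the Maxwellian weights $e(\cdot,\cdot)$ on the support of the cutoffs, and the $(\mu t)^{-3/2}$ is the volume of the region $|v_j|\leq(\mu t)^{-1/2}$ on which the exponential damping is uniformly comparable). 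The quotient yields $\Psi_{n,\eps}[\bm{\mathbf{1}}^{i,j}_{\rm{rec},+}]\leq C\eps^{\gamma'-2\kappa_0}(\mu t)^{3/2}\Psi_{n,\eps}[1]$, and an analogous argument in the short-distance case yields an extra term $C\eps^{\gamma'-\kappa_0}\mu\,\Psi_{n-1,\eps}[1]$.

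Second, your explanation of the powers $(\mu t)^{5/2}$ and $(\mu t)^{7/2}$ ("$n$ free collisions give $(\mu t)^n$, and summing the series\ldots yields a finite sum dominated by the lowest few powers") cannot be right: a sum of integer powers does not produce half-integer exponents. In the paper they arise as $(\mu t)^{3/2}\cdot\bm{n}_2(t)$, where $\bm{n}_2(t)=\sum_n n^2\Psi_{n,\eps}[1](t)\leq C(\mu t+(\mu t)^2)$ is the second moment of the collision number of the limiting Boltzmann process (Lemma~\ref{lem:momBoltz}). This moment estimate is the device that absorbs the combinatorial factor $n^2$ from the sum over pairs $(i,j)$, and it is needed precisely because $\lambda(v)$ is not constant, so the Gallavotti-style explicit resummation of the collision series is unavailable. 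Without identifying this comparison-plus-moment structure, the per-pathology smallness cannot be assembled into the claimed bound.
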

The proof of Lemma \ref{lem:slow} is technical and will be presented in Subsection \ref{ss:lsrec}.
 
Before the proofs of the Lemmas above, we prove a preliminary result which allows to control the second moment of the number of collisions of the limit process. This result is the content of Subsection \ref{ss:2mom}.

\subsection{Moment estimate for the number of collision for the Boltzmann process}\label{ss:2mom}
In order to find good estimates for the function $\bar{\psi}_\eps$, we will 
make use of the fact that the Maxwellian $\mathcal{M}_{\beta}$ is a steady state of the spatially homogeneous Boltzmann process $V(t)$ without annihilation,  i.e. the Markov process with forward equation:
\begin{align*}
\partial_t f(t,v) = \mu Q(\mathcal{M}_{\beta},f(t,\cdot))(v).
\end{align*}
In this way we are able to make use of the loss term in the Boltzmann equation. 
To this end, let $\bm{N}_2(t,t')$ be given by  
\begin{align}\label{eq:n2}
  \bm{N}_2(t,t')= \E[\bm{N}^2(t,t')],\quad	\bm{n}_2(t)  = \bm{N}_2(t,0),
\end{align}
where $\bm{N}(t,t')$ denotes the number of collisions between times $0\leq t'\leq t$ of the Boltzmann process for the velocities $V(t)$ with initial distribution $V_0 \sim \M_{\beta}(v) \ud{v}$, and intensity $\mu$. 
Since the Maxwellian distributions are invariant under the evolution we have:
\begin{align} \label{eq:statmoment}
	\bm{N}_2(t,t')= \bm{n}_2(t-t').
\end{align}

Now our strategy is to first estimate $\bm{n}_2(t)$ for $t\leq \mu^{-1}$, and then extend the estimate using the stationarity of the process.

It will be useful for later purposes to introduce an auxiliary functional. For $n\in \N$ let be $\zeta=\zeta(t_0,v_0,t_1,v_1,\xi_1,\ldots,t_n,v_n,\xi_n)$ a function of the variables defining $n$ collisions. We now define:
\begin{equation}\begin{aligned}\label{eq:defpsinep}
	\Psi_{n,\eps}[\zeta](t) &=	\int_{\Reals^3} \ud{v}\, \mathcal{M}_{\beta}(v) \mu^n   \int_{\Omega_n}\uud{t}_1 \uud{v}_1\uud{\xi}_1\ldots \uud{t}_n \uud{v}_n\uud{\xi}_n  \left( \prod_{k=1}^n \frac{\M_{\beta}(\xi_k)}{|v_k-v_{k-1}|}\right) \\& 
\qquad e^{- \mu \sum_{k=0}^{n-1} \lambda(v_k)(t_{k}-t_{k+1})}\,e^{- \mu\lambda(v_n)t_n}\, \zeta. 
\end{aligned}\end{equation}

\begin{lemma}[Second moment of the number of collisions] \label{lem:momBoltz}
Let $\bm{n}_2(t)$ be given by \eqref{eq:n2}. Then the following identity holds for any $\eps>0$:
	\begin{align} \label{eq:n2psi}
		\bm{n}_2(t) = \sum_{n=0}^\infty n^2 \Psi_{n,\eps}[1].
	\end{align}
Moreover, there exists a constant $C>0$  such that
	for $t\in [0,\infty]$ we can estimate:
	\begin{align} \label{eq:longmoment}
	\bm{n}_2(t) \leq C(\mu t+(\mu t)^2). 	
	\end{align}
\end{lemma}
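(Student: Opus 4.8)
The plan is to prove the two assertions separately, the identity \eqref{eq:n2psi} being essentially a bookkeeping statement and the bound \eqref{eq:longmoment} the substantive part.

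For the identity, I would first observe that $\Psi_{n,\eps}[1]$, defined in \eqref{eq:defpsinep}, is exactly the probability that the Boltzmann process $V(\cdot)$ starting from $V_0\sim\mathcal{M}_{\beta}$ performs precisely $n$ jumps in $[0,t]$. Indeed, the process leaves a state $v$ at rate $\mu\lambda(v)$ and jumps with transition density $k(v,\cdot)/\lambda(v)$, where $k$ is the Carleman kernel \eqref{eq:kern}; integrating out the variables $\xi_k$ over the hyperplanes $E(v_{k-1},v_k)$ by Lemma~\ref{lem:e} turns the factors $\int_{E}\mathcal{M}_{\beta}(\xi_k)/|v_k-v_{k-1}|\,\uud{\xi}_k$ into $k(v_{k-1},v_k)$. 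The product of jump rates $\mu\lambda$, transition densities, and the survival factors $e^{-\mu\lambda(v_i)(t_i-t_{i+1})}$ then reconstitutes exactly the law of the embedded jump chain with sojourn times $t_{i}-t_{i+1}$, the final factor $e^{-\mu\lambda(v_n)t_n}$ enforcing no further jump before time $t$. Since $\mathcal{M}_{\beta}$ is invariant, the initial weight $\mathcal{M}_{\beta}(v)\uud{v}$ is the stationary law, so $\Psi_{n,\eps}[1]=\PP(\bm{N}(t)=n)$ and \eqref{eq:n2psi} follows from $\bm{n}_2(t)=\E[\bm{N}(t)^2]=\sum_n n^2\PP(\bm{N}(t)=n)$.

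For the bound, write $\bm{N}(t):=\bm{N}(t,0)$ and recall that, since $\lambda$ has at most linear growth, $\lambda$ and $\lambda^2$ have finite moments against $\mathcal{M}_{\beta}$; set $\bar\lambda=\int\lambda\,\mathcal{M}_{\beta}$ and $C_2=\int\lambda^2\,\mathcal{M}_{\beta}$. By stationarity $V(s)\sim\mathcal{M}_{\beta}$ for every $s$, so the first moment is immediate:
\[
\E[\bm{N}(t)]=\E\Big[\int_0^t\mu\lambda(V(s))\,\uud{s}\Big]=\bar\lambda\,\mu t .
\]
For the second moment I would use that $\bm{N}(\cdot)$ is a counting process with predictable intensity $\mu\lambda(V(s^-))$, whence $\bm{N}(t)^2=\bm{N}(t)+2\int_0^t \bm{N}(s^-)\,\uud{\bm{N}(s)}$, and after compensation
\[
\E[\bm{N}(t)^2]=\E[\bm{N}(t)]+2\,\E\Big[\int_0^t \bm{N}(s^-)\,\mu\lambda(V(s))\,\uud{s}\Big].
\]
Bounding $\bm{N}(s^-)\leq\bm{N}(t)$ and applying Cauchy--Schwarz with $A:=\int_0^t\mu\lambda(V(s))\,\uud{s}$ gives $\E[\bm{N}(t)^2]\leq \bar\lambda\mu t+2\sqrt{\E[\bm{N}(t)^2]}\,\sqrt{\E[A^2]}$, while stationarity together with Cauchy--Schwarz in $\omega$ yields $\E[A^2]=\mu^2\int_0^t\int_0^t\E[\lambda(V(s))\lambda(V(u))]\,\uud{s}\,\uud{u}\leq C_2(\mu t)^2$. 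Solving the resulting quadratic inequality for $\sqrt{\E[\bm{N}(t)^2]}$ produces the claimed bound $\bm{n}_2(t)\leq C(\mu t+(\mu t)^2)$.

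The one genuine subtlety is that the quadratic-inequality argument presupposes $\E[\bm{N}(t)^2]<\infty$, which is not a priori clear because $\lambda$ is unbounded and the velocity may increase across collisions. This is exactly where the strategy of first treating $t\leq\mu^{-1}$ and then extending by stationarity enters: for $t\leq\mu^{-1}$ finiteness (and the bound) can be read off from the series $\sum_n n^2\Psi_{n,\eps}[1]$ directly, using the survival factors $e^{-\mu\lambda(v_i)(t_i-t_{i+1})}$ to absorb the unbounded factors $\mu\lambda(v_i)$ coming from the jump rates and thereby controlling the velocity integrals by the finite moments of $\lambda$ and $\lambda^2$. Once $\bm{n}_2$ is finite for $t\leq\mu^{-1}$, the stationary-increment identity \eqref{eq:statmoment}, $\bm{N}_2(t,t')=\bm{n}_2(t-t')$, allows decomposing $[0,t]$ into $\lceil\mu t\rceil$ slabs of length $\leq\mu^{-1}$ and writing $\bm{N}(t)$ as a sum of identically distributed increments; Minkowski's inequality in $L^2$ then gives finiteness of $\E[\bm{N}(t)^2]$ for all $t$, after which the quadratic bound above applies. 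I expect this interplay between the unboundedness of $\lambda$ and the taming effect of the loss (survival) term to be the main obstacle, the remaining steps being routine.
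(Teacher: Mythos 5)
Your proof of the identity \eqref{eq:n2psi} matches the paper's: both read the series $\Psi_{n,\eps}[1]$ off as the law of the embedded jump chain of the stationary Boltzmann process (the paper states this tersely as the ``explicit representation of the expectation''). Your route to the bound \eqref{eq:longmoment} is genuinely different, though. The paper never uses martingales: it bounds all survival exponentials by $1$, integrates the time simplex to produce $(\mu t)^n/n!$, and controls the iterated velocity integrals of the Carleman kernel $e(v,v')/|v-v'|$ by splitting each integral into $|v'|\leq 2|v|$ (a factor $C(1+|v|)$ per collision) and $|v'|\geq 2|v|$ (factorial constants coming from the Gaussian decay of $e$, summable against $1/n!$); this gives \eqref{eq:longmoment} for $t\leq (C^{*}\mu)^{-1}$, and the extension to all $t$ is the same stationarity argument you use (Young's inequality in the paper, Minkowski in your write-up).

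The genuine gap is in your resolution of the a priori finiteness of $\E[\bm{N}(t)^2]$, which you correctly identify as the crux but then dispatch with a mechanism that fails. First, there are no ``unbounded factors $\mu\lambda(v_i)$ coming from the jump rates'' left in the series: they cancel exactly against the normalization $k(v_{i-1},\cdot)/\lambda(v_{i-1})$ of the transition densities, which is why $\Psi_{n,\eps}[1]$ contains only $\mu^n\prod_k k(v_{k-1},v_k)$ and the exponentials. The actual unboundedness is that $\int_{\R^3} k(v',v)\ud{v}=\lambda(v')$ grows linearly, so iterating the velocity integrals produces products $\prod_k\lambda(v_k)$ whose control requires all Gaussian moments (equivalently, the paper's two-region kernel estimate), not just $\int\lambda\M_{\beta}$ and $\int\lambda^2\M_{\beta}$. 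Second, if you instead use the survival factors to absorb the $\lambda$'s, as you propose, then the innermost shell integrates to
\begin{equation*}
\mu\int_{0}^{t_{n-1}}\uud{t}_{n}\int_{\R^{3}} \uud{v}_{n}\, k(v_{n-1},v_{n})\,e^{-\mu\lambda(v_{n-1})(t_{n-1}-t_{n})}\,e^{-\mu\lambda(v_{n})t_{n}} \leq \mu\lambda(v_{n-1})\int_{0}^{t_{n-1}} e^{-\mu\lambda(v_{n-1})(t_{n-1}-t_{n})}\uud{t}_{n} \leq 1,
\end{equation*}
i.e.\ a factor $O(1)$ per collision with no time-simplex decay left, and $\sum_n n^2\cdot O(1)$ diverges. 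This happens for every $t>0$, including $t\leq\mu^{-1}$, so ``survival factors plus two moments of $\lambda$'' cannot give finiteness of $\sum_n n^2\Psi_{n,\eps}[1]$.

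There are two ways to close the gap. One is to prove the short-time estimate exactly as the paper does, via the iterated kernel bound; but then your martingale machinery becomes redundant, since the stationarity extension already finishes the proof. The other stays entirely within your framework: localize at the jump times $T_n$. Since $\bm{N}(t\wedge T_n)\leq n$ is trivially square-integrable, the compensator identity (valid in $[0,\infty]$ for nonnegative predictable integrands) and Cauchy--Schwarz give your quadratic inequality for $x_n=\E[\bm{N}(t\wedge T_n)^2]^{1/2}$ with constants independent of $n$, and monotone convergence as $n\to\infty$ then yields \eqref{eq:longmoment} (and non-explosion) in one stroke, with no series estimate at all. With that replacement your argument is a complete and self-contained alternative to the paper's proof.
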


\begin{proof}	We write $\bm{n}_2(t)$ by using the explicit representation of the expectation in \eqref{eq:n2}: 
	\begin{align*}
	\bm{n}_2(t) &=  \sum_{n\geq 0} \int_{\Reals^3} \ud{v}\, \mathcal{M}_{\beta}(v) 	\mu^n  n^2   \int_{\Omega_n} \uud{t}_1 \uud{v}_1\uud{\xi}_1\ldots \uud{t}_n \uud{v}_n\uud{\xi}_n\\ 
	 &\quad \left(\prod_{k=1}^n \frac{\M_{\beta}(\xi_k)}{|v_k-v_{k-1}|} \right)e^{- \sum_{k=0}^{n-1} \mu  \lambda(v_{k})(t_{k}-t_{k+1})}\,e^{- \mu\lambda(v_n)t_n} .	
	\end{align*}
	We observe that this explicit expansion of $\bm{n}_2$ proves  \eqref{eq:n2psi}. We now prove \eqref{eq:longmoment}.
	Since $\lambda(v)>0$, we can estimate $e^{- \sum_{k=0}^{n-1} \mu  \lambda(v_{k})(t_{k}-t_{k+1})- \mu\lambda(v_n)t_n}\leq 1$
	and perform the integral on the time simplex to obtain:
	\begin{align*}
	\bm{n}_2(t) 	&\leq  \int_{\Reals^3}\ud{v} \, \mathcal{M}_{\beta}(v) 	\sum_{n\geq 0} \,\frac{(\mu  t)^n}{n!}  n^2   \int_{\Reals^3 }\ud{v_1} \dots     \int_{\Reals^3 } \ud{v_n}\left(\prod_{k=1}^n \frac{e(v_{k-1},v_{k})}{|v_k-v_{k-1}|} \right) ,
	\end{align*}
	where we used \eqref{eq:IntEst} to handle the integrals with respect to the variable $\xi_k$. 
	Now we estimate iteratively the integrals in $v_k$:
	\begin{align}\label{eq:bdn2}
	\bm{n}_2(t)	&\leq  \int_{\Reals^3} \mathcal{M}_{\beta}(v) 	\left(\sum_{n\geq 0}\frac{(\mu t)^n  n^2}{n!} C^n (1+|v|)^n \right)\ud{v} \nonumber\\
	&\leq  \int_{\Reals^3} \mathcal{M}_{\beta}(v) C(1+|v|) \mu  t + \left(C(1+|v|)\mu  t\right)^2\sum_{n\geq 2}\frac{(\mu t)^{n-2}  }{(n-2)!} C^n (1+|v|)^{n-2} \ud{v}\nonumber	 \\
	&\leq C\int_{\Reals^3} \mathcal{M}_{\beta}(v)  \left((1+|v|) \mu  t + C\left( (1+|v|)\mu  t\right)^2e^{C^*\mu  (1+|v|)t}\right) \ud{v}, 
	\end{align}
	where in the first inequality we used that 
	$$\int_{\R^3}\ud{v'}\,  (1+|v|)^k \frac{e(v,v')}{|v-v'|} =\int_{|v'|\leq 2|v|}\ud{v'}\,(1+|v|)^k  \frac{e(v,v')}{|v-v'|}+\int_{|v'|\geq 2|v|}\ud{v'} \,  (1+|v|)^k \frac{e(v,v')}{|v-v'|} $$
	with
	$$ \int_{|v'|\leq 2|v|}\ud{v'}\,(1+|v|)^k  \frac{e(v,v')}{|v-v'|} \leq C (1+|v|)^{k+1}$$	
	and 
	$$ \int_{|v'|\geq 2|v|}\ud{v'} \,  (1+|v|)^k \frac{e(v,v')}{|v-v'|} \leq \tilde {C}^{k} \left(\frac{k+1}{2}\right)! \quad \text{for}\; k\, \text{odd}, $$
	$$ \int_{|v'|\geq 2|v|}\ud{v'} \,  (1+|v|)^k \frac{e(v,v')}{|v-v'|} \leq \tilde {C}^{k} \left(\frac{k+2}{2}\right)!\quad \text{for}\; k\, \text{even}.$$
	Then, \eqref{eq:bdn2} proves the estimate \eqref{eq:longmoment} for $t \in [0,(C^*\mu) ^{-1}]$.  	
 We now extend the estimate to arbitrary times. For any $k \in \N$ and for any $t_{\ast}>0$ we can estimate:
	\begin{align*}
		\bm{n}_2(k t_*) 	&= \E\left[ \bm{N}^2(k t_*,0)\right] \\&
		=\E\left[\left(\sum_{\ell=1}^k n(\ell t_*,(\ell -1)t_*)\right)^2\right]
		= \E\left[\left(\sum_{\ell=1}^k \sum_{j=1}^k \bm{N}(\ell t_*,(\ell -1)t_*)\bm{N}(j t_*,(j -1)t_*)\right)\right]  \\& \leq  \frac 1 2 \E\left[\left(\sum_{\ell=1}^k \sum_{j=1}^k \big( \bm{N}^2(\ell t_*,(\ell -1)t_*)+\bm{N}^2(j t_*,(j -1)t_*)\big)\right)\right]
		\leq k^2 \bm{n}_2(t_*). 
	\end{align*}
	where we used in the first inequality Young's inequality and in the second one \eqref{eq:statmoment}.
	Picking $t_* = (C^*\mu)^{-1}$ we obtain the claim.
\end{proof}

\subsection{Proof of Lemma \ref{lem:fast}}\label{ss:fastpart}

In order to prove Lemma \ref{lem:fast} we first define $R_{\ep}=R(\ep,\kappa_0)=\ep^{-\kappa_0}$ and $\ell_\ep=\ell (\ep,\kappa_0)$ such that $\M_{\beta}(\ell_{\ep})=\ep^{\kappa_0}$. We observe that $\psi^{\infty}_\eps$ defined as in \eqref{eq:infpsi} can be estimated by:
\begin{equation}\label{eq:splitinfpsi}
\psi^{\infty}_\eps\leq \psi^{\infty}_{\eps,1}+\psi^{\infty}_{\eps,2}+\psi^{\infty}_{\eps,3}
\end{equation}
where 
\begin{align}\label{eq:infpsi1}
 \psi^{\infty}_{\eps,1}(t) = 
&\int_{\Reals^3} \ud{v}\, \mathcal{M}_{\beta}(v) \sum_{n\geq 0}\mu^n \int_{\Omega_n}\uud{t}_1 \uud{v}_1\uud{\xi}_1\ldots \uud{t}_n \uud{v}_n\uud{\xi}_n\left( \prod_{i=1}^n \frac{\M_{\beta}(\xi_i)}{|v_i-v_{i-1}|}  \right) \mathbf{1}_{\{n\geq 2\}} \nonumber \\&
\quad  e^{- \mu \sum_{i=0}^{n-1} \lambda(v_i)(t_{i}-t_{i+1})}\,e^{- \mu\lambda(v_n)t_n} ,
\end{align}
and
\begin{align}\label{eq:infpsi2}
 \psi^{\infty}_{\eps,2}(t) = &\int_{\Reals^3} \ud{v}\, \mathcal{M}_{\beta}(v) \sum_{n\geq 0}\mu^n \int_{\Omega_n}\uud{t}_1 \uud{v}_1\uud{\xi}_1\ldots \uud{t}_n \uud{v}_n\uud{\xi}_n   \left( \prod_{i=1}^n \frac{\M_{\beta}(\xi_i)}{|v_i-v_{i-1}|}  \right) \mathbf{1}_{\{|v|\geq \frac{ \ell_{\ep}}{2}\}} \nonumber \\&
\quad \mathbf{1}_{\{n\leq 1\}} e^{- \mu \sum_{i=0}^{n-1} \lambda(v_i)(t_{i}-t_{i+1})}\,e^{- \mu\lambda(v_n)t_n}, 
\end{align}
and 
\begin{align}\label{eq:infpsi3}
 \psi^{\infty}_{\eps,3}(t) = &\int_{\Reals^3} \ud{v}\, \mathcal{M}_{\beta}(v) \sum_{n\geq 0}\mu^n \int_{\Omega_n}\uud{t}_1 \uud{v}_1\uud{\xi}_1\ldots \uud{t}_n \uud{v}_n\uud{\xi}_n   \left( \prod_{i=1}^n \frac{\M_{\beta}(\xi_i)}{|v_i-v_{i-1}|}  \right) \mathbf{1}_{\{|v|\leq \frac{ \ell_{\ep}}{2},\,|v_1 |\geq   \ell_{\ep}\}} \nonumber \\&
\quad \mathbf{1}_{\{n\leq 1\}} e^{- \mu \sum_{i=0}^{n-1} \lambda(v_i)(t_{i}-t_{i+1})}\,e^{- \mu\lambda(v_n)t_n} .
\end{align}
We consider \eqref{eq:infpsi1}. Using the same argument of the proof of Lemma \ref{lem:momBoltz} we get
\begin{align}\label{eq:infpsi1_bis}
 \psi^{\infty}_{\eps,1}(t) & = \int_{\Reals^3} \ud{v}\, \mathcal{M}_{\beta}(v) \sum_{n\geq 2}\mu^n \int_{\Omega_n}\uud{t}_1 \uud{v}_1\uud{\xi}_1\ldots \uud{t}_n \uud{v}_n\uud{\xi}_n   \left( \prod_{i=1}^n \frac{\M_{\beta}(\xi_i)}{|v_i-v_{i-1}|}  \right) \nonumber \\&
\qquad  e^{- \mu \sum_{i=0}^{n-1} \lambda(v_i)(t_{i}-t_{i+1})}\,e^{- \mu\lambda(v_n)t_n}  \nonumber \\&
\leq \int_{\Reals^3} \ud{v}\, \mathcal{M}_{\beta}(v) \left(\sum_{n\geq 2}\frac{(\mu t)^n  }{n!} C^n (1+|v|)^n \right) \leq \int_{\Reals^3} \ud{v}\, \mathcal{M}_{\beta}(v) (\mu t C(1+|v|))^2 e^{\mu t C(1+|v|)}.
\end{align}
We now choose $t=R_{\ep}^{-1}$. Then, from \eqref{eq:infpsi1_bis}, for $\ep>0$ sufficiently small we obtain 
\begin{align} \label{eq:psi1final}
 \psi^{\infty}_{\eps,1}(R_{\ep}^{-1})\leq (R_{\ep}^{-1} \mu C)^2. 
 \end{align}
We now  proceed by estimating  the function $\psi^{\infty}_{\eps,2}$ defined in \eqref{eq:infpsi2}:
\begin{align*}
 \psi^{\infty}_{\eps,2}(t) 	&= \int_{\Reals^3} \ud{v}\, \mathcal{M}_{\beta}(v)\mu \int_{\Omega_1}\uud{t}_1\uud{v}_1\uud{\xi_1} 
 \frac{\M_{\beta}(\xi_1)}{|v_1-v|} \mathbf{1}_{\{|v|\geq \ell_{\ep}/2\}} e^{- \mu (\lambda(v_0)(t-t_{1})-\lambda(v_1)t_1)} \\
 							&\leq C \mu t  \int_{\Reals^3} \ud{v}\, \mathcal{M}_{\beta}(v) (1+|v|) \mathbf{1}_{\{|v|\geq \ell_{\ep}/2\}}.
\end{align*}
Then for any $\kappa \in (0,\kappa_0)$ we obtain
\begin{align} \label{eq:psi2final}
	\psi^{\infty}_{\eps,2}(t) \leq C \mu t  \eps^{\kappa}.  	
\end{align}
It remains to estimate $\psi^{\infty}_{\eps,3}$ (cf. \eqref{eq:infpsi3}). To this end we use
that for $|v|\leq \ell_\ep/2$ and $v_1\geq \ell_{\ep}$ we have
\begin{align}
	e(v,v_1) =  \mathcal{M}_{\beta}\left(v_1 \cdot \frac{v_1-v}{|v_1-v|}\right) \leq \mathcal{M}_{\beta} \left(\frac12 v_1\right). 
\end{align}
Therefore we can bound the contribution of $\psi^{\infty}_{\eps,3}$ by:
\begin{align*}
 \psi^{\infty}_{\eps,3}(t) = &\int_{\Reals^3} \ud{v}\, \mathcal{M}_{\beta}(v)\mu \int_{\Omega_1}\uud{t}_1\uud{v}_1\uud{\xi}_1
 \frac{\M_{\beta}(\xi_1)}{|v_1-v|} \mathbf{1}_{\{|v|\leq \frac{ \ell_{\ep}}{2},\,|v_1 |\geq   \ell_{\ep}\}}e^{- \mu (\lambda(v_0)(t-t_{1})-\lambda(v_1)t_1)}   \\
 \leq & \, t\int_{\Reals^3} \ud{v}\, \mathcal{M}_{\beta}(v)\mu \int_{\Reals^3} \uud{v}_1
 \frac{e(v,v_1)}{|v_1-v|} \mathbf{1}_{\{|v|\leq \frac{ \ell_{\ep}}{2},\,|v_1 |\geq   \ell_{\ep}\}} \\
 \leq &\, t\int_{\Reals^3} \ud{v}\, \mathcal{M}_{\beta}(v)\mu \int_{\Reals^3} \uud{v}_1
 \frac{\mathcal{M}_{\beta}(\frac12 v_1)}{|v_1-v|} \mathbf{1}_{\{|v|\geq \ell_{\ep}\}}.
\end{align*}
As before, for $\kappa \in(0,\kappa_0)$ we obtain:
\begin{align}\label{eq:psi3final}
	\psi^{\infty}_{\eps,3}(t) \leq  C \mu t  \eps^\kappa. 	
\end{align}
Inserting the estimates \eqref{eq:psi1final}-\eqref{eq:psi3final} into \eqref{eq:splitinfpsi} yields for $\kappa \in (0,\kappa_0)$:
\begin{align}\label{eq:psiinftyshort}
	\psi^{\infty}_{\eps}(R_{\ep}^{-1}) \leq C (\mu+\mu^2) R_{\ep}^{-1} \eps^\kappa.  
\end{align}
Exploiting the stationarity property (cf. Section \ref{ss:2mom}) of the process we obtain the claim of Lemma~\ref{lem:fast}:
\begin{align} \label{eq:psiinftyfinal}
\psi^{\infty}_{\eps}(t)\leq \frac{t}{R_{\ep}^{-1}}  \psi^{\infty}_{\eps}(R_{\ep}^{-1})  \leq C{\ep}^{\kappa} t (\mu +\mu^2). 
\end{align}

\subsection{Proof of Lemma \ref{lem:slow}}\label{ss:lsrec} 

The first step of the proof is similar to the one in \cite{BNPP}. We estimate the total probability of each pathological event (recollision or interference) by estimating the probability of a pathology for each possible pair $(i,j)$ of obstacles, i.e. 
for $n\in \N$ and $1\leq i<j\leq n$ we write:
$$\left(\prod_{k=0}^n \vartheta(v_k) \right) \bm{\mathbf{1}}_{\rm{rec}} \leq \sum_{i} \sum_{j>i} \bm{\mathbf{1}}_{\rm{rec}}^{i,j},\qquad \left(\prod_{k=0}^n \vartheta(v_k) \right)\bm{\mathbf{1}}_{\rm{int}}\leq \sum_{i} \sum_{j>i} \bm{\mathbf{1}}_{\rm{int}}^{i,j},$$
where
\begin{align}
	\bm{\mathbf{1}}_{\rm{rec}}^{i,j}&=\bm{\mathbf{1}}_{\rm{rec}}^{i,j}(t_0,v_0,t_1,v_1, \ldots, t_n,v_n; \xi_i) \nonumber \\&
	 = \prod_{k=0}^n \vartheta(v_k) \mathbf{1}_{\{\left(x(-s)\in \partial B_\eps(b_i(-s),\ep), \text{ for some  $s\in(t-t_{j},t-t_{j+1})$}\right)\}}, \label{eq:chirecij} \\
	\bm{\mathbf{1}}_{\rm{int}}^{i,j}&=\bm{\mathbf{1}}_{\rm{int}}^{i,j}(t_0,v_0,t_1,v_1,\ldots, t_n,v_n;\xi_j) \nonumber \\&
	= \prod_{k=0}^n \vartheta(v_k) \mathbf{1}_{\{\left( x(-s) \in B(b_j(-s),\ep) \text{ for some } s\in(t-t_{i},t-t_{i+1})\right)\}}.\label{eq:chiintij} 
\end{align}
We recall that $\vartheta(w)= \mathbf{1}_{\{\mathcal{M}_{\beta}(w)\geq \eps^{\kappa_0}\}},\quad \text{with}\;\kappa_0>0$.

It is useful to further distinguish the cases when the distance of the tagged
particle to the $i$th obstacle is small and when it is large at the last collision event before the recollision. To this end we introduce:
\begin{align}\label{eq:bdpatij}
	\bm{\mathbf{1}}_{\rm{rec}}^{i,j}	&\leq  \bm{\mathbf{1}}_{\rm{rec},-}^{i,j} +\bm{\mathbf{1}}_{\rm{rec},+}^{i,j}, \quad
	\bm{\mathbf{1}}_{\rm{rec},\pm }^{i,j}	= \bm{\mathbf{1}}_{\rm{rec}}^{i,j}\cdot \mathbf{1}_{\{|x(-(t-t_j))-b_i(-(t-t_j))|\cdot ({\pm 1})\geq \ \eps^\gamma \cdot ({\pm 1})\}}, \\ 
	\bm{\mathbf{1}}_{\rm{int}}^{i,j}	&\leq  \bm{\mathbf{1}}_{\rm{int},-}^{i,j} +\bm{\mathbf{1}}_{\rm{int},+}^{i,j}, \quad
	\bm{\mathbf{1}}_{\rm{int},\pm }^{i,j}	= \bm{\mathbf{1}}_{\rm{int}}^{i,j}\cdot \mathbf{1}_{\{|x(-(t-t_j))-b_i(-(t-t_j))|\cdot ({\pm 1})\geq \  \eps^\gamma \cdot ({\pm 1})\}}, 
	\end{align}
where $\gamma\in (0,1)$ will be fixed later. 

Then, using the definition of $\Psi_{n,\eps}$ given as in \eqref{eq:defpsinep}, we can estimate $\bar{\psi}_\eps$ given as in \eqref{eq:barpsi} by 
\begin{align}\label{eq:splpsibar}
	\bar{\psi}_\eps(t) 	&\leq \sum_{n=0}^\infty \sum_{1\leq i<j \leq n}\Psi_{n,\eps}[\bm{\mathbf{1}}^{i,j}_{\rm{rec},+} +\bm{\mathbf{1}}^{i,j}_{\rm{int},+}](t) + \sum_{n=0}^\infty \sum_{1\leq i<j \leq n}\Psi_{n,\eps}[\bm{\mathbf{1}}^{i,j}_{\rm{rec},-}+\bm{\mathbf{1}}^{i,j}_{\rm{int},-}](t)   .
\end{align}
\begin{lemma} [Long distance recollisions/interferences] \label{lem:long}
Let be $\kappa_0>0$ and $\gamma\in (0,1)$ be as in \eqref{eq:bdpatij}. 
	For $\gamma' \in (0,1-\gamma)$ there exists a constant $C>0$ such that for every $n\in \N$ and $1\leq i<j\leq n$ there holds:
	\begin{align} \label{longbound}
	   \Psi_{n,\eps}[\bm{\mathbf{1}}^{i,j}_{\rm{rec},+}](t) &\leq C  \eps^{\gamma'-2\kappa_0} (\mu  t)^{\frac32} \Psi_{n,\eps}[1](t), \\
	   \Psi_{n,\eps}[\bm{\mathbf{1}}^{i,j}_{\rm{int},+}](t) &\leq C  \eps^{\gamma'-2\kappa_0} (\mu  t)^{\frac32} \Psi_{n,\eps}[1](t). \label{longbound2}
	\end{align}
\end{lemma}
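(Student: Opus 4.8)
The plan is to reduce both the recollision and the interference event to a single geometric mechanism, to extract the $\eps$-smallness from one velocity integration, and then to compare the result to the \emph{unconstrained} integral $\Psi_{n,\eps}[1]$ that sits on the right-hand side of \eqref{longbound}--\eqref{longbound2}. The point of phrasing the estimate relative to $\Psi_{n,\eps}[1]$ (rather than bounding the pathology directly) is structural: after summing over $1\le i<j\le n$ and over $n$, the combinatorial factor $\sum_n \binom n2\,\Psi_{n,\eps}[1]$ is controlled by the second moment of the number of collisions via the identity \eqref{eq:n2psi} and the bound \eqref{eq:longmoment} of Lemma \ref{lem:momBoltz}. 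This is exactly what turns the $(\mu t)^{3/2}$ appearing here into the $(\mu t)^{5/2}+(\mu t)^{7/2}$ of Lemma \ref{lem:slow}, so it is worth keeping this shape in mind throughout.

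First I would fix $n$ and the pair $(i,j)$ and isolate the single free-flight segment of the backward limit flow on which the pathology occurs. For $\bm{\mathbf{1}}_{\rm{rec},+}^{i,j}$ this is the $j$-th segment $s\in(t-t_j,t-t_{j+1})$, on which the tagged particle moves along a straight line with velocity $v_j$ and the ``target'' is the $\eps$-ball around the obstacle $b_i(-s)$; for $\bm{\mathbf{1}}_{\rm{int},+}^{i,j}$ the situation is symmetric, the straight segment carrying velocity $v_i$ and the target being the future obstacle $b_j$. In both cases the event asserts that a straight relative trajectory, starting at relative distance $d=|x(-(t-t_j))-b_i(-(t-t_j))|\ge \eps^\gamma$ (this is precisely the meaning of the $+$ restriction in \eqref{eq:bdpatij}), approaches the origin within $\eps$.

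The geometric core is a cone estimate. Writing the relevant relative velocity as $\rho\,\hat\omega$ with $\hat\omega\in\mathbb S^2$, the closest-approach distance equals $d\sin\theta$ with $\theta=\angle(\hat\omega,D)$, so the event forces $\sin\theta\le \eps/d\le \eps^{1-\gamma}$; hence the admissible directions occupy solid angle $\lesssim (\eps/d)^2\le \eps^{2(1-\gamma)}$, while the requirement that the closest approach fall inside the segment imposes a lower cut-off on $\rho$. I would then integrate out the corresponding velocity ($v_j$ for recollisions, $v_i$ for interferences) against this constraint and compare with the same integral stripped of the indicator, namely the one inside $\Psi_{n,\eps}[1]$. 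The two adjacent collision kernels $\M_\beta(\xi)/|v-v'|$ that share this velocity are neither radial nor bounded, so here I would use the velocity cut-offs $\vartheta$ from \eqref{eq:barpsi}--\eqref{eq:defpsinep}: on their support $\M_\beta(v_k)\ge \eps^{\kappa_0}$ and $|v_k|\lesssim (\log\eps^{-1})^{1/2}$, which lets me dominate the ratios $\M_\beta(\xi)/\M_\beta(v_k)$ by $\eps^{-\kappa_0}$ and pull the singular factors $|v-v'|^{-1}$ out of the angular integration. This is where the two factors $\eps^{-\kappa_0}$ combine into $\eps^{-2\kappa_0}$, and where the clean solid-angle exponent $2(1-\gamma)$ is degraded to an arbitrary $\gamma'<1-\gamma$: controlling the $|v-v'|^{-1}$ singularity on the thin cone rather than over all of $\R^3$ costs part of the geometric gain.

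Finally I would reinstate the remaining time-simplex and radial velocity integrations; the recollision time-window together with the distance integration are bounded by $(\mu t)^{3/2}$ times the factors already present in $\Psi_{n,\eps}[1]$, which yields \eqref{longbound}, and \eqref{longbound2} follows verbatim after exchanging the segment-carrying collision with the target obstacle. I expect the genuine obstacle to be exactly this comparison step, i.e. matching the geometrically constrained single-velocity integral to the full unconstrained one inside $\Psi_{n,\eps}[1]$ in the presence of the two shared, singular, non-radial collision kernels. The cone estimate and the time bookkeeping are elementary once the velocity cut-offs $\vartheta$ have decoupled these kernels, at the price of the $\eps^{-2\kappa_0}$ and of the loss from $1-\gamma$ to $\gamma'$.
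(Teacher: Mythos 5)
Your strategy is the paper's: isolate the $v_j$-integration, observe that the long-distance condition $|x_{\rm rel}|\geq\eps^\gamma$ confines the relative velocity to a cone of opening $\eps^{1-\gamma}$, intersect with the region $\{\M_\beta(w)\geq\eps^{\kappa_0}\}$ (so the cone becomes a cylinder of radius $\eps^{\gamma'}$ for any $\gamma'<1-\gamma$), integrate the two shared singular kernels over that cylinder to get $\eps^{\gamma'}$ rather than the naive solid-angle gain $\eps^{2(1-\gamma)}$, and compare with the unconstrained integral inside $\Psi_{n,\eps}[1]$, paying $\eps^{-2\kappa_0}$ for the two factors $e(\cdot,\cdot)$ attached to $v_j$. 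All of this matches the paper's proof.

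The genuine gap is precisely the step you defer as ``the genuine obstacle'': the quantitative comparison of the constrained $v_j$-integral with the unconstrained one, and with it the origin of the factor $(\mu t)^{3/2}$. You attribute it to ``the recollision time-window together with the distance integration'', but in the long-distance case no time-window constraint is used (that mechanism belongs to the short-distance Lemma \ref{lem:short}). In the paper the $(\mu t)^{3/2}$ arises from \emph{lower-bounding} the unconstrained integral
\begin{equation*}
\int_{\Reals^3}\frac{e(v_{j-1},v_j)\,e(v_j,v_{j+1})}{|\Delta v_j|\,|\Delta v_{j+1}|}\,e^{-\mu\lambda(v_j)\Delta t_j}\,\vartheta(v_{j-1})\vartheta(v_j)\vartheta(v_{j+1})\ud{v_j}:
\end{equation*}
one restricts to $|v_j|\leq(\mu t)^{-1/2}$ so that the exponential damping can be compared uniformly to $e^{-\mu\lambda(0)\Delta t_j}$ (the same reference exponential that upper-bounds the constrained integral), which costs a volume factor $(\mu t)^{-3/2}$, while $e(\cdot,\cdot)\geq\eps^{\kappa_0}$ on the support of $\vartheta$ supplies the $\eps^{2\kappa_0}$. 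Dividing the upper bound $C\eps^{\gamma'}e^{-\mu\lambda(0)\Delta t_j}$ by this lower bound yields $\eps^{\gamma'-2\kappa_0}(\mu t)^{3/2}$ per collision slot, and reinserting into the remaining integrations gives \eqref{longbound}; \eqref{longbound2} follows by the same geometry with the roles of the segment and the target obstacle exchanged. Without this explicit two-sided comparison (and in particular without handling the non-uniform exponential $e^{-\mu\lambda(v_j)\Delta t_j}$, which you do not mention), the stated bound does not follow from the cone estimate alone.
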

\begin{proof}
	We prove the estimate \eqref{longbound}, since the proof of \eqref{longbound2} uses the same geometrical argument.  
	Consider a long-distance recollision between the $i$-th and $j$-th collision event, with $j>i$. 	At the collision time $t_{j}$, the $i$-th obstacle and the tagged particle
	have a distance of:
	\begin{align*}
	|x_{\rm{rel}}(-(t-t_{j}))|= |(x-b_i)(-(t-t_j))|\geq \eps^\gamma,
	\end{align*}
	since we consider the case of a long-distance recollision.
	Therefore  their relative velocity  can only vary over a cone around $x_{\rm{rel}}$:
	\begin{align*}
	v_j-u_i &\in C_{\eps^{1-\gamma}} (x_{\rm{rel}}(-(t-t_j)), \text{ where:} \\
	C_{r} (x) &= \left\{w \in \Reals^3: \left|\arccos\left(\frac{x\cdot w}{|x||w|}\right)\right|\leq r\right\}.
	\end{align*}
	For every $\gamma'\in (0,1-\gamma)$ there exists $\eps_0=\eps_0(\gamma')>0$ such that for any $0<\eps <\eps_0(\gamma')$, we have 
	\begin{align}
	C_{\eps^{1-\gamma}} (x) \cap \{w\in \Reals^3: \mathcal{M}_{\beta}(w)\geq \eps^{\kappa_0}\} \subset \Cyl_{\eps^{\gamma'}}(x), 
	\end{align}
	where 
	\begin{align}
	\Cyl_r(x) := \{w\in \Reals^3: \operatorname{dist}(w,x\Reals)\leq r\}
	\end{align}
	is the cylinder with radius $r$ and axis given by the line generated by $x$. 
	For $n\in \N$, $j>i$, consider: $\Psi_{n,\eps}[\bm{\mathbf{1}}^{i,j}_{\rm{rec},+}](t)$:
	\begin{align*}
	\Psi_{n,\eps}[\bm{\mathbf{1}}^{i,j}_{\rm{rec},+}](t) &=	\int_{\Reals^3} \ud{v}\, \mathcal{M}_{\beta}(v) \mu ^n  \int_{\Omega_n}\uud{t}_1 \uud{v}_1\uud{\xi}_1\ldots \uud{t}_n\uud{v}_n\uud{\xi}_n \left( \prod_{k=1}^n \frac{\M_{\beta}(\xi_k)}{|v_k-v_{k-1}|}\right) \\ &
	\quad e^{- \mu \sum_{k=0}^{n-1}\lambda(v_{k})(t_k-t_{k+1}) } \,e^{- \mu \lambda(v_{n})t_n } \bm{\mathbf{1}}^{i,j}_{\rm{rec},+}.
	\end{align*}
	We now inspect the integral in $v_j$, for all other integration variables arbitrary but fixed. Since $\bm{\mathbf{1}}_{\rm{rec}}^{i,j}$ is independent on $\xi_k$ for any $k\neq i$ (cf.~\eqref{eq:chirecij}), we can perform, using \eqref{eq:IntEst}, all the integrals in $\xi_k$ but $\xi_i$. Then, we get
	\begin{align*}
	\Psi_{n,\eps}[\bm{\mathbf{1}}^{i,j}_{\rm{rec},+}](t) &=	\int_{\Reals^3} \ud{v}\, \mathcal{M}_{\beta}(v) \mu ^n  \int_{0}^{t}\uud{t}_1 \int_{\R^3}\uud{v}_1\ldots  \int_{0}^{t_{i-1}}\uud{t}_{i} \int_{\R^3}\uud{v}_i\int_{E(v_{i-1},v_{i})}\uud{\xi}_{i}\frac{\M_{\beta}(\xi_i)}{|v_i-v_{i-1}|}\ldots \\&
	\quad \int_{0}^{t_{n-1}}\uud{t}_n \int_{\R^3}\uud{v}_n  \left( \prod_{k\neq i } \frac{e(v_{k-1},v_{k})}{|v_k-v_{k-1}|}\right) e^{- \mu \sum_{k=0}^{n-1}\lambda(v_{k})(t_k-t_{k+1}) } \,e^{- \mu \lambda(v_{n})t_n } \bm{\mathbf{1}}^{i,j}_{\rm{rec},+}.
	\end{align*}
	For simplicity we write $\Delta v_j= |v_j-v_{j-1}|$, $\Delta t_j=t_j-t_{j+1}$. Using the geometrical considerations above we then have:
	\begin{align*}
	&\int_{\Reals^3} \frac{e(v_{j-1},v_j)e(v_{j},v_{j+1})}{|\Delta v_j| |\Delta v_{j+1}|} e^{-\mu \lambda(v_j) \Delta t_j} \bm{\mathbf{1}}^{i,j}_{\rm{rec}} \vartheta(v_{j-1})\vartheta(v_j) \vartheta(v_{j+1})\ud{v_j}\\
	\leq &C  e^{-\mu \lambda(0) \Delta t_j}\int_{\Cyl_{\eps^{\gamma'}}(x_{\rm{rel}}(-(t-t_{j})))+\xi_i} \frac{1}{|\Delta v_j| |\Delta v_{j+1}|}   \vartheta(v_{j-1}) \vartheta(v_j) \vartheta(v_{j+1})  \ud{v_j},
	\end{align*}
	where we used the monotonicity of the function $\lambda$ and the boundedness of $e(\cdot,\cdot)$. 
	Using the localization of $v_j$, for every $\gamma'\in (0,1-\gamma)$ we can then bound the integral,using that $\vartheta(v_{j-1})\vartheta(v_{j+1})\leq 1$,  by:
	\begin{align}\label{eq:longvj}
	\int_{\Reals^3} \frac{e(v_{j-1},v_j)e(v_{j},v_{j+1})}{|\Delta v_j| |\Delta v_{j+1}|} e^{-\mu \lambda(v_j) \Delta t_j} \bm{\mathbf{1}}^{i,j}_{\rm{rec},+} \vartheta(v_{j-1})\vartheta(v_j) \vartheta(v_{j+1})\ud{v_j}		
	\leq C \eps^{\gamma'} e^{-\mu \lambda(0) \Delta t_j}.
	\end{align}
	We now compare the estimate \eqref{eq:longvj} to the value of the integral
	without $\bm{\mathbf{1}}^{i,j}_{\rm{rec},+}$:
	\begin{align*}
	&\int_{\Reals^3} \frac{e(v_{j-1},v_j)e(v_{j},v_{j+1})}{|\Delta v_j| |\Delta v_{j+1}|} e^{-\mu \lambda(v_j) \Delta t_j} \vartheta(v_{j-1}) \vartheta(v_j) \vartheta(v_{j+1})\ud{v_j}   	
	\\&
	\geq 	C \eps^{2\kappa_0}\int_{\Reals^3} \frac{1}{|\Delta v_j| |\Delta v_{j+1}|} e^{-\mu \lambda(v_j) \Delta t_j} \vartheta(v_{j-1}) \vartheta(v_j) \vartheta(v_{j+1}). 
	\end{align*}
	For $|v_j|\leq (\mu t)^{-\frac12}$ we can estimate $e^{-\mu \lambda(v_j) \Delta t_j}\geq c e^{-\mu \lambda(0) \Delta t_j} $. Moreover, thanks to the $ \vartheta(v_{j-1}),  \vartheta(v_{j+1})$  we have that $|v_{j-1}|,\, |v_{j+1}|\leq C\,|\log(\ep)|$. Therefore, for $\kappa \in (0,2\kappa_0)$ we obtain the lower bound:
	\begin{align*}
		&\int_{\Reals^3} \frac{e(v_j,v_{j-1})e(v_{j},v_{j+1})}{|\Delta v_j| |\Delta v_{j+1}|} e^{-\mu \lambda(v_j) \Delta t_j}  \vartheta(v_{j-1})\vartheta(v_j) \vartheta(v_{j+1})\ud{v_j} 	\geq  C \eps^{\kappa} (\mu t)^{-\frac32} e^{-\mu \lambda(0) \Delta t_j}.
	\end{align*}
	Combining the preceding estimates we obtain:
	\begin{align*}
		\Psi_{n,\eps}[\bm{\mathbf{1}}^{i,j}_{\rm{rec},+}](t) \leq C \eps^{\gamma'-2\kappa_0} (\mu  t)^{\frac32} {\Psi}_{n,\eps}[\vartheta(v_{j-1}) \vartheta(v_j) \vartheta(v_{j+1})](t)\leq C \eps^{\gamma'-2\kappa_0} (\mu  t)^{\frac32} {\Psi}_{n,\eps}[1](t).	
	\end{align*}
\end{proof}

\begin{lemma}[Short distance recollisions/interferences] \label{lem:short} 
Let be $\kappa_0>0$ and $\gamma\in (0,1)$ be as in \eqref{eq:bdpatij}. 
	For $\gamma'\in (0,\frac12 \gamma )$ there exists a constant $C>0$ such that for every $n\in \N$ and $1\leq i<j\leq n$ there holds:
		\begin{align} \label{shortbound}
		{\Psi}_{n,\eps}[\bm{\mathbf{1}}^{i,j}_{\rm{rec},-}] (t)\leq  C \eps^{\gamma'-\kappa_0}\left( \eps^{-\kappa_0}(\mu t)^\frac32{\Psi}_{n,\eps}[1](t) + \mu{\Psi}_{n-1,\eps}[1](t)  \right), \\
		{\Psi}_{n,\eps}[\bm{\mathbf{1}}^{i,j}_{\rm{int},-}] (t)\leq  C \eps^{\gamma'-\kappa_0}\left( \eps^{-\kappa_0}(\mu t)^\frac32{\Psi}_{n,\eps}[1](t) + \mu{\Psi}_{n-1,\eps}[1](t)  \right). \label{shortbound2}
	\end{align}
\end{lemma}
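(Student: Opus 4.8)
The plan is to reproduce, for the short-distance indicator, the two-step scheme of Lemma~\ref{lem:long}, the essential new feature being that a short recollision can be charged either to the free velocity in the recolliding segment (producing the first term of \eqref{shortbound}) or to the colliding obstacle itself, in which case one collision is effectively removed (producing the second term). As in Lemma~\ref{lem:long}, I would first use Lemma~\ref{lem:e} to carry out every hyperplane integral $\int_{E(v_{k-1},v_k)}\M_\beta(\xi_k)\,\ud{\xi_k}$ with $k\neq i$, replacing the corresponding factors by $e(v_{k-1},v_k)/|v_k-v_{k-1}|$. Because $\bm{\mathbf{1}}^{i,j}_{\rm{rec},-}$ constrains only the obstacle velocity $\xi_i$ (through the backward position $b_i(-s)$) together with the tagged velocities and times, this leaves a single surviving integral $\int_{E(v_{i-1},v_i)}\M_\beta(\xi_i)\,\ud{\xi_i}$ alongside the full velocity/time integration and the recollision constraint.

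Next I would unpack the geometry. At the boundary time $t_j$ the relative position $x_{\rm{rel}}(-(t-t_j))=(x-b_i)(-(t-t_j))$ has length $d\le\eps^\gamma$, and along the $j$-th segment the relative motion is rectilinear with relative velocity $v_j-\xi_i$; meeting $\partial B_\eps(b_i)$ forces the impact parameter to be $\le\eps$, i.e. the direction of $v_j-\xi_i$ to lie in a cone about $x_{\rm{rel}}$ of half-angle $\lesssim\eps/d$. This angular width is small only when $d$ is not too close to $\eps$, which dictates a split at an intermediate distance $d=\eps^{a}$, with $a\in(\gamma,1)$ to be optimized.

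For $d\ge\eps^{a}$ I would bound the $v_j$-integral exactly as in \eqref{eq:longvj}: the cutoffs $\vartheta(v_{j-1})\vartheta(v_j)\vartheta(v_{j+1})$ keep the speeds $\lesssim|\log\eps|$, so the admissible $v_j$ lie in a thin cylinder; after restricting to $|v_j|\le(\mu t)^{-1/2}$ to compare with the unconstrained integral and reinstating the two velocity cutoffs (a factor $\eps^{-2\kappa_0}$), this regime contributes a term of the shape $\eps^{(\cdot)-2\kappa_0}(\mu t)^{3/2}\Psi_{n,\eps}[1]$. For $d<\eps^{a}$ the cone is too wide to help, and I would instead integrate over the $i$-th collision data $(t_i,v_i,\xi_i)$: the short-distance condition pins $b_i(-(t-t_j))$ into a ball of radius $\eps^{a}$ about the now-fixed segment-$j$ trajectory and the recollision pins one further direction, confining these data to a set whose measure carries a single cutoff (a factor $\eps^{-\kappa_0}$). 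Performing the integrals deletes exactly one collision while retaining the prefactor $\mu^n$, giving a term of the shape $\eps^{(\cdot)-\kappa_0}\mu\,\Psi_{n-1,\eps}[1]$. Optimizing the threshold $a$ to balance the two contributions fixes the admissible range $\gamma'\in(0,\tfrac12\gamma)$ and yields \eqref{shortbound}. The interference bound \eqref{shortbound2} follows by the identical argument, the only change being that $x(-s)$ enters the open ball $B(b_j(-s),\eps)$ instead of meeting its boundary, which does not alter any of the measure estimates.

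I expect the genuinely delicate point to be the second regime, namely turning the short-distance and recollision constraints into an honest reduction from $n$ to $n-1$ collisions: one must perform the change of variables that eliminates the $i$-th collision, verify that the surviving time integrals still assemble into the ordered simplex defining $\Psi_{n-1,\eps}$, and track the single remaining velocity cutoff together with the exact power of $\mu$ (consistently with the normalization $\mu_\eps\eps^2=\mu$ used throughout). Getting the bookkeeping of these $\eps$-powers right in both regimes is precisely what pins down the constraint $\gamma'<\tfrac12\gamma$.
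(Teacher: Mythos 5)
Your overall target structure is right (one regime compared against the unconstrained $v_j$-integral, yielding the $(\mu t)^{3/2}\Psi_{n,\eps}[1]$ term; one regime in which a collision is integrated out, yielding the $\mu\Psi_{n-1,\eps}[1]$ term), but your decomposition is not the paper's, and the regime you yourself flag as delicate contains a genuine gap. The paper does \emph{not} split the short-distance event according to the distance $d$ at an intermediate scale $\eps^{a}$; it splits according to the \emph{relative velocity} $|v_j-u_{i,j}|$ at the threshold $\eps^{\gamma/2}$, where $u_{i,j}$ is the velocity of the $i$-th obstacle on the recolliding segment. When $|v_j-u_{i,j}|\leq \eps^{\gamma/2}$, the variable $v_j$ itself is confined to the ball $B_{\eps^{\gamma/2}}(u_{i,j})$, so $\int_{B_{\eps^{\gamma/2}}}\frac{\ud{v_j}}{\Delta v_j\,\Delta v_{j+1}}\leq C\eps^{\gamma/2}$, and comparison with the unconstrained $v_j$-integral (bounded below by $C\eps^{2\kappa_0}(\mu t)^{-3/2}e^{-\mu\lambda(0)\Delta t_j}$, exactly as in Lemma~\ref{lem:long}) gives the first term. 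When $|v_j-u_{i,j}|\geq \eps^{\gamma/2}$, the fact that the trajectory starts the segment within distance $\eps^{\gamma}$ of the obstacle and must reach $\partial B_\eps(b_i(-s))$ before the next collision forces $|t_j-t_{j+1}|\leq \eps^{\gamma}/|v_j-u_{i,j}|\leq\eps^{\gamma/2}$; one then integrates $(t_j,v_j)$ over this short time interval and deletes the \emph{$j$-th} collision (not the $i$-th), paying $\eps^{-\kappa_0}$ to reinstate the kernel $e(v_{j-1},v_{j+1})/|v_{j-1}-v_{j+1}|$ on the support of $\vartheta(v_{j-1})$. The constraint $\gamma'<\gamma/2$ comes directly from this velocity threshold, not from optimizing a distance threshold.

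The gap in your proposal is the claim that for $d<\eps^{a}$ the conditions ``pin $b_i(-(t-t_j))$ into a ball of radius $\eps^{a}$'' and thereby confine the $i$-th collision data to a set of small measure. After the change of variables, $b_i(-(t-t_j))$ is not a free variable: the constraint reads
\begin{equation*}
\Bigl|\sum_{k=i}^{j-1} v_k(t_k-t_{k+1})-u_{i,j}\,(t_i-t_j)\Bigr|<\eps^{a},
\end{equation*}
which confines $u_{i,j}$ (hence $\xi_i$) only to a set of measure of order $(\eps^{a}/(t_i-t_j))^{2}$, and confines $t_i$ only to an interval whose length degenerates when the relevant relative speed is small. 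You give no mechanism to handle these degenerate configurations, and they are precisely the ones the paper's relative-velocity split is designed to isolate: small $|v_j-u_{i,j}|$ cannot be charged to a small set of obstacle data at all, and is instead routed into the $\Psi_{n,\eps}[1]$ term by localizing $v_j$. There is also a quantitative obstruction to your optimization: your first regime gains at most $\eps^{1-a}$ with $a>\gamma$, so it can only reach exponents up to $1-\gamma$, which for the value $\gamma=2/3$ used in the proof of Proposition~\ref{lem:expliciterror} coincides exactly with $\gamma/2$ only in the limit $a\downarrow\gamma$ — forcing your second regime to cover essentially the entire short-distance event and leaving the split with no room to balance. As written, the second term of \eqref{shortbound} is not established.
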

\begin{proof}
	The proofs of \eqref{shortbound} and \eqref{shortbound2} are almost identical, we show the argument for the case of recollisions.
	Let $n\in \N$, $1\leq i< j\leq n$. We consider 
	\begin{align}\label{eq:chirec-}
	\Psi_{n,\eps}[\bm{\mathbf{1}}_{\rm{rec},-}^{i,j}](t) &=	\int_{\Reals^3} \ud{v}\, \mathcal{M}_{\beta}(v) \mu ^n  \int_{\Omega_n}\uud{t}_1 \uud{v}_1\uud{\xi}_1\ldots \uud{t}_n\uud{v}_n\uud{\xi}_n \left( \prod_{k=1}^n \frac{\M_{\beta}(\xi_k)}{|v_k-v_{k-1}|}\right)\nonumber \\ &
	\quad e^{- \mu \sum_{k=0}^{n-1}\lambda(v_{k})(t_k-t_{k+1}) } \,e^{- \mu \lambda(v_{n})t_n } \bm{\mathbf{1}}_{\rm{rec},-}^{i,j}\nonumber \\&
	=\int_{\Reals^3} \ud{v}\, \mathcal{M}_{\beta}(v) \mu ^n  \int_{0}^{t}\uud{t}_1 \int_{\R^3}\uud{v}_1\dots  \int_{0}^{t_{i-1}}\uud{t}_{i} \int_{\R^3}\uud{v}_i\int_{E(v_{i-1},v_{i})}d\xi_{i}\frac{\M_{\beta}(\xi_i)}{|v_i-v_{i-1}|}\dots\nonumber \\&
	\quad \int_{0}^{t_{n-1}}\uud{t}_n \int_{\R^3}\uud{v}_n  \left( \prod_{k\neq i } \frac{e(v_{k-1},v_{k})}{|v_k-v_{k-1}|}\right) e^{- \mu \sum_{k=0}^{n-1}\lambda(v_{k})(t_k-t_{k+1}) } \,e^{- \mu \lambda(v_{n})t_n } \bm{\mathbf{1}}_{\rm{rec},-}^{i,j}.
	\end{align}
	where using that $\bm{\mathbf{1}}_{\rm{rec}}^{i,j}$ is independent on $\xi_k$ for any $k\neq i$ (cf.~\eqref{eq:chirecij}), and using \eqref{eq:IntEst} 	we performed all the integrals in $\xi_k$ but $\xi_i$.

	 For simplicity write $\Delta v_j= |v_j-v_{j-1}|$, $\Delta t_j=t_j-t_{j+1}$. Furthermore, we fix $n\in \N$ and all $v_k,u_k$, $k\neq j$ and derive an estimate for the $j$-th collision integral that is uniform in these parameters. 
	 We now distinguish two cases : when the relative velocity $|v_j-u_{i,j}|$ is small and when is big. Then we can rewrite \eqref{eq:chirec-} as 
	 \begin{equation}\label{eq:splitchirec-}
	 \Psi_{n,\eps}[\bm{\mathbf{1}}_{\rm{rec},-}^{i,j}](t)=\Psi_{n,\eps}[\bm{\mathbf{1}}^{i,j}_{\rm{rec},-}\mathbf{1}_{\{|v_j-u_{i,j}|\leq \eps^{\frac12 \gamma}\}}](t)+\Psi_{n,\eps}[\bm{\mathbf{1}}^{i,j}_{\rm{rec},-}\mathbf{1}_{\{|v_j-u_{i,j}|\geq \eps^{\frac12 \gamma}\}}](t).
	 \end{equation}
	We first consider recollisions with relative velocity $|v_j-u_{i,j}|\leq \eps^{\frac12 \gamma}$ small. 
	Here we denote by $u_{i,j}$ the velocity of the obstacle $i$ evaluated at the colision time $t_j$ (backward). Then the integral in $v_j$ in $\Psi_{n,\eps}[\bm{\mathbf{1}}^{i,j}_{\rm{rec},-}\mathbf{1}_{\{|v_j-u_{i,j}|\leq \eps^{\frac12 \gamma}\}}](t)$ defined in \eqref{eq:splitchirec-} can be estimated by:
	\begin{align*}
		&\int_{\Reals^3}  \frac{e(v_{j-1},v_{j})e(v_{j},v_{j+1})}{\Delta v_j \Delta v_{j+1}} e^{- \mu \lambda(v_j)\Delta t_j} 	 \bm{\mathbf{1}}^{i,j}_{\rm{rec},-} \mathbf{1}_{\{|v_j-   u_{i,j}|\leq \eps^{\frac12 \gamma}\}}\ud{v_j} \\&
		\leq C e^{- \mu \lambda(0)\Delta t_j} \int_{B_{\eps^{\frac{\gamma}{2} }}(u_{i,j})}  \frac{1}{\Delta v_j \Delta v_{j+1}} \ud{v_j} 
		\leq C	e^{- \mu \lambda(0)\Delta t_j} \eps^{\frac12 \gamma},
	\end{align*}
	where we used the monotonicity of $\lambda$, the boundedness of $e(\cdot,\cdot)$ and the restriction on $v_j$ given by the characteristic function.
	
	Comparing the estimate  above to the value of the integral without $\bm{\mathbf{1}}^{i,j}_{\rm{rec},-} $, arguing as in the proof of Lemma \ref{lem:long}, we have that for $\gamma' \in (0,\frac12 \gamma)$ :
	\begin{align*}
	&\int_{\Reals^3}  \frac{e(v_{j-1},v_{j})e(v_{j},v_{j+1})}{\Delta v_j \Delta v_{j+1}} e^{- \mu \lambda(v_j)\Delta t_j} 	 \bm{\mathbf{1}}^{i,j}_{\rm{rec},-} \mathbf{1}_{\{|v_j-u_i|\leq \eps^{\frac12 \gamma}\}}\ud{v_j}\\ &\leq C	 \eps^{\gamma'-2\kappa_0} (\mu t)^\frac32 \int d v_j \frac{e(v_{j-1},v_{j})}{\Delta v_{j} \Delta v_{j+1}} e^{- \mu \lambda(v_j)\Delta t_j} 	.
	\end{align*}
	Therefore we have:
	\begin{align}\label{eq:splitchi1}
		\Psi_{n,\eps}[ \bm{\mathbf{1}}^{i,j}_{\rm{rec},-}\mathbf{1}_{\{|v_j-u_{i,j}|\leq \eps^\gamma\}}](t) \leq C \eps^{\gamma'-2\kappa_0}(\mu t)^\frac32 \Psi_{n,\eps}[1],
	\end{align}
	where $\kappa_0>0.$
	
	Now consider the case of large relative velocity, i.e. 
	$|v_j-u_{i,j}|\geq \eps^{\frac12 \gamma}$. To this end, since on the support of $\bm{\mathbf{1}}^{i,j}_{\rm{rec},-}$ it holds that $\displaystyle |t_{j}-t_{j+1}|\leq \frac{\eps^{\gamma}}{|v_j-u_{i,j}|}$, we can  use:
	\begin{align}\label{eq:estchisd}
	\bm{\mathbf{1}}^{i,j}_{\rm{rec},-}\mathbf{1}_{\{|v_i-u_{i,j}|\geq \eps^{\frac12 \gamma}\}} \leq \mathbf{1}_{\{|t_j-t_{j+1}|\leq \eps^{\frac12 \gamma}\}}.
	\end{align}
	Now we freeze all integrals in $\Psi_{n,\eps}[\bm{\mathbf{1}}^{i,j}_{\rm{rec},-}\mathbf{1}_{\{|v_j-u_{i,j}|\geq \eps^{\frac12 \gamma}\}}](t)$ defined in \eqref{eq:splitchirec-} except the ones in $t_j$ and $v_j$: 
	\begin{align}\label{eq:estinteg}
	&\int_{t_{j+1}}^{t_{j-1}}\uud{t}_{j}\int \uud{v_j} \left(\prod_{k=j-1,j}\frac{e(v_{k},v_{k+1})}{\Delta v_{k+1}} e^{- \mu \lambda(v_k)\Delta t_k} \vartheta(v_{k})\right) \bm{\mathbf{1}}^{i,j}_{\rm{rec},-}	\mathbf{1}_{\{|v_j-u_{i,j}|\geq \eps^\frac12\}} \nonumber \\
	&\leq  	\frac{1}{|v_{j-1}-v_{j+1}|}\int_{t_{j+1}}^{t_{j+1}+\eps^{\frac12 \gamma}}\uud{t}_{j}\int \left(\frac{1}{\Delta v_j}+\frac{1}{\Delta v_{j-1}}\right) \prod_{k=j-1,j} e(v_{k+1},v_{k}) e^{- \mu \lambda(v_k)\Delta t_k}  \vartheta(v_k),
	\end{align}
	where we used \eqref{eq:estchisd} to restrict the time integral and we used  triangular inequality to control the relative velocities.
	For $|v|\leq \log|\eps|$ we have $\lambda(v)\leq C |\log \eps|$ due to the linear
	growth of $\lambda$. 
	Then, we obtain:
	\begin{align} \label{eq:timeest}
	 & \prod_{k=j-1,j}\vartheta(v_k)e^{- \mu \lambda(v_{k})(t_k-t_{k+1})} \mathbf{1}_{\{|t_{j}-t_{j+1}|\leq \eps^{\frac12 \gamma}\}}\nonumber \\&
	  \leq \vartheta(v_{j-1})\vartheta(v_{j})e^{- \mu \lambda(v_{j-1})(t_{j-1}-t_{j})} \mathbf{1}_{\{|t_{j}-t_{j+1}|\leq \eps^{\frac12 \gamma}\}}\nonumber\\&
	  \leq \vartheta(v_{j-1})\vartheta(v_{j})e^{- \mu \lambda(v_{j-1})(t_{j-1}-t_{j+1})} e^{ \mu \lambda(v_{j-1})(t_{j}-t_{j+1})}\mathbf{1}_{\{|t_{j}-t_{j+1}|  \leq \eps^{\frac12 \gamma}\}} \nonumber\\& 
	  \leq \vartheta(v_{j})e^{- \mu \lambda(v_{j-1})(t_{j-1}-t_{j+1})} e^{\mu C |\log \eps| \eps^{\frac12 \gamma}} 
	 \leq C e^{-\mu \lambda(v_{j-1})(t_{j-1}-t_{j+1})}\vartheta(v_{j}).
	\end{align}
	We further observe that on the support of $\vartheta(v_{j-1})$ we have $e(v_{j-1},v_{j+1})^{-1}\leq C \eps^{-\kappa_0}$. Combining this with \eqref{eq:timeest}, from \eqref{eq:estinteg} we obtain that for $\gamma'\in (0,\frac12 \gamma)$:
	\begin{align*}
	&\int_{t_{j+1}}^{t_{j-1}}\uud{t}_{j}\int \uud{v}_j \left(\prod_{k=j-1,j}\frac{e(v_{k},v_{k+1})}{|v_{k+1}-v_k|} e^{- \mu \lambda(v_k)\Delta t_k} \vartheta(v_{k-1})\right) \bm{\mathbf{1}}^{i,j}_{\rm{rec},-}	\mathbf{1}_{\{|v_j-u_i|\geq \eps^\frac12\}}\\
	\leq &\, C e^{-\lambda(v_{j-1})(t_{j-1}-t_{j+1})} \frac {1}{|v_{j-1}-v_{j+1}|} \int_{t_{j+1}}^{t_{j+1}+\eps^{\frac12 \gamma}} \uud{t}_{j}\int \uud{v}_j \vartheta(v_{j})\left(\frac{1}{\Delta v_j}+\frac{1}{\Delta v_{j-1}}\right) \\
		\leq &\, C	\eps^{\gamma'}\frac{e^{-\lambda(v_{j-1})(t_{j-1}-t_{j+1})}}{|v_{j-1}-v_{j+1}|}\leq \, C	\eps^{\gamma'-\kappa_0}\frac{e^{-\lambda(v_{j-1})(t_{j-1}-t_{j+1})}e(v_{j+1},v_{j-1})}{|v_{j-1}-v_{j+1}|}.
	\end{align*}
We now consider ${\Psi}_{n,\eps}[\bm{\mathbf{1}}^{i,j}_{\rm{rec},-}\mathbf{1}_{\{|v_j-u_i|\geq \eps^\gamma\}}](t)$ given as in \eqref{eq:splitchirec-} and apply the estimate above. 
Then, relabeling the integration variables $(t_{j+1},v_{j+1})\to (t_j,v_j)$, $(t_{j+2},v_{j+2})\to (t_{j+1},v_{j+1})$, for $\gamma'\in (0,\frac12 \gamma)$ we get:
\begin{align*}
{\Psi}_{n,\eps}[\bm{\mathbf{1}}^{i,j}_{\rm{rec},-}\mathbf{1}_{\{|v_j-u_i|\leq \eps^\gamma\}}](t) 
 \leq &\, C \eps^{\gamma'-\kappa_0}\mu^n \int_{\Reals^3} \ud{v}\, \mathcal{M}_{\beta}(v)   \int_{\Omega_{n-1}}\uud{t}_1\uud{v}_1\uud{\xi}_1\ldots\uud{t}_{n-1} \uud{v}_{n-1}\uud{\xi}_{n-1}\\ &
  \left( \prod_{k=1}^{n-1} \frac{\M_{\beta}(\xi_k)}{|v_k-v_{k-1}|}\right) e^{- \mu \sum_{k=0}^{n-2}\lambda(v_{k})(t_k-t_{k+1}) } \,e^{- \mu \lambda(v_{n-1})t_{n-1} }
	\\ =& \, C \eps^{\gamma'-\kappa_0}  \mu {\Psi}_{n-1,\eps}[1](t).
\end{align*} 
	Hence, we have shown
	\begin{align}\label{eq:estchirec-short}
		{\Psi}_{n,\eps}[\bm{\mathbf{1}}^{i,j}_{\rm{rec},-}] &\leq {\Psi}_{n,\eps}[\bm{\mathbf{1}}^{i,j}_{\rm{rec},-}\mathbf{1}_{\{|v_j-u_i|\leq \eps^\gamma\}}] + {\Psi}_{n,\eps}[\bm{\mathbf{1}}^{i,j}_{\rm{rec},-}\mathbf{1}_{\{|v_j-u_i|\geq \eps^\gamma\}}] \nonumber \\
			&\leq C \eps^{\gamma'-\kappa_0}\left(\eps^{-\kappa_0}(\mu t)^\frac32{\Psi}_{n,\eps}[1](t) + \mu{\Psi}_{n-1,\eps}[1](t)  \right),
	\end{align}
	as claimed. We observe that it is straightforward to prove that for the estimate for the interference events we get the same estimate as in \eqref{eq:estchirec-short} which gives \eqref{shortbound2}.
\end{proof}
\medskip

\subsection{Proof of Proposition~\ref{lem:expliciterror}} 
We conclude this section with the proof of the estimate of pathologic trajectories.

	We choose $\gamma=2/3$. We recall that $\bar{\psi}_\eps$ given as in \eqref{eq:barpsi} satisfies \eqref{eq:splpsibar}. Then, using the Lemmas \ref{lem:long}, \ref{lem:short}  we have for any $r\in \left(0,\frac 1 3\right)$
		\begin{align*}
		\bar{\psi}_\eps(t) 	&\leq \sum_{n=0}^\infty \sum_{1\leq i<j \leq n}{\Psi}_{n,\eps}[\bm{\mathbf{1}}^{i,j}_{\rm{rec},+} +\bm{\mathbf{1}}^{i,j}_{\rm{int},+}](t) + \sum_{n=0}^\infty \sum_{1\leq i<j \leq n}{\Psi}_{n,\eps}[\bm{\mathbf{1}}^{i,j}_{\rm{rec},-}+\bm{\mathbf{1}}^{i,j}_{\rm{int},-}](t)   \\
							&\leq C  \eps^{r-2\kappa_0} \sum_{n=0}^\infty n^2  \left( \mu \ep^{\kappa_0}{\Psi}_{n-1,\eps}[1](t)+(\mu t)^\frac32 {\Psi}_{n,\eps}[1](t)\right)  .
	\end{align*}
	Now we observe that $\sum_{n=0}^\infty n^2 {\Psi}_{n,\eps}[1](t) = \bm{n}_2(t)$, so Lemma~\ref{lem:momBoltz} gives: 
	\begin{align} \label{eq:psibar}	
		\bar{\psi}_\eps(t) \leq C \eps^{r-2\kappa_0} [\eps^{\kappa_0}\mu(\mu t+(\mu t)^2)+(\mu t)^\frac52+(\mu t)^\frac72 ].	
		\end{align}
	The function $\psi(t)$ we can estimate by:
	\begin{align*}
		\psi(t) \leq \psi^\infty(t) + \bar{\psi}(t),
	\end{align*}
	so combining \eqref{eq:psibar} and Lemma~\ref{lem:fast} the claim of the Theorem follows.

\bigskip

\section{Outlook on the long-time behaviour of the Boltzmann-Rayleigh model} \label{s:hydro}

As we discussed in the Introduction, for annihilation parameter $\alpha=0$ our model reduces to the classical ideal Rayleigh Gas. Then, the kinetic description is given by the Boltzmann equation 
\begin{equation*}\label{eq:noalinearized}
\left\{\begin{array}{ll}\vspace{2mm}
	\partial_t f + v \cdot\nabla_x f = \mu Q(\mathcal{M}_{\beta},f), &\\
	f(x,v,0)=f_0(x,v).&
\end{array}\right.
\end{equation*}	
Moreover, for this model it is possible to look at a longer time scale, in which the diffusive behaviour of the classical ideal Rayleigh gas is described by the heat equation for the mass density. 
 
We can recover the same behaviour, if we consider the following rescaling (in the same spirit of the one proposed for the Lorentz model in \cite{BNP, BNPP}): 
\begin{align} \label{hydroscale}
 \mu_\eps = \mu \eps^{-2} M_\eps. 
\end{align}
Here $M_\eps \rightarrow \infty$ for $\eps\rightarrow 0$, more precisely we assume that $M_\eps$ is such that $\mu_\eps \eps^3\to 0$ and $\mu_\eps \eps^2\to \infty$ 
and satisfies $\lim_{\ep\to 0}{\psi}_\eps(M_{\ep}\tau) =0$ with ${\psi}_\eps$ the error function defined as in \eqref{eq:psidef}. Notice that Proposition \ref{lem:expliciterror} implies that $M_{\ep}$ can diverge algebraically in $\ep$. In the scaling limit \eqref{hydroscale} 
the gas is slightly more dense than in the standard Boltzmann-Grad scaling \eqref{scaling:B}. 

Through the scaling limit \eqref{hydroscale} we obtain at the kinetic level, a description given by the following Boltzmann equation: 

\begin{equation}\label{eq:hydroboltz}
\left\{\begin{array}{ll}\vspace{2mm}
	\partial_t f + v \cdot \nabla_x f = M_\eps \mu Q(\mathcal{M}_{\beta},f), &\\
	f(x,v,0)=f_0(x,v).&
\end{array}\right.
\end{equation}	
This allows to obtain the hydrodynamical description on a longer time scale, i.e. through a further rescaling of time $t\to M_\eps \tau$, when the annihilation parameter $\alpha$ vanishes. 

\medskip

\begin{proposition}[Hydrodynamics without annihilation] \label{th:MAIN2}
	Let be $f_{\eps}$ defined as in \eqref{def:fep}. We define $\tilde{f}_\eps$ by:
	\begin{align}\label{eq:fepidro}
	\tilde{f}_\eps (\tau,x,v) = {f}_\eps(M_\eps \tau,x,v).
	\end{align} 
	In the scaling limit \eqref{hydroscale} we have for any $R>0$
	\begin{equation*}
	\lim_{\ep\to 0}\tilde{f}_\eps(\tau,\cdot,\cdot)=\varrho(\tau,\cdot) \mathcal{M}_{\beta}(v) \quad \text{in}\; L^{\infty}([0,T];L^{1}(B_R(0) \times \R^3))
	\end{equation*}
	where $\varrho(t,x)$ satisfies the following heat equation:
	\begin{equation}\label{heateq}
	\left\{\begin{array}{ll}\vspace{2mm}
	\partial_t \varrho = D \Delta \varrho,&\\
	\varrho(x,0)=\varrho_0(x).&
	\end{array}\right.
	\end{equation}	
	Here $D$ is the diffusion coefficient given by the Green-Kubo formula:
	$$D=C \int_{\R^3}\ud{v}\, \M_{\beta}(v) v L^{-1}v,$$
	with $C>0$ a numerical constant and ${L}^{-1}$ the pseudo-inverse on the subspace $(\mathrm{Ker} L )^{\bot}$ of the linear Boltzmann operator 
	\begin{equation}\label{eq:LBoltzop}
	   L g (v)= \int_{\mathbb R^3}\!  dv_1\,\mathcal{M}_{\beta}(v_1)\int_{\mathbb{S}^{2}}\!d\hat n \,\big[\hat n\cdot(v-v_1)\big]_+\big[g(v')-g(v)  \big].
	\end{equation}
	\end{proposition}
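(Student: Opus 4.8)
\emph{The plan} is to decouple the statement into two steps: a \emph{kinetic approximation} reducing the particle system to the rescaled linear Boltzmann equation \eqref{eq:hydroboltz}, followed by a \emph{diffusive limit} of that equation. For the first step, let $f^\eps$ be the solution of \eqref{eq:hydroboltz} with datum $f_0$ and set $g_\eps(\tau,\cdot)=f^\eps(M_\eps\tau,\cdot)$. Since $\alpha=0$, Theorem~\ref{th:MAIN1} (equivalently Proposition~\ref{lem:expliciterror}) applies with intensity $\mu_\eps=\mu\eps^{-2}M_\eps$ and yields $\|f_\eps(t)-f^\eps(t)\|_{L^1(\X)}\leq C\,\psi_\eps(t)$, with $\psi_\eps$ as in \eqref{eq:psidef} but built with $\mu M_\eps$ in place of $\mu$. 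Evaluating at $t=M_\eps\tau$ and invoking the defining hypothesis $\lim_{\eps\to0}\psi_\eps(M_\eps\tau)=0$ on the growth of $M_\eps$, I obtain $\|\tilde f_\eps(\tau)-g_\eps(\tau)\|_{L^1}\to0$ uniformly on $[0,T]$. It therefore suffices to prove $g_\eps\to\varrho\,\M_\beta$.

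For the second step, write $\delta_\eps=M_\eps^{-1}\to0$. A direct computation from \eqref{eq:hydroboltz} and \eqref{eq:fepidro} shows that $g_\eps$ solves the diffusively scaled equation, and in the relative density $h_\eps=g_\eps/\M_\beta$, using the energy-conservation identity $Q(\M_\beta,h\M_\beta)=\M_\beta\,Lh$ with $L$ the self-adjoint operator \eqref{eq:LBoltzop}, it reads
\[
\partial_\tau h_\eps+\frac{1}{\delta_\eps}\,v\cdot\nabla_x h_\eps=\frac{\mu}{\delta_\eps^2}\,L h_\eps .
\]
On $L^2(\M_\beta\,\de v)$ the operator $L$ is negative semidefinite with $\mathrm{Ker}\,L=\Span\{1\}$, and, because $\lambda$ is bounded below with linear growth (cf. the Remark following Theorem~\ref{th:MAIN1}), it has a spectral gap: $-\langle Lh,h\rangle\geq c\,\|(I-P)h\|^2$, where $P h=\int_{\R^3}h\,\M_\beta\,\de v$ projects onto $\mathrm{Ker}\,L$.

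I would then run the classical Chapman--Enskog/energy argument. Testing the equation against $h_\eps\M_\beta$, the transport term is skew-adjoint and integrates to zero, so the dissipation estimate gives the uniform bound $\|(I-P)h_\eps\|_{L^2_\tau L^2_{x,v}}=O(\delta_\eps)$; hence the limit is a local Maxwellian. Writing $\varrho_\eps=P h_\eps$, the continuity equation $\partial_\tau\varrho_\eps+\delta_\eps^{-1}\nabla_x\cdot\!\int v\,h_\eps\,\M_\beta\,\de v=0$, combined with the microscopic equation for $(I-P)h_\eps$, closes the flux: to leading order $(I-P)h_\eps\approx \delta_\eps\,\mu^{-1}L^{-1}(v\cdot\nabla_x\varrho_\eps)$, which is well defined since $v\in(\mathrm{Ker}\,L)^\bot$, so $\delta_\eps^{-1}\!\int v\,h_\eps\,\M_\beta\,\de v\to -D\,\nabla_x\varrho$ with $D=C\int_{\R^3}\M_\beta(v)\,v\,L^{-1}v\,\de v$ (a scalar, by rotational invariance) as in \eqref{heateq}. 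Extracting compactness for $\varrho_\eps$ via these bounds and velocity averaging and passing to the limit yields $\partial_\tau\varrho=D\Delta\varrho$. Finally the $L^2(\M_\beta^{-1})$ convergence, together with the Gaussian weight and localisation to $B_R(0)$, upgrades to the stated convergence in $L^\infty([0,T];L^1(B_R(0)\times\R^3))$.

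\emph{The main obstacle} is the diffusive limit itself, where the flux carries the singular prefactor $\delta_\eps^{-1}$: passing to the limit requires the uniform dissipation bound and the Chapman--Enskog closure to be justified simultaneously rather than only formally. Two subsidiary issues are the lack of spatial compactness on the unbounded domain $\R^3$, which forces the localisation to $B_R(0)$ present in the statement, and the initial layer due to the fact that $f_0$ need not be a local Maxwellian; the latter is controlled by the spectral gap on the fast time scale $\delta_\eps^2$, and is negligible for $\tau>0$.
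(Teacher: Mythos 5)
Your proposal is correct in outline, and its first step is exactly what the paper intends: the hypothesis $\lim_{\eps\to 0}\psi_\eps(M_\eps\tau)=0$ built into the scaling \eqref{hydroscale} is there precisely so that Proposition~\ref{lem:expliciterror}, applied with $\mu$ replaced by $\mu M_\eps$ and evaluated at $t=M_\eps\tau$, reduces the particle system to the rescaled Boltzmann equation \eqref{eq:hydroboltz} (this is why the paper keeps the explicit $\mu$-dependence in \eqref{eq:L1rec}). Where you diverge is the second step: the paper does not write a proof at all but declares it ``standard'' and invokes the \emph{Hilbert expansion} technique with references to \cite{BGS,BNP,BNPP,CIP,EP}, whereas you run the energy/moments method --- entropy dissipation with a spectral gap, Chapman--Enskog closure of the flux, velocity averaging for compactness. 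Both routes are classical and both yield the heat equation with the Green--Kubo coefficient; the trade-off is that the Hilbert expansion, by constructing $f=\varrho\M_\beta+\delta_\eps f^{(1)}+\delta_\eps^2 f^{(2)}+R_\eps$ and estimating the remainder, delivers directly the strong $L^\infty([0,T];L^1)$ convergence (with a rate) asserted in the statement, while your compactness argument naturally produces convergence in weaker time topologies (e.g.\ $L^2_\tau$ or $C([\tau_0,T])$ for $\tau_0>0$) and needs a further equicontinuity or relative-entropy argument to upgrade to uniform-in-time control --- a step you gesture at but do not supply. Two smaller points you should flag if you flesh this out: Assumption~\ref{ass:f0} gives $\int_x\|f_0(x,\cdot)\M_\beta^{-1}\|_{L^\infty}\ud{x}<\infty$, which does not immediately place $h_0=f_0/\M_\beta$ in the $L^2(\M_\beta\ud{x}\ud{v})$ framework your energy estimate lives in; and the initial layer you correctly identify (for ill-prepared $f_0\neq\varrho_0\M_\beta$) is genuinely incompatible with convergence in $L^\infty([0,T];L^1)$ \emph{including} $\tau=0$, an issue the paper's statement inherits as well.
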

\medskip

The proof of Proposition \ref{th:MAIN2} is standard and relies on the Hilbert expansion technique. We refer for instance to \cite{BGS, BNP, BNPP, CIP, EP}.
\medskip

\medskip

It is interesting to investigate the diffusive limit of the ideal Rayleigh gas with annihilation, i.e. when the annihilation parameter $\alpha>0$ does not vanish, since new effects and a different hydrodynamic equation can be expected. This will be discussed in a forthcoming paper. The following scenario is the one we expect. 

At the level of the particle system, we consider the rescaling 
\begin{align} \label{hydroscaleanni}
M_\eps^s \alpha_\eps = \alpha \;\text{with}\; s>0,\quad
 \mu_\eps = \mu \eps^{-2} M_\eps, 
\end{align}
with $M_\eps \rightarrow \infty$ for $\eps\rightarrow 0$ as required above. 
The gas is slightly more dense than in the standard Boltzmann-Grad scaling \eqref{scaling:B} as in the case of the scaling \eqref{hydroscale}, but here we further rescale the annihilation parameter $\alpha$. Depending on the exponent $s$ we can expect different asymptotic behaviours for $\tilde{f}_\eps$. We further emphasize, even if we restrict ourselves in the present paper to the case of hard-sphere interactions, that it is interesting to understand how the choice of particle interaction affects the limiting equation. Recall that, for classical diffusive behaviour described in Theorem \ref{th:MAIN2}, the choice of the interaction potential affects only the value of the diffusion coefficient $D$ via the Green-Kubo formula. Recall that we denote by $\LL$ the linear Boltzmann operator for hard-sphere interaction and now we introduce $\LL_{0}$ as the linear Boltzmann operator for Maxwellian interaction, i.e.
$$\LL_{0}f(v)=\lambda_{0}\int_{\R^{3}}\ud{v}_{1}\int_{\mathbb{S}^{2}}\frac{[(v-v_{1})\cdot \hat{n}]_{+}}{|v-v_{1}|}\left(\M_{\beta}(v')f(v')-\M_{\beta}(v_{1})f(v)\right)\ud{\hat{n}}.$$
For this interaction, the collision frequency is \emph{constant}, i.e.~$\lambda(v)=\lambda_0$.

At the hydrodynamic level, we expect the following scenario:
\begin{itemize}
\item if $s=2$ in the scaling limit \eqref{hydroscaleanni} and with Maxwellian interactions, which lead to a kinetic description given by the collision operator $\LL_{0}$, we expect that $\tilde{f}_\eps$ given as in \eqref{eq:fepidro} satisfies
	\begin{equation*}
	\lim_{\ep\to 0}\tilde{f}_\eps(\tau,\cdot,\cdot)=\varrho(\tau,\cdot) \mathcal{M}_{\beta}(v),
	\end{equation*}
where the function $\varrho(\tau,x)$ solves the equation
	\begin{align} \label{heateqmassloss}
	\partial_t \varrho+ r \varrho  = D \Delta \varrho, \quad D>0, r>0\ .
	\end{align}
\item if $s=2$ in the scaling limit \eqref{hydroscaleanni} and in the case of hard-sphere interactions, as the ones considered in this paper, the hydrodynamic behaviour is more difficult to describe since we still expect to recover a diffusion equation similar to \eqref{heateqmassloss} but now the dissipative term on the left hand side of \eqref{heateqmassloss} could have a more complicated form. Indeed, we observe that in \eqref{heateqmassloss} the constant term is expected since $\LL_{0}$ has a constant collision frequency. Hence, the case of hard-sphere interaction deserves a more detailed analysis. 

\item if $s>2$ in the scaling limit \eqref{hydroscaleanni} we expect that
	\begin{equation*}
	\lim_{\ep\to 0}\tilde{f}_\eps(\tau,\cdot,\cdot)=\varrho(\tau,\cdot) \mathcal{M}_{\beta}(v)
	\end{equation*}
where the function $\varrho(t,x)$ satisfies \eqref{heateq}.
\item if $0\leq s<2$  the functions  $M_{\eps}^{2-s}\tilde{f}_\eps(\tau,\cdot,\cdot)$ are of order one, but convergence and possible asymptotics for $\ep\to0$ are open problems
that deserve further investigation.
\end{itemize}

\bigskip

\textbf{Acknowledgements. }   
A.N. and R.W. acknowledge support through the
CRC 1060 \textit{The mathematics of emergent effects} of the University of
Bonn that is funded through the German Science Foundation (DFG).\bigskip

\end{document}